\date{\today}
\newcommand{\E}{\ensuremath{\mathbb E}}
    \def\a{{\mathbf{\alpha}}}    
  \def\beq{\begin{eqnarray}} \def\eeq{\end{eqnarray}} \def\ben{\begin{enumerate}}
\def\een{\end{enumerate}}
 \def\bit{\begin{itemize}}
\def\eit{\end{itemize}}
 \def\beqs{\begin{eqnarray*}} \def\eeqs{\end{eqnarray*}} \def\bel{\begin{lemma}} \def\eel{\end{lemma}}
   \newcommand{\p}{\mathbb{P}}
  \newcommand{\HH}{\mathcal H}  \newcommand{\one}{\mathrm{1}}
 \newcommand{\MM}{\mathcal M} \newcommand{\la}{\lambda}  
 \def\a{{\mathbf{\alpha}}}  \def\eps{{\epsilon}}  \def\ie{i.\,e.\,} \def\g{G}
\renewcommand{\a}{\alpha}
\newcommand{\dist}{dist}
\newcommand{\beqn}{\begin{equation}}
\newcommand{\eeqn}{\end{equation}}
\newcommand{\FF}{\ensuremath{\mathcal{F}}}
\newcommand{\vol}[2][]{\ensuremath{\textup{vol}_{#1}\inp{#2}}}
\newcommand{\surfarea}[2][]{\ensuremath{\textup{surface-area}_{#1}\inp{#2}}}
\newcommand{\sidelen}[1]{\ensuremath{\textup{sidelength}\inp{#1}}}
\newcommand{\bdry}[1]{\ensuremath{\partial{}#1}}
\newcommand{\sbad}{inside-heavy}
\newcommand{\asbad}{an inside-heavy}
\newcommand{\stub}[2][]{\ensuremath{{\textup{stub}}_{#1}\inp{#2}}}
\newcommand{\nlab}[1][]{\ensuremath{\textup{label}_{#1}}}
  \newcommand{\grantacknowledgement}{We thank the anonymous referees of this paper
    for their helpful comments. HN and PS acknowledge support from the
    Department of Atomic Energy, Government of India, under project no. RTI4001,
    the Ramanujan Fellowship of SERB, and the Infosys-Chandrasekharan virtual
    center for Random Geometry.  HN acknowledges support from a Swarna Jayanti
    fellowship.  PS acknowledges support from Adobe Systems Incorporated via a
    gift to TIFR. AR acknowledges support from an Akamai Presidential Fellowship.  The contents of this paper do not necessarily reflect the
    views of the funding agencies listed above.}
\newcommand{\Kb}{\R^n\setminus{(K^\circ)}}
\DeclareMathOperator{\diam}{diam}
\renewcommand{\dist}{\ensuremath\operatorname{dist}}
\newcommand{\len}{\ensuremath\operatorname{length}}
\newcommand{\cntr}{\operatorname{center}}
\newcommand{\CHR}{\operatorname{CHR}}
\newcommand{\bS}{\mathbb{S}}
\newcommand{\cube}{\mathcal{Q}}
\newcommand{\Depth}{3}
\newcommand{\Height}{3}
\newcommand{\Width}{3}
\title{Sampling from convex sets with a cold start\\ using multiscale
  decompositions} \author{Hariharan Narayanan\thanks{Tata Institute of
    Fundamental Research, Mumbai. Email:
    \texttt{hariharan.narayanan@tifr.res.in}.} \and Amit Rajaraman\thanks{MIT. Email: \texttt{amit\_r@mit.edu}. Much of this work was done while this author was a student at the Indian Institute of Technology Bombay.} \and Piyush Srivastava\thanks{Tata Institute of Fundamental Research,
    Mumbai. Email: \texttt{piyush.srivastava@tifr.res.in}.}}
\date{} 
\begin{document}
\maketitle

\begin{abstract}
  A standard approach for sampling approximately uniformly from a convex body
  $K \subseteq \R^n$ is to run a random walk within $K$.  The requirement is
  that starting from a suitable initial distribution, the random walk should
  ``mix rapidly'', i.e., after a number of steps that is polynomial in $n$ and
  the aspect ratio $R/r$ (here, $K$ is assumed to contain a ball of radius $r$
  and to be contained within a ball of radius $R$), the distribution of the
  random walk should come close to the uniform distribution $\pi_K$ on
  $K$. Different random walks differ in aspects such as the ease of
  implementation of each step, or suitability for a specific class of convex
  bodies.  Therefore, the rapid mixing of a wide variety of random walks on
  convex bodies has been studied.

  Many proofs of rapid mixing of such random walks however require that the
  initial distribution of the random walk is not too different from the target
  distribution $\pi_K$.  In particular, they require that the probability
  density function of the initial distribution with respect to the uniform
  distribution $\pi_K$ on $K$ must be bounded above by $\mathrm{poly}(n)$: this
  is called a \emph{warm start}. Achieving such a warm start often requires a
  non-trivial pre-processing step before the random walk can be started.  This
  motivates the problem of proving rapid mixing from ``cold starts'', i.e.,
  when the density of the initial distribution with respect to $\pi_K$ can be as
  high as $\exp(\mathrm{poly}(n))$.  In contrast to warm starts, a cold start is
  usually trivial to achieve. However, rapid mixing from a cold start may not
  hold for every random walk, e.g., the well-known ``ball walk'' does not
  have rapid mixing from an arbitrary cold start.  On the other hand, for the
  ``hit-and-run'' random walk, Lovász and Vempala proved rapid mixing from a
  cold start.  For the related \emph{coordinate} hit-and-run (CHR) random walk,
  which has been found to be promising in computational experiments, a rapid
  mixing result starting from a warm start was proven only recently, while the
  question of whether CHR mixes rapidly from a cold start remained open.

  In this paper, we construct a family of Markov chains inspired by classical
  multiscale decompositions of subsets of $\mathbb{R}^n$ into countably many
  axis-aligned cubes. We show that even with a cold start, the mixing times of
  these chains are bounded by a polynomial in $n$ and the aspect ratio of the
  body.  Our main technical ingredient is an isoperimetric inequality for $K$
  for a metric that magnifies distances between points that are close to the
  boundary of $K$.  As a byproduct of the analysis of this new family of chains,
  we show that the coordinate hit-and-run (CHR) random walk also mixes rapidly
  from a cold start, and also from any point that is not too close to the
  boundary of the body.
\end{abstract}
\thispagestyle{empty}
\newpage
\setcounter{tocdepth}{2}
\tableofcontents
\thispagestyle{empty}
\newpage
\setcounter{page}{1}
\section{Introduction}

The problem of generating a point distributed (approximately) uniformly over a
convex set $K \subseteq \R^n$ is an important algorithmic primitive.  It is
usual to assume that the body $K$ is presented by means of a ``well guaranteed
membership oracle'', i.e., a membership oracle for $K$, along with values
$R > r > 0$ such that the body is contained in the radius $R$ Euclidean ball and
also contains the radius $r$ Euclidean ball.  The ratio $R/r$ is then referred
to as the \emph{aspect ratio} of the body.

The first provably polynomial time algorithm for this problem was given by Dyer,
Frieze and Kannan~\cite{DFK91}: their algorithm used a random walk on a
uniformly-spaced lattice of points in a suitable ``smoothed'' version of the
original body $K$.  More refined analyses of such lattice walks were given in
subsequent works \cite{AK91,DF91,LS90}: we refer to \cite{LS93} for a more
complete discussion of the history.  Soon after, Lovász~\cite{L90-ICM} and
Lovász and Simonovits~\cite{LS93} considered a more geometric random walk not
supported on a discrete lattice: the so-called \emph{ball walk}.  Here, one
fixes a radius parameter $\delta$, and given a current point $x \in K$, proposes
a next point $y$ from the Euclidean ball of radius $\delta$ centered at $x$, and
moves to $y$ if $y \in K$.  They prove (see \cite[Remark on p.~398]{LS93}) that when
$\delta$ is chosen appropriately, the lazy\footnote{A random walk is called
  \emph{lazy} if the probability that it stays at its current state after one
  step is at least $1/2$. A lazy version of any random walk $W$ can be obtained
  by considering the random walk in which at each step, the walk simply stays at
  the current state with probability $1/2$, and takes a step according to $W$
  with probability $1/2$.  Considering only lazy versions of walks is a standard
  device for avoiding pathological periodicity issues, and therefore we will
  always work with lazy walks in this paper.} version of the ball walk \emph{mixes rapidly}, i.e.,
it reaches a distribution that is $\epsilon$-close in total variation distance
to the uniform distribution $\pi_K$ on $K$, after a number of steps which is
polynomial in $n, 1/\epsilon$ and $R/r$, \emph{provided} that the initial point
of the random walk is chosen according to a $\poly{n}$-\emph{warm} start.  (A
distribution $\mu$ supported on $K$ is said to be \emph{$M$-warm} if the density
function of $\mu$ with respect to $\pi_K$ is bounded above by $M$.)  Another
natural geometric random walk is the \emph{hit-and-run} walk (see
\cite{smith_efficient_1984}, where it is attributed to earlier work of Boneh and
Golan, and of Smith).  Here, if the current state is $x \in K$, then the next
point $y$ is sampled by first choosing a uniformly random direction $\hat{u}$
from the unit sphere $\bS^{{n-1}}$, and then picking $y$ uniformly at random
from the chord of $K$ in direction $\hat{u}$ passing through $x$.
Lovász~\cite{L99} proved that the lazy hit-and-run walk also mixes in time
polynomial in $n$, $1/\epsilon$ and $R/r$, again assuming that the initial point
is sampled from a $\poly{n}$-warm start.

While a $\poly{n}$-warm start can be achieved in polynomial time, it requires
sophisticated pre-processing. In contrast, a ``cold start'', i.e., an $M$-warm
start where $M$ can be as large as $\exp(\poly{n})$, is very easy to generate
when $R/r$ is at most $\exp(\poly{n})$: one can simply sample the initial point
uniformly at random from the radius $r$ Euclidean ball. The first polynomial
time mixing time result for the hit-and-run walk from such a cold start, without
the need for any further pre-processing, was proved by Lovász and
Vempala~\cite{lovasz_hit-and-run_2006}.

An interesting variant of the hit-and-run walk is the \emph{coordinate}
hit-and-run (CHR) walk, where the direction $\hat{u}$ is chosen uniformly at
random from one of the coordinate directions.  The CHR walk is attractive in
part because the implementation of each step of the chain can potentially be
quite efficient: Smith~\cite[pp.~1302-1303]{smith_efficient_1984} already
mentioned some preliminary computational experiments of Telgen supporting such
an expectation in the important special case when $K$ is a polytope described by
a small number of sparse inequalities.  More recent computational work has also
explored the CHR walk in various application
areas~\cite{H+17,fallahi_comparison_2020}.  However, few theoretical guarantees
were known for the CHR walk, and it was only recently that Laddha and
Vempala~\cite{laddha_convergence_2021} and Narayanan and
Srivastava~\cite{narayanan_srivastava_2022} proved that with a $\poly{n}$-warm
start, the lazy CHR walk mixes in polynomial time.  The question of its mixing
time from a ``cold start'', i.e., from a $\exp(\poly{n})$-warm start, however
has remained open.

\subsection{Contributions} We construct a new family of Markov chains inspired
by classical multiscale decompositions of bounded sets of $\R^n$ into axis-aligned dyadic (i.e., of sidelength equal to a integral power of two) cubes.
Our chains $\MM_p$ are parameterized by the $\ell_p$ norms on $\R^n$,
$1 \leq p \le \infty$.  Our first contribution is to show that all of these
chains require only a polynomial (in $n$ and the aspect ratio $R/r$, as before)
number of steps to come within $\epsilon$ total variation distance of the
uniform distribution $\pi_K$, even when started with an $\exp(\poly{n})$-warm
start.  However, before describing the $\MM_p$ chains and our mixing result for
them in detail, we state the following result that we obtain as a byproduct of
their analysis. (Given the special status of the coordinate directions in the
coordinate hit-and-run walk, we parametrize the aspect ratio in terms of the
$\ell_\infty$ unit ball $B_\infty$ rather than in terms of the Euclidean unit
ball $B_2$.)
\begin{theorem}[see \cref{thm:chr-l2-mixing}]
  \label{thm:intro-chr}
  Let $K \subseteq \R^n$ be a convex body such that
  $r\cdot B_{\infty} \subseteq K \subseteq R\cdot B_\infty$.  Then starting from
  an $M$-warm start, the lazy coordinate hit-and-run walk comes within total
  variation distance at most $\epsilon$ of the uniform distribution $\pi_K$ on
  $K$ after $O\inp{n^9(R/r)^2\log(M/\epsilon)}$ steps.
\end{theorem}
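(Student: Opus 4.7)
The plan is to apply the $s$-conductance framework of Lovász and Simonovits, which yields a mixing-time bound of $O(\log(M/\epsilon)/\phi_{s_0}^2)$ from an $M$-warm start once we have a lower bound on the $s_0$-conductance with $s_0=\Theta(\epsilon/M)$. To hit the target $O(n^9(R/r)^2\log(M/\epsilon))$, it suffices to prove $\phi_{s_0}\geq c/\bigl(n^{4.5}(R/r)\bigr)$ uniformly for $s_0\leq\pi_K(S)\leq 1/2$. Two ingredients drive this conductance bound, and both are fed by the boundary-magnifying metric $d_K$ advertised in the abstract (essentially Euclidean distance reweighted so that distances swell near $\partial K$, relative to the local distance to the boundary).

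The first ingredient is the boundary-magnifying isoperimetric inequality: for any measurable partition $K=S_1\sqcup S_2$, it gives a lower bound of the form
\[
\pi_K\bigl(K\setminus(S_1'\cup S_2')\bigr)\;\geq\; c\cdot d_K(S_1',S_2')\cdot \pi_K(S_1)\pi_K(S_2),
\]
where $S_i'$ is an appropriate ``good core'' of $S_i$. The second ingredient is a one-step overlap estimate for lazy CHR: there exist $\Delta,\tau>0$ (with $\Delta$ of order $1/\bigl(n^{O(1)}(R/r)\bigr)$) such that whenever $x,y\in K$ satisfy $d_K(x,y)\leq\Delta$, the total variation distance between the one-step transition distributions $P_x$ and $P_y$ is at most $1-\tau$. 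Combining the two is then the standard Lovász--Simonovits argument: define $S_1'$ to be the points of $S_1$ whose one-step distribution puts mass at least $1-\tau/2$ inside $S_1$, and $S_2'$ symmetrically; if $\pi_K(S_1'\cup S_2')$ is close to $1$, then $S_1'$ and $S_2'$ must be $d_K$-separated by at least $\Delta$ by the overlap estimate, and the isoperimetric inequality then bounds the complementary measure (and hence the conductance) from below.

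The main obstacle is the one-step overlap estimate in the $d_K$ metric. CHR is notoriously sensitive near $\partial K$: from a point $x$ close to a face of $K$, most coordinate chords through $x$ are extremely short, so the step distribution is essentially concentrated on the few long chords, and a naive Euclidean overlap estimate against a neighboring point $y$ collapses. The purpose of the boundary-magnifying metric is precisely to prevent this collapse: requiring $d_K(x,y)\leq\Delta$ forces $y$ to be close to $x$ in proportion to each point's distance to $\partial K$, so that the axis-aligned chords through $x$ and $y$ are bi-Lipschitz comparable in length, and the density of the one-dimensional uniform marginal along each chord is comparable at the two points. Proving this chord-comparison rigorously is where the multiscale dyadic-cube machinery developed for the $\MM_p$ chains is used: an $\ell_\infty$ ball of sidelength proportional to $\dist(x,\partial K)$ fits inside $K$, and the interaction of this local ``Whitney-type'' cube with coordinate chords gives the required uniform lower bound on the overlap. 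The $(R/r)^2$ factor in the final mixing time traces to the worst-case short-chord dependence on the $\ell_\infty$ aspect ratio, while the $n^9$ aggregates the dimension dependencies of the isoperimetric constant, the diameter of $K$ in the metric $d_K$, and the one-step overlap bound.
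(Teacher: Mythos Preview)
Your proposed one-step overlap estimate for CHR is false, and this is the central gap. For the lazy CHR walk, the one-step distribution $P_x$ is a mixture of an atom at $x$ and uniform measures on the $n$ coordinate chords through $x$. If $x$ and $y$ differ in at least two coordinates, then for every direction $e_i$ the chord through $x$ in direction $e_i$ and the chord through $y$ in direction $e_i$ are parallel and disjoint, and chords in different directions meet in at most a point. Hence $P_x$ and $P_y$ are mutually singular and $d_{TV}(P_x,P_y)=1$, regardless of how small $d_K(x,y)$ is. Any $d_K$-ball around $x$ contains such $y$, so no continuous metric (boundary-magnifying or otherwise) can give the overlap estimate you need. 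This is exactly why the Hilbert-metric approach that works for standard hit-and-run has never been made to work for CHR.

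The paper's route is different in a crucial way: the metric $g_p$ and the isoperimetric inequality for it are used only to analyze the multiscale chain $\MM_\infty$ on Whitney cubes, yielding a conductance lower bound $\Phi_{\MM_\infty}\geq \Omega(\rho_\infty/n^2)$. The bridge to CHR is not a one-step overlap but an $\ell_0$-type isoperimetry for \emph{axis-disjoint} sets (Theorem~\ref{theo: axis-disjoint isoperimetry}): if $S_1,S_2$ are axis-disjoint then $\vol{S_3}\geq \Omega(\Phi_{\MM_\infty}/n^{3/2})\min\{\vol{S_1},\vol{S_2}\}$, with no lower bound on the volumes. One then defines $S_i'=\{x\in S_i:P_{\CHR}(x,S_{3-i})<1/(4n)\}$ and checks directly (a one-line pigeonhole on a shared chord) that $S_1'$ and $S_2'$ are axis-disjoint, not merely metrically separated. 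Applying the axis-disjoint isoperimetry to $S_1',S_2'$ gives a genuine conductance bound $\Phi_{\CHR}\geq \Omega(\Phi_{\MM_\infty}/n^{5/2})\geq \Omega(\rho_\infty/n^{4.5})$ valid for sets of all sizes, and the $\log(M/\epsilon)$ mixing then follows from the standard $L^2$ contraction (Corollary~1.8 of Lov\'asz--Simonovits), not the $s$-conductance framework. The Whitney-cube machinery enters the axis-disjoint isoperimetry through a case analysis on cubes where $S_1$ is dense versus sparse, combined with the vertex expansion of the hypercube on abutting half-size cubes.
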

The above result shows that the coordinate hit-and-run (CHR) random walk also
mixes in polynomial (in $n$ and the aspect ratio $R/r$) time even from ``cold'',
i.e., $\exp(\poly{n})$-warm starts.  As described above, polynomial time mixing
for the CHR walk had only been proved so far starting from a $\poly{n}$-warm
start~\cite{narayanan_srivastava_2022,laddha_convergence_2021}: the dependence
on $M$ in the mixing time bounds obtained in
\cite{narayanan_srivastava_2022,laddha_convergence_2021} are proportional to
$\poly{M}$, as compared to the $\log M$ dependence in our \cref{thm:intro-chr}.
The above result can also be extended to show that CHR mixes after a polynomial
number of steps even when the starting distribution is concentrated on a single
point in $K$, provided that the point is not too close to the boundary of $K$ --
its $\ell_\infty$-distance from the boundary of $K$ should be least
$R\exp(-\poly{n})$.  See \cref{sec:chr-point} for further discussion of this
extension.

We now proceed to describe our main technical result: the construction of the
$\MM_p$ random walks and their rapid mixing from a cold start.  The random walks
$\MM_p$ are inspired from the classical decomposition of bounded subsets of
$\R^n$ into axis-aligned cubes with disjoint interiors.  Such decompositions
have been used since the work of Whitney~\cite{Whitney} (see, e.g.,
\cite{Feff0,Feff} for more recent examples of their use).  We now informally
describe the decomposition of $K$ that we use for the $\MM_{p}$ chain.  For
simplicity, we assume that $K$ is contained in the interior of the
$\ell_{\infty}$ ball of radius $1$. We start with the standard tiling of
$\R^{n}$ by unit cubes with vertices in $\Z^n$, and also consider all scalings
of this tiling by factors of the form $2^{-k}$, where $k$ is a positive integer.
Our decomposition $\FF = \FF^{(p)}$ of $K$ into cubes with disjoint interiors is
then obtained by considering these cubes in decreasing order of sidelength and
including those cubes $Q$ for which
\begin{enumerate}
\item $Q$ is contained within $K$, and in fact, relative to
  its own diameter, $Q$ is ``far away'' from the exterior $\R^n \setminus K$ of
  $K$: the $\ell_{p}$-distance of the center of $Q$ from $\R^n\setminus K$ is at
  least \emph{twice} the $\ell_{p}$-diameter of $Q$, and
\item no ``ancestor'' cube of $Q$, i.e., a cube containing $Q$ is part of the
  decomposition $\FF$.
\end{enumerate}
A formal description of the construction of $\FF$ is given in
\cref{sec:whitney-cubes}, where it is also shown that such a decomposition fully
covers the interior $K^\circ$ of $K$, and also that if two cubes in $\FF$ abut
along an $(n-1)$-dimensional facet, then their sidelengths must be within a
factor of two of each other.  We note that this ``bounded geometry'': namely
that the ratio of the side lengths of abutting cubes are within a factor of two
of each other (see \cref{fig2}), is a very useful feature of this construction for our
purposes. In particular, this feature plays an important role in relating the
properties of the $\MM_p$ chains to the coordinate hit-and-run random walk.

\begin{figure}[H]
  \centering
  \begin{tikzpicture}
\coordinate (A1) at (0, 0, 0);
    \coordinate (A2) at (\Depth, 0, 0);
    \coordinate (A3) at (\Depth, 0, \Height);
    \coordinate (A4) at (0, 0, \Height);
    \coordinate (B1) at (0, \Width, 0);
    \coordinate (B2) at (\Depth, \Width, 0);
    \coordinate (B3) at (\Depth, \Width, \Height);
    \coordinate (B4) at (0, \Width, \Height);
    
    \draw (A1) -- (A2) -- (A3) -- (A4) -- cycle; \draw (A1) -- (B1) -- (B4) -- (A4) -- cycle; \draw (A1) -- (A2) -- (B2) -- (B1) -- cycle; \draw[opacity=0.8] (B1) -- (B2) -- (B3) -- (B4) -- cycle; \draw[opacity=0.8] (A2) -- (A3) -- (B3) -- (B2) -- cycle; \draw[opacity=0.8] (A3) -- (B3) -- (B4) -- (A4) -- cycle; 

    \coordinate (A01) at (-\Depth+0, 0, 0);
    \coordinate (A02) at (-\Depth+\Depth, 0, 0);
    \coordinate (A03) at (-\Depth+\Depth, 0, \Height);
    \coordinate (A04) at (-\Depth+0, 0, \Height);
    \coordinate (B01) at (-\Depth+0, \Width, 0);
    \coordinate (B02) at (-\Depth+\Depth, \Width, 0);
    \coordinate (B03) at (-\Depth+\Depth, \Width, \Height);
    \coordinate (B04) at (-\Depth+0, \Width, \Height);
    
    \draw (A01) -- (A02) -- (A03) -- (A04) -- cycle; \draw (A01) -- (B01) -- (B04) -- (A04) -- cycle; \draw (A01) -- (A02) -- (B02) -- (B01) -- cycle; \draw[opacity=0.8] (B01) -- (B02) -- (B03) -- (B04) -- cycle; \draw[opacity=0.8] (A02) -- (A03) -- (B03) -- (B02) -- cycle; \draw[opacity=0.8] (A03) -- (B03) -- (B04) -- (A04) -- cycle; 

    \coordinate (A11) at (\Depth, 0, 0);
    \coordinate (A12) at (\Depth+0.5*\Depth, 0, 0);
    \coordinate (A13) at (\Depth+0.5*\Depth, 0, 0.5*\Height);
    \coordinate (A14) at (\Depth, 0, 0.5*\Height);
    \coordinate (B11) at (\Depth, 0.5*\Width, 0);
    \coordinate (B12) at (\Depth+0.5*\Depth, 0.5*\Width, 0);
    \coordinate (B13) at (\Depth+0.5*\Depth, 0.5*\Width, 0.5*\Height);
    \coordinate (B14) at (\Depth, 0.5*\Width, 0.5*\Height);

    \draw (A11) -- (A12) -- (A13) -- (A14) -- cycle; \draw (A11) -- (B11) -- (B14) -- (A14) -- cycle; \draw (A11) -- (A12) -- (B12) -- (B11) -- cycle; \draw[opacity=0.6] (A13) -- (B13) -- (B14) -- (A14) -- cycle; \draw[opacity=0.6] (B11) -- (B12) -- (B13) -- (B14) -- cycle; \draw[opacity=0.6] (A12) -- (A13) -- (B13) -- (B12) -- cycle; 

    \coordinate (A21) at (\Depth+0,           0.5*\Width+0,          0);
    \coordinate (A22) at (\Depth+0.5*\Depth,  0.5*\Width+0,          0);
    \coordinate (A23) at (\Depth+0.5*\Depth,  0.5*\Width+0,          0.5*\Height);
    \coordinate (A24) at (\Depth+0,           0.5*\Width+0,          0.5*\Height);
    \coordinate (B21) at (\Depth+0,           0.5*\Width+0.5*\Width, 0);
    \coordinate (B22) at (\Depth+0.5*\Depth,  0.5*\Width+0.5*\Width, 0);
    \coordinate (B23) at (\Depth+0.5*\Depth,  0.5*\Width+0.5*\Width, 0.5*\Height);
    \coordinate (B24) at (\Depth+0,           0.5*\Width+0.5*\Width, 0.5*\Height);

    \draw (A21) -- (A22) -- (A23) -- (A24) -- cycle; \draw (A21) -- (B21) -- (B24) -- (A24) -- cycle; \draw (A21) -- (A22) -- (B22) -- (B21) -- cycle; \draw[opacity=0.6] (A23) -- (B23) -- (B24) -- (A24) -- cycle; \draw[opacity=0.6] (B21) -- (B22) -- (B23) -- (B24) -- cycle; \draw[opacity=0.6] (A22) -- (A23) -- (B23) -- (B22) -- cycle; 

    \coordinate (A31) at (\Depth+0,           0,          0.5*\Height+0);
    \coordinate (A32) at (\Depth+0.5*\Depth,  0,          0.5*\Height+0);
    \coordinate (A33) at (\Depth+0.5*\Depth,  0,          0.5*\Height+0.5*\Height);
    \coordinate (A34) at (\Depth+0,           0,          0.5*\Height+0.5*\Height);
    \coordinate (B31) at (\Depth+0,           0.5*\Width, 0.5*\Height+0);
    \coordinate (B32) at (\Depth+0.5*\Depth,  0.5*\Width, 0.5*\Height+0);
    \coordinate (B33) at (\Depth+0.5*\Depth,  0.5*\Width, 0.5*\Height+0.5*\Height);
    \coordinate (B34) at (\Depth+0,           0.5*\Width, 0.5*\Height+0.5*\Height);

    \draw (A31) -- (A32) -- (A33) -- (A34) -- cycle; \draw (A31) -- (B31) -- (B34) -- (A34) -- cycle; \draw (A31) -- (A32) -- (B32) -- (B31) -- cycle; \draw[opacity=0.6] (A33) -- (B33) -- (B34) -- (A34) -- cycle; \draw[opacity=0.6] (B31) -- (B32) -- (B33) -- (B34) -- cycle; \draw[opacity=0.6] (A32) -- (A33) -- (B33) -- (B32) -- cycle; 

    \coordinate (A41) at (\Depth+0,           0.5*\Width+0,          0.5*\Height+0);
    \coordinate (A42) at (\Depth+0.5*\Depth,  0.5*\Width+0,          0.5*\Height+0);
    \coordinate (A43) at (\Depth+0.5*\Depth,  0.5*\Width+0,          0.5*\Height+0.5*\Height);
    \coordinate (A44) at (\Depth+0,           0.5*\Width+0,          0.5*\Height+0.5*\Height);
    \coordinate (B41) at (\Depth+0,           0.5*\Width+0.5*\Width, 0.5*\Height+0);
    \coordinate (B42) at (\Depth+0.5*\Depth,  0.5*\Width+0.5*\Width, 0.5*\Height+0);
    \coordinate (B43) at (\Depth+0.5*\Depth,  0.5*\Width+0.5*\Width, 0.5*\Height+0.5*\Height);
    \coordinate (B44) at (\Depth+0,           0.5*\Width+0.5*\Width, 0.5*\Height+0.5*\Height);

    \draw (A41) -- (A42) -- (A43) -- (A44) -- cycle; \draw (A41) -- (B41) -- (B44) -- (A44) -- cycle; \draw (A41) -- (A42) -- (B42) -- (B41) -- cycle; \draw[opacity=0.6] (A43) -- (B43) -- (B44) -- (A44) -- cycle; \draw[opacity=0.6] (B41) -- (B42) -- (B43) -- (B44) -- cycle; \draw[opacity=0.6] (A42) -- (A43) -- (B43) -- (B42) -- cycle; \end{tikzpicture}

  \caption{Local geometry of Whitney decompositions: the sidelengths of adjacent
    cubes are within a factor of two of each other.}
  \label{fig2}
\end{figure}
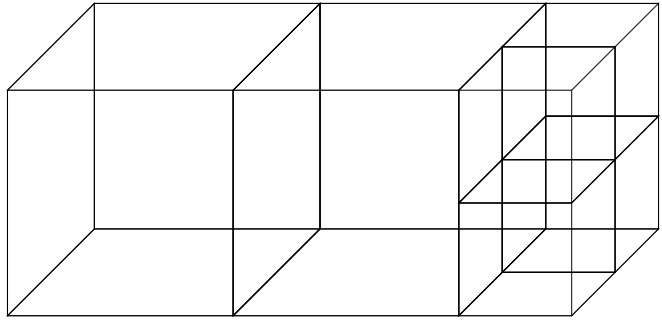

The chain $\MM_p$ can be seen both as a random walk on $K$ and also as a random
walk on the countably infinite set of cubes in the Whitney decomposition
$\FF = \FF^{(p)}$ of $K$ described above, but the latter view is easier to describe
first.  The stationary distribution $\pi$ of $\MM_p$ is given by
$\pi(Q) = \vol{Q}/\vol{K}$ for each cube in $\FF^{(p)}$.  Given the current cube
$Q$, the walk chooses to stay at $Q$ with probability $1/2$.  With the remaining
probability, it performs the following step. Pick a point $x$ uniformly at
random from the boundary $\bdry{Q}$ of the cube $Q$. With probability $1$, there
is a unique cube $Q' \neq Q$ in $\FF^{(p)}$ to which $x$ belongs.  The walk proposes
a move to this cube $Q'$, and then accepts it based on a standard Metropolis
filter with respect to $\pi$.  The Metropolis filter ensures that the walk is in
fact \emph{reversible} with respect to $\pi$, i.e.,
$\pi(Q)P_{\MM_p}(Q, Q') = \pi(Q')P_{\MM_p}(Q', Q)$ where $P_{\MM_p}(Q, Q')$ is
the probability of transitioning to cube $Q'$ in one step,
when starting from cube $Q$.  This implies that $\pi$ is a stationary
distribution of $\MM_p$ (see \cref{sec:markov-chain} for details).

As stated above, $\MM_p$ can equivalently be seen as a Markov chain on $K$
itself.  To see this, note that corresponding to any probability distribution
$\nu$ on $\FF^{(p)}$, there is a probability distribution $\nu_K$ on $K$
obtained by sampling a cube $Q$ from $\FF^{(p)}$ according to $\nu$, and then a
point $x$ uniformly at random from $Q$.  It is easy to see that the uniform
distribution $\pi_K$ on $K$ can be generated from the distribution $\pi$ above
in this fashion.  Further, one can also show that the total variation distance
between the distributions $\nu_K$ and $\pi_K$ on $K$ is at most the total
variation distance between the distributions $\nu$ and $\pi$ on $\FF^{(p)}$
(this follows directly from the definition, and the easy details are given in
the proof of \cref{thm:mp-mixing-intro} on page~\pageref{eq:91}).  Similarly,
given a probability distribution $\nu_K$ on $K$ that is $M$-warm with respect to
$\pi_K$, one can obtain a distribution $\nu$ on $\FF^{(p)}$ that is $M$-warm
with respect to $\pi$. This is done as follows. Sample a point $x$ according to
$\nu_K$. With probability $1$, $x$ lies in the interior of some cube
$Q \in \FF^{(p)}$ (this follows because $\nu_K$ is $M$-warm with respect to $\pi_K$
and because the probability measure under $\pi_K$ of the union of the boundaries
of the countably many cubes in $\FF^{(p)}$ is zero). $\nu$ is then defined to be
the probability distribution of this random cube $Q$.  Then
$\nu(Q) = \nu_K(Q) \leq M\pi_K(Q) = M\vol{Q}/\vol{K} = M\pi(Q)$.

Our main theorem for $\MM_p$ chains is the following.  Here,
$B_p \defeq \inb{x \in \R^n \st \norm[p]{x} < 1}$ is the unit $\ell_p$-ball in
$\R^n$: note that the requirement $r\cdot B_{p} \subseteq K$ is weaker than the
requirement $r\cdot B_{\infty} \subseteq K$ when $p < \infty$.
\begin{theorem}[see \cref{cor:mp-mixing}] Fix $1 \leq p \leq \infty$. Let
  $K \subseteq \R^n$ be a convex body such that
  $r\cdot B_{p} \subseteq K \subseteq R\cdot B_\infty$.  Then, starting from an
  $M$-warm start, the $\MM_p$ random walk on $K$ comes within total variation
  distance at most $\epsilon$ of the uniform distribution $\pi_K$ on $K$ after
  $O\inp{n^{4 + \frac{2}{p}}\cdot(R/r)^2\cdot \log (M/\eps)}$
  steps. \label{thm:mp-mixing-intro}
\end{theorem}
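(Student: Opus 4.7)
The plan is to reduce the mixing problem on $K$ to a mixing problem for the discrete chain $\MM_p$ on the countable set of cubes $\FF^{(p)}$, and then apply the $s$-conductance framework of Lov\'asz and Simonovits, which is the standard tool for deducing rapid mixing from a cold start. The reduction to the cube chain is already essentially contained in the excerpt: the total variation distance on $K$ is bounded above by that on $\FF^{(p)}$, and an $M$-warm distribution on $K$ pushes forward to an $M$-warm distribution on $\FF^{(p)}$. Once on the cube chain, the $s$-conductance method reduces the problem to lower bounding $\Phi_s$ for $s = \Theta(\eps/M)$; the $\log M$ factor in the final mixing time arises precisely from this choice.

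To lower bound $\Phi_s$, I would follow the Lov\'asz--Simonovits three-set strategy. Given a partition $\FF^{(p)} = S_1 \sqcup S_2$ with $\pi(S_i) \geq s$, split each $S_i$ into its ``bulk'' $S_i^\circ = \{Q \in S_i : P_{\MM_p}(Q, S_{3-i}) < c\}$ and its ``boundary'' $S_i \setminus S_i^\circ$. Cubes in the boundary already contribute an $\Omega(c)$ fraction of their measure to the flow $Q(S_1,S_2)$, so the nontrivial case is when both bulks capture most of the measure. For this case one needs two ingredients: (a) a one-step overlap lemma showing that cubes which are close in a suitable metric have large total-variation overlap in their one-step distributions, and (b) an isoperimetric inequality forcing sets that are far apart in this metric to have a large middle region.

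Ingredient (a) is a consequence of the $\tfrac12$-laziness of $\MM_p$ together with the bounded-geometry property of Whitney decompositions (\cref{fig2}): adjacent cubes differ in sidelength by at most a factor of two, so a pair of abutting cubes share $\Omega(1/n)$ of the face-sampling proposal mass, and the Metropolis filter loses only a bounded factor. Iterating this along a chain of abutting cubes shows that $Q$ and $Q'$ connected by $k$ abutments satisfy $\|P_{\MM_p}(Q,\cdot) - P_{\MM_p}(Q',\cdot)\|_{TV} \leq 1 - \Omega(1/(nk))$. Since the natural sidelength of the Whitney cube at a point $x$ is of order $\dist(x, \partial K)/n^{1/p}$, the number of abutment hops between $x$ and $y$ is proportional to a boundary-magnifying path integral. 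Consequently, if $S_1^\circ$ and $S_2^\circ$ are both substantial, they must lie at distance at least $\delta = \Omega(r/n^{1 + 1/p})$ from one another in the boundary-magnifying metric $d_K$ on $K$ (the $r$ accounting for the worst-case cube sidelength, and the $n^{1/p}$ for the $\ell_p$-diameter-to-sidelength ratio).

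The main technical obstacle is ingredient (b): an isoperimetric inequality for $K$ with respect to $d_K$, informally $d_K(x,y) \asymp \inf_{\gamma}\int_{\gamma} |d\gamma|/\dist(\gamma, \partial K)$, adapted to the $\ell_p$ Whitney scale. The target is that any partition $K = A_1 \sqcup A_2 \sqcup N$ with $d_K(A_1, A_2) \geq \delta$ must satisfy $\vol{N} \geq c\,\delta\,\min(\vol{A_1},\vol{A_2})/R$. The natural route is the Lov\'asz--Simonovits localization lemma, which reduces the statement to a one-dimensional inequality along log-concave needles; the new calculation is to verify that, along a needle, the boundary-magnifying metric element pulls back to (essentially) the reciprocal of the distance from the needle endpoints, and then a direct log-concavity computation bounds the measure of the middle region as required. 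The hypothesis $r\cdot B_p \subseteq K$ is used here to rule out degenerate needle geometries. Combining the isoperimetric estimate with $\delta = \Omega(r/n^{1+1/p})$ gives $\Phi_s = \Omega(1/(n^{2 + 1/p}\,R/r))$, and substituting into the $s$-conductance bound $\|\mu_t - \pi\|_{TV} \leq Ms + M(1 - \tfrac{1}{2}\Phi_s^2)^t$ with $s = \eps/(2M)$ yields the claimed $O(n^{4+2/p}(R/r)^2 \log(M/\eps))$ mixing time.
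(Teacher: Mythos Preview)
Your high-level architecture matches the paper: reduce to the discrete chain on $\FF^{(p)}$, bound its conductance via an isoperimetric inequality for a boundary-magnifying metric, and translate back. A few framing differences are harmless. The paper uses the full conductance together with $L^2$ mixing (\cref{lem:ls-l2-mixing}) rather than $s$-conductance, and it bounds the ergodic flow out of $S$ directly via $\vol[n-1]{\partial Q \cap \partial S}$ and a limiting $\vol{S_\epsilon \setminus S}/\epsilon$ argument (\cref{sec:bounding-conductance}) rather than through one-step TV overlap of adjacent cubes; both routes aim at the same estimate $\Phi \geq \rho_p/O(n^{2+1/p})$. Your dimensional bookkeeping for $\delta$ is also a bit garbled (the $g_p$-distance between non-abutting Whitney cubes is $\Omega(n^{-1/p})$, a dimensionless quantity with no factor of $r$; the $r/R$ enters through the isoperimetric constant, not through $\delta$), but that is fixable.

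The genuine gap is your proposed proof of the isoperimetric inequality itself. You assert that along a localization needle the metric element ``pulls back to (essentially) the reciprocal of the distance from the needle endpoints'' and that a direct log-concavity computation finishes. This is precisely where the paper says the localization lemma cannot be applied as a black box, and the reason is worth spelling out. The weight in $g_p$ is $1/\dist_{\ell_p}(x,\partial K)$, and along a generic needle this is \emph{not} the reciprocal of the distance to the needle's endpoints; concavity (\cref{lem: l1 concave}) gives only a one-sided linear bound, and the mismatch breaks the one-dimensional reduction. Concretely, when $S_1$ hugs $\partial K$, an $S_3$-gap of $g_p$-width $\delta$ can sit arbitrarily close to $\partial K$ and thus have negligible Euclidean length; if instead the gap sits deep inside $K$, the $r^{n-1}$ Jacobian of the radial fibration attenuates its contribution. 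The paper resolves this tension with a two-part argument (\cref{theo isoperimetry}): for $S_1$ with significant mass in the interior it applies the $\ell_\infty$ version of the Kannan--Lov\'asz--Montenegro inequality on an inner parallel body; for $S_1$ concentrated near $\partial K$ it builds a radial needle decomposition by hand, truncates each needle to a ``stub'' of length $\Theta(\ell_p(N)/n)$, and then shows that on most needles the $S_3$-gap can be located in the \emph{inner} portion of the stub (\cref{fig:needle}). That placement is the actual content of the proof: it keeps the gap far enough from $\partial K$ that its Euclidean length is controlled from below, yet close enough that the radial Jacobian does not kill it. Your ``direct log-concavity computation'' would have to discover and justify exactly this, and as stated it does not.
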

\begin{remark}
  In \cref{thm:mp-mixing-refined}, we use a more refined analysis to establish
  essentially the same dependence of the mixing time on $n$ and $R/r$ even when
  $\MM_p$ is started from a starting distribution supported on a single cube
  $Q \in \FF^{(p)}$ (or equivalently, in light of the discussion in the previous
  paragraph, a point $x \in Q$) that is at least $\frac{1}{\poly{n}}$ away from
  the boundary of $K$ (a direct application of \cref{thm:mp-mixing-intro} would
  lose an extra factor of $\tilde{O}(n)$ in this setting).
\end{remark}
 
\paragraph{Algorithmic implementation of $\MM_p$} We note that the tools we
develop for the analysis of our multiscale chains $\MM_p$ play a crucial rule in
our result for the CHR walk (\cref{thm:intro-chr}).  In addition, $\MM_p$ chains
are of algorithmic interest in their own right.  However, it may not be
immediately clear how to algorithmically implement each step of the $\MM_p$
chain from the above description of the chain and its state space $\FF^{(p)}$ of
Whitney cubes.  We show in \cref{sec:finding-whitney-cube} that each step of the
$\MM_1$ chain can be algorithmically implemented in $O(n)$ time using only a
membership oracle for $K$.  When $p > 1$, algorithmically implementing one step
of the $\MM_p$ chain requires access to an oracle for the $\ell_p$-distance of a
point $x \in K$ to the boundary $\bdry{K}$ of $K$: such oracles can be
implemented efficiently for polytopes.  We describe this construction as well in
\cref{sec:finding-whitney-cube}.

We now proceed to discuss the context for our results in the light of existing
literature. Following this, we give a overview of our results and proof
techniques in \cref{sec:technical-overview}.
\subsection{Discussion}
The notion of \emph{conductance} has played a central role in most rapid mixing
results for random walks on convex bodies.  For the discussion below, we fix a
convex body $K \subseteq \R^n$ such that $rB_2 \subseteq K \subseteq
RB_2$. Given a random walk $\mathcal{W}$ with stationary distribution as the
uniform distribution $\pi_K$ on $K$, the conductance $\Phi_{\mathcal{W}}(S)$ of
a subset $S \subseteq K$ is defined as the probability of the following randomly
chosen point lying in $K \setminus S$: choose a point uniformly at random from
$S$, and then take a step according to $\mathcal{W}$.  It follows from standard
results in the theory of Markov chains~\cite{LS93} that if
$\Phi_{\mathcal{W}}(S) \geq 1/\poly{n, R/r}$ for every measurable
$S \subseteq Q$, then the random walk $\mathcal{W}$ mixes rapidly from a
$\exp(\poly{n})$-warm start.  However, in several cases, one only gets the
weaker result that only large enough subsets have good conductance: a
formalization of this is through the notion of
\emph{$s$-conductance}~\cite[p.~367]{LS93}, which can capture the phenomenon
that, roughly speaking, the lower bound obtained on the conductance of $S$
\emph{degrades} as the volume $s$ of the set $S$ becomes smaller.  Under such a
bound, one usually only gets rapid mixing from a $\poly{n}$-warm start~(see,
e.g.,~\cite[Corollary 1.6]{LS93}). The reason that one can only get a lower
bound on the
conductance of large sets may have to do with the properties of the walk
$\mathcal{W}$ itself (which is the case with the ball walk).  However, it may
also be an artefact of the proof method rather than a property of the walk
itself.  For example, the original proof of Lovász~\cite{L99} for the rapid
mixing of the hit-and-run walk was built upon an $s$-conductance lower bound
that approached zero as the size parameter $s$ approached zero~\cite[Theorem 3]{L99}, and therefore
required a $\poly{n}$-warm start. In contrast, the later proof by Lovász and
Vempala~\cite{lovasz_hit-and-run_2006} established a conductance bound for the
same chain and thereby achieved rapid mixing from a cold start.

Rapid mixing proofs of random walks on convex sets often follow the plan of
establishing a conductance (or $s$-conductance) lower bound of the chain using
an \emph{isoperimetric inequality} for an appropriate metric (roughly speaking,
an isoperimetric inequality puts a lower bound on
$\vol{K \setminus (S_1 \cup S_2)}$ proportional to the product of volumes of
$S_1$ and $S_2$ and the distance $\delta$ between $S_1$ and $S_2$, at least when
$\delta$ is a sufficiently small positive number).\footnote{A notable exception
  to this general strategy is the work of Bubley, Dyer and
  Jerrum~\cite{bubley_elementary_1998}, discussed in more detail later in the
  introduction.}  A unifying theme in the analysis of many random walks for
sampling from convex sets, starting from the work of Lov\'{a}sz \cite{L99}, has
been to prove such an isoperimetric inequality when the underlying metric is
non-Euclidean. For example, the underlying metric in \cite{L99} is the
\emph{Hilbert metric} defined using the logarithm of certain cross-ratios.  This
isoperimetric inequality was then used to give an inverse polynomial lower bound
for the $s$-conductance of the hit-and-run walk that degraded gracefully to zero
as the size parameter $s$ approached zero, thereby leading to a rapid mixing
result for the hit-and-run walk under a warm start.  In later work, Lovász and
Vempala~\cite{lovasz_hit-and-run_2006} obtained an inverse-polynomial lower
bound on the conductance of the hit-and-run walk by refining the isoperimetric
inequality for the Hilbert metric proved in \cite{L99}: this improvement in the
isoperimetric inequality thus led to a rapid mixing result for the hit-and-run
walk without the need of a warm start.

The Hilbert metric also appears in the analysis by Kannan and
Narayanan~\cite{kannan2012random} of another random walk, called the \emph{Dikin
  walk}, on polytopes.  The Dikin walk was generalized by Narayanan~\cite{HN2}
to more general convex sets equipped with a weighted combination of logarithmic,
hyperbolic and self-concordant barriers, and was analysed using a different
Riemannian metric whose metric tensor is derived from the Hessian of the
combined barrier.  The isoperimetric properties of this Riemannian metric were
established by comparison to the Hilbert metric.  Improvements on this walk
with better mixing times have been obtained by Chen, Dwivedi, Wainwright and
Yu~\cite{chen2018fast} and by Laddha, Lee, and Vempala~\cite{Laddha}.  The
geodesic walk of Lee and Vempala~\cite{lee2017geodesic} uses geodesics of the
Riemannian metric associated with the logarithmic barrier to define a walk on
polytopes, whose properties again hinge on the isoperimetric properties of the
convex set equipped with the Hilbert metric and the uniform measure.

Beyond proving the isoperimetric inequality, there is also the need to relate
these Markov chains to the reference metric introduced. This was done for
hit-and-run in \cite{L99} using in part the well-known theorem of Menelaus in
Euclidean geometry.  This step for the
Dikin walk used facts from interior point methods developed by Nesterov and
Nemirovski. The analogous analysis was particularly involved in
\cite{lee2017geodesic} and used Jacobi fields among other tools.
For a more detailed discussion of these and related developments, we refer to
the recent survey~\cite{LV-survey} by Lee and Vempala.

Unfortunately, it has not been possible to exploit the Hilbert metric to analyze
the \emph{coordinate} hit-and-run (CHR) walk.  However, in recent work, Laddha
and Vempala~\cite{laddha_convergence_2021} showed how to implement the program
of proving an $s$-conductance bound for the CHR walk using an isoperimetric
inequality for an appropriate metric: they proved rapid mixing for the CHR walk
from a warm start via an isoperimetric inequality for subsets of $K$ that are
far in the $\ell_0$-metric and that are not too small in volume (the
$\ell_0$-distance between two points in $\R^n$ is the number of coordinates on
which they differ).  

Our result for the CHR walk (\cref{thm:intro-chr}) also hinges on a similar
$\ell_0$-isoperimetric inequality, \cref{theo: axis-disjoint isoperimetry},
which however extends to sets of all volumes (including arbitrarily small
volumes). This is the main technical ingredient that allows us to remove the
requirement of a warm start in \cref{thm:intro-chr}.

The proof of \cref{theo: axis-disjoint isoperimetry} itself goes via the proof
of a conductance lower bound for the $\MM_p$ chains on Whitney decompositions of
$K$ that we introduced above. The conductance analysis of the $\MM_p$ chains, in
turn, proceeds by introducing a kind of degenerate Finsler metric on $K$ (see
\cref{sec:an-isop-ineq}), which is a scaled version of $\ell_{\infty}$ that
magnifies distances in the vicinity of a point $x$ in $K$ by a factor of
$1/\dist_{\ell_p}(x, \R^n \setminus
K)$. Our main technical ingredient is a new isoperimetric inequality (\cref{theo
  isoperimetry}) for any convex body $K$ under such a metric. Part of the proof
of this inequality requires an existing isoperimetric inequality for convex sets
in normed spaces proved by Kannan, Lov\'{a}sz and Montenegro \cite{KLM06}, but
the bulk of the proof is handled by a detailed analysis of ``needles'' analogous
to those in the celebrated localization lemma of Lovász and
Simonovits~\cite{LS93}.  In the more refined analysis
(\cref{thm:mp-mixing-refined}) of the $\MM_p$ chain from a fixed state that we
alluded to in the remark following \cref{thm:mp-mixing-intro}, we also use
results of Lov\'{a}sz and Kannan~\cite{LK} relating rapid mixing to \emph{average conductance} rather
than worst-case conductance, thereby saving ourselves a factor of $\tilde{O}(n)$
in the mixing time. This in turn is made possible by the fact that for the
degenerate Finsler metric we introduce, the lower bounds we can prove on the
isoperimetric profile of small sets are actually \emph{stronger} than those we
can prove for large sets.

We now proceed to give a more detailed overview of our techniques.

 \subsection{Technical overview}
 \label{sec:technical-overview}
 Our result follows the general schema of establishing a conductance lower bound
 for the chain using an isoperimetric inequality for an appropriate metric.  As
 discussed above, the requirement of a warm start in rapid mixing proofs is
 often a consequence of the fact that non-trivial bounds for the conductance of
 the chain are available only for sets of somewhat large volumes.  This in turn is often
 due to having to ``throw away'' a part of the volume of $K$ that is close to
 the boundary $\bdry{K}$ of $K$ before applying the isoperimetric inequality:
 this is the case, for example, with the original warm start rapid mixing proof
 of the hit-and-run walk~\cite{L99}.  The same issue also arose in two different
 proofs of rapid mixing for the coordinate hit-and-run (CHR) walk starting with
 a warm start~\cite{laddha_convergence_2021,narayanan_srivastava_2022}: in both
 these proofs, an isoperimetric inequality could only be applied after excluding
 a part of $K$ close to $\bdry{K}$.

 Our motivation for considering a multiscale walk comes partly from the desire
 to avoid this exclusion of the part of $K$ close to its boundary.  Notice that
 as our multiscale chain $\MM_p$ approaches the boundary of $K$, the underlying
 cubes also become proportionately smaller, and the chain can still make
 progress to neighboring cubes at a rate that is not much worse than what it
 would be from larger cubes in the deep interior of the body.  Note, however,
 that this progress cannot be captured in terms of usual $\ell_p$ norms: while
 the chain does move to adjacent cubes, the distances between the centers of
 these adjacent cubes shrink as the chain comes closer to the boundary of $K$.
 Thus, it seems unlikely that isoperimetric results for $\ell_p$-norms
 alone (e.g., those in \cite{LS93,KLM06}) would be able to properly account for
 the progress the multiscale chain makes when it is close to the boundary of
 $K$.

\paragraph{A metric and an isoperimetry result}
In order to properly account for this progress, we introduce metrics that
magnify distances close to the boundary $\bdry{K}$ of $K$.  More concretely, to
analyze the chain $\MM_p$, we consider the metric $g_p$ which magnifies
$\ell_\infty$-distances in the vicinity of a point $x \in K$ by a factor of
$1/\dist_{\ell_p}(x, \bdry{K})$ (see \cref{sec:an-isop-ineq} for the formal
definition of the metric $g_p$).  Because of this scaling, this metric captures
the intuition that the chain's progress close to the boundary is not much worse than
what it is in the deep interior of $K$.  Our main technical result is an
isoperimetry result for $K$ endowed with the $g_p$ metric and the uniform
(rescaled Lebesgue) probability measure.  We show that
$\vol{K \setminus (S_1 \cup S_2)}$ is significant in proportion to
$\min\inb{\vol{S_1}, \vol{S_2}}$ whenever $S_1$ and $S_2$ are subsets of $K$
that are far in the $g_p$ distance: see \cref{theo isoperimetry} for the
detailed statement.

Our proof of \cref{theo isoperimetry} is divided into two cases depending upon
whether $S_1$, the smaller of the sets $S_1$ and $S_2$, has a significant mass
close to the boundary of $K$ or not.  The easy case is when $S_1$ does
\emph{not} have much mass close to the boundary, and in this case we are able to
appeal to a isoperimetric inequality of Kannan, Lovász and
Montenegro~\cite{KLM06} for the standard $\ell_p$ norms: this is Part 1
(page~\pageref{eq:83}) of the proof of \cref{theo isoperimetry}.

The case that requires more work is Part 2 (page~\pageref{eq:47}), which is when a large
constant fraction (about 0.95 in our proof) of the volume of $S_1$ lies within
$\ell_p$-distance $C_1/n$ of the boundary $\bdry{K}$ of $K$ for some parameter
$C_1$.  Our proof of this part is inspired by the localization idea of
\cite{LS93}, but we are unable to directly apply their localization
lemma in a black box manner. Instead, we proceed by radially fibering the
body $K$ into one-dimensional needles (see \cref{def:needle}), where the needles
correspond to radial line segments in a spherical polar coordinate system
centered at a point $x_0$ in the deep interior of $K$.  The intuition is that
since $S_1$ and $S_2$ are at distance at least $\delta > 0$ in the $g_p$ metric,
a large fraction of these needles contain a large segment intersecting
$S_3 = K \setminus (S_1 \cup S_2)$.  This intuition however runs into two
competing requirements.
\begin{enumerate}
\item First, the $S_3$-segment in a needle cannot be be too close to the
  boundary $\bdry{K}$. This is because the $g_p$ metric magnifies distances
  close to $\bdry{K}$, so that a segment that is close to $\bdry{K}$ and is of
  length $\delta$ in the $g_p$ metric may have a much smaller length in the
  usual Euclidean norm.  The contribution to the volume of $S_3$ of such a
  segment would therefore also be small.
\item Second, neither can the $S_3$-segment in a needle be too \emph{far} from
  the boundary $\bdry{K}$.  This is because, by definition, a needle $N$ is a
  radial line in a polar coordinate system centered at a point $x_0$ deep
  inside $K$, so that the measure induced on $N$ by the standard Lebesgue
  measure is proportional to $t^{n-1}$, where $t$ is the Euclidean distance from
  $x_0$.  Thus, the measure of an $S_3$-segment that lies close to the center
  $x_0$ of the polar coordinate system may be attenuated by a large factor
  compared to the measure of a segment of the same Euclidean length that lies
  closer to $\bdry{K}$.
\end{enumerate}

For dealing with these two requirements together, we consider the outer ``stub''
of each needle, which is the part of the needle starting from $\bdry{K}$ up to a
$C_2/n$ distance along the needle, where $C_2$ is an appropriate factor that
depends upon the needle (see \cref{eq:36} and \cref{def:needle} for the formal
definition).  For an appropriate choice of $C_1$ and $C_2$, we can show that for
at least a constant fraction of needles (see the definition of good needles
in \cref{eq:31}), the following conditions are simultaneously satisfied:
\begin{enumerate}
\item The stub of the needle contains a non-zero volume of $S_1$.
\item A large fraction of the inner part of the stub (i.e, the part farthest
  from the boundary) is not in $S_1$.
\end{enumerate}
For a formal description, see \cref{eq:46,eq:55} in conjunction with
\cref{fig:needle}.  Together, these facts can be used to show that the inner
part of the stub contains a large segment of $S_3$ (see page~\pageref{eq:57}).
This achieves both the requirements above: the segment of $S_3$ found does not
lie too close to the boundary (because it is in the inner part of the stub), but
is not too far from the boundary either (because the stub as a whole is quite
close to $\bdry{K}$ by definition).

\paragraph{Mixing time for the $\MM_p$ chains} We then show in
\cref{sec:bounding-conductance} that the isoperimetric inequality above implies
a conductance lower bound for the $\MM_p$ chain, in accordance with the
intuition outlined for the definition of the $g_p$ metric.  Rapid mixing from a
cold start (\cref{cor:mp-mixing}) then follows immediately from standard theory.
In \cref{sec:mixing-time-from}, we show that the fine-grained information that
one obtains about the conductance profile of the $\MM_p$ chain can be used to
improve the mixing time from a fixed state by a factor of $\tilde{O}(n)$ over
what the vanilla mixing time result from a cold start (\cref{cor:mp-mixing})
would imply.  We also show in \cref{prop:conductacne-lower-bound} that the
conductance lower bound we obtain for the $\MM_{p}$ chain is tight up to a
logarithmic factor in the dimension.

\paragraph{Rapid mixing from cold start for coordinate hit-and-run} Finally, we
prove rapid mixing from cold start for the coordinate hit-and-run (CHR) walk in
\cref{sec:coordinate-hit-run}.  As described above, two different proofs were
recently given for the rapid mixing for this chain from a warm
start~\cite{laddha_convergence_2021,narayanan_srivastava_2022}, and in both of
them, the bottleneck that led to the requirement of a warm start was a part of
the argument that had to ``throw away'' a portion of $K$ close to $\bdry{K}$.
In \cref{sec:coordinate-hit-run}, we show that the conductance (even that of
arbitrarily small sets) of the CHR walk can be bounded from below in terms of
the conductance of the multiscale chain $\MM_\infty$ (\cref{thm:CHR-cond}). As
discussed above, the conductance of the latter can be bounded from below using
the isoperimetry result for the $g_\infty$ metric. Together, this gives a rapid
mixing result for the CHR walk from a cold start
(\cref{thm:chr-l2-mixing,thm:intro-chr}).  To prove \cref{thm:CHR-cond}, we
build upon the notion of \emph{axis-disjoint} sets introduced by Laddha and
Vempala~\cite{laddha_convergence_2021}, who had proved an
``$\ell_0$-isoperimetry'' result for such sets.  However, as discussed above,
their isoperimetry result gives non-trivial conductance lower bounds only for
sets of somewhat large volume.  This was in part due their result being based on
a (partial) tiling of $K$ by cubes of \emph{fixed} sidelength, thereby
necessitating the exclusion of a part of the volume of the body close to the
boundary.  The main technical ingredient underlying our result for the CHR chain
is a new $\ell_0$-isoperimetry result for axis-disjoint sets (\cref{theo:
  axis-disjoint isoperimetry}) that applies to sets of all sizes, and that
involves the conductance of the multiscale $\MM_\infty$ chain described above.

Finally, we show in \cref{thm:chr-mixing-point} that the mixing result for CHR
from a cold start can be extended to show that CHR mixes in polynomially many
steps even when started from a point that is not too close to the boundary.
Roughly speaking, the mixing time scales with $\log(R/\delta)$ where $\delta$ is
the $\ell_\infty$-distance of the starting point to the boundary of the body $K$
and where $K \subseteq R\cdot B_\infty$.  This extension formalizes the
intuition that after about $O(n \log n)$ steps of CHR (i.e., when the chain has
had the opportunity, with high probability, to have made a step in each of the
$n$ coordinate directions), the resulting distribution is close to a ``cold
start'' in the sense of the mixing result in \cref{thm:intro-chr}.

\subsection{Open problems}
We conclude the introduction with a discussion of some directions for future
work suggested by this work.  The natural question raised by the application of
the $\MM_\infty$ walk to the analysis of the coordinate hit-and-run walk is whether
the $\MM_p$ chains, or the notion of multiscale decompositions in general, can
be used to analyze the rapid mixing properties of other random walks on convex
sets.

Another natural question is whether tools similar to ours can be used to further
improve the bound obtained on the mixing time of CHR -- we do not believe our
bounds to be tight. We lose a factor of $n^2$ in transferring an isoperimetric inequality for
axis-disjoint sets to a conductance bound for coordinate hit-and-run. A factor
of $n^3$ is lost when using the conductance bound of the $\MM_\infty$ chain to
prove an isoperimetric inequality for axis-disjoint sets; in particular, it
might be possible to tighten this to $n$ by circumventing the usage of Harper's
Theorem in Case 2(b) of the proof of \Cref{theo: axis-disjoint isoperimetry}.
This factor of $n^3$ comes partly from an invocation of a result of Laddha and
Vempala~\cite{laddha_convergence_2021} (\cref{lemma: cube isoperimetry} below).
In a manuscript that appeared after this paper had been circulated, Fernandez
V~\cite{FV23} has reported an improvement of this latter result, leading to an
improvement from a loss of $n^3$ to a loss of only $n^2$ in this step.

While the conductance bound we obtain for the $\MM_\infty$ chain is tight up to logarithmic factors (\Cref{prop:conductacne-lower-bound}), this does not preclude the possibility of an alternate Whitney cube-like decomposition which allows a less lossy reduction from coordinate hit-and-run. At a high level, the $\MM_\infty$ Markov chain is related to coordinate hit-and-run because when a point is far away from the boundary of the body, it is likelier to make large steps. In particular, when the sidelength of the Whitney cube containing a point is large, coordinate hit-and-run tends to make steps that are also (at least) this large. However, this does not capture the fact that coordinate hit-and-run can make larger steps in some directions. This leads to the question of whether it is possible to construct so-called ``Whitney cuboids'', where the cuboid $Q$ containing a point $x$ is such that the sidelength of $Q$ in the $i$th direction is comparable to the distance of $x$ from the boundary in the $i$th direction (for comparison, the Whitney cubes we use have sidelengths comparable to the \emph{minimum} over $i$ of the distance of $x$ from the boundary in the $i$th direction).

Another avenue is to explore the conductance profile of the $\MM_{p}$ chains at
sets of small volume.  In particular, if the right hand side of \cref{eq:10} in
\cref{theo isoperimetry} could be strengthened further to have an additional
factor of $\sqrt{\log (\vol{K}/\vol{S_3})}$, one could hope to use ``average
conductance''~\cite{LK,MS01,KLM06} methods to improve the dependence upon the
dimension $n$ in the proof of our mixing time result for the $\MM_p$ chains
starting from a point (\cref{thm:mp-mixing-refined}).  For the CHR chain, our
analysis currently does not use the conductance profile information we have for
the $\MM_p$ chains: it uses only the worst case conductance.  Improving the
analysis to take advantage of this extra information could also be a potential
direction for future work.\footnote{We thank an anonymous reviewer for
  suggesting this possibility.}

An alternative to the approach of using isoperimetric inequalities for analyzing
mixing times for random walks on convex sets is suggested by an interesting
paper of Bubley, Dyer and Jerrum \cite{bubley_elementary_1998}, where a certain
gauge transformation is used to push forward the uniform measure on a convex set
on to a log concave measure supported on $\R^n,$ whereafter a
Metropolis-filtered walk is performed using Gaussian steps. The analysis of this
walk (which mixes in polynomial time from a cold start, or even from the image,
under the gauge transformation, of a fixed point not too close to the boundary)
proceeds via a coupling argument, and does not use the program of relating the
conductance of the chain to an isoperimetric inequality.  Such coupling
arguments have also been very successful in the analysis of a variety of Markov
chains on finite state spaces.  It would be interesting to explore if a coupling
based analysis can be performed for the $\MM_p$ random walks or for the CHR
random walk.  Another possible approach to attacking these questions on rapid
mixing could be the localization schemes framework of Chen and
Eldan~\cite{CE22}.

Another direction for investigation would be to make the implementation of each
step of $\MM_p$, especially in the case $p > 1$, more efficient.  In the current
naive implementation of a step of the $\MM_p$ chain (when $p > 1$) on polytopes
that is described in \cref{sec:finding-whitney-cube}, the distances of a given
point $x$ to all the facets of the polytope are computed in order to find the
Whitney cube which contains $x$. In principle, it may be possible to ignore far
away facets as has been done by Mangoubi and Vishnoi~\cite{Mangoubi} in the
context of the ball walk, leading to savings in the implementation time.

One can also ask whether the dependence in \Cref{thm:intro-chr} on the aspect
ratio $R/r$ can be removed to get a polynomial dependence on only $n$ (and $\log
(M/\epsilon)$). Prior works do this using rounding procedures that linearly
transform the body to make it ``well-rounded'' with a polynomial aspect
ratio~(see, e.g., p.~409 of \cite{lovasz_simulated_2006}). However, one reason one might want to use coordinate hit-and-run specifically, as opposed to other Markov chains that mix in polynomial time, is that the coordinate directions have special significance in the context of the application (which might make each step of the Markov chain cheaper to implement, say). In particular, if we apply a rotational transformation to the body, coordinate hit-and-run in the transformed body might become more expensive to implement. Nevertheless, one can ask whether it is possible to construct a rounding scheme such that coordinate hit-and-run steps remain easy to implement.

Finally, it is natural to ask if this analysis of the multiscale chains and
coordinate hit-and-run can be extended to sampling log-concave densities.
Indeed, since log-concave measures can be supported on all of $\R^n$, extending
the proof in this direction axis would require a redefinition of the
notion of Whitney cubes we
use. However, we note here that in the special case of a log-concave measure
supported on a convex body for which the logarithm of the density has a
Lipschitz constant bounded by a polynomial in the dimension $n$, one can obtain a
$1/\poly{n}$ lower bound on the conductance of a natural Markov chain that uses
a Metropolis filter after using one step of (say) the $\MM_1$ chain as the
proposal distribution. We leave as an open problem whether the condition
of boundedness of the Lipschitz constant of the log-density can be removed from
this argument.
\section{Preliminaries}

\subsection{Markov chains}
We follow mostly the Markov chain notation used by Lovász and
Simonovits~\cite{LS93}, which we reproduce here for reference. For the following
definitions, let $\MM$ be a Markov chain on a state space $\Omega$, and let
$P(\cdot, \cdot) = P_{\MM}(\cdot, \cdot)$ denote the transition kernel of the
Markov chain.  Let $\pi$ be the stationary distribution of the chain: this means
that for any measurable subset $A \subseteq \Omega$,
\begin{equation}
  \label{eq:90}
  \int\limits_{x \in \Omega}\pi(dx)P(x, A) = \pi(A).
\end{equation}
A Markov chain is said to be \emph{lazy} if for every $x \in \Omega$,
$P(x, {x}) \geq 1/2$.  All Markov chains we consider in this paper will be lazy.
\begin{definition}[Ergodic flow and reversible chains]
  Given measurable subsets $A$ and $B$ of $\Omega$, the \emph{ergodic flow}
  $\Psi_{\MM}(A, B)$ is defined as
  \begin{equation}
    \Psi_{\MM}(A, B) \defeq \int\limits_{x \in A}\pi(dx) P(x, B).\label{eq:5}
  \end{equation}
  Informally, the ergodic flow from $A$ to $B$ is the probability of landing in
  $B$ after the following process: first sample a point from $A$ with ``weight''
  proportional to $\pi$, and then take one step of the chain.

  We also denote $\Psi_{\MM}(A, \Omega \setminus A)$ as $\Psi_{\MM}(A)$. Note
  that for \emph{any} Markov chain on $\Omega$ with stationary distribution
  $\pi$, $\Psi_{\MM}(A) = \Psi_{\MM}(\Omega \setminus A)$ (see, e.g.,
  \cite[Section 1.C]{LS93}). $\MM$ is said to be \emph{reversible} with respect
  to $\pi$ if $\Psi_{\MM}(A, B) = \Psi_{\MM}(B, A)$ for all measurable
  $A, B \subseteq \Omega$.\end{definition}

\begin{definition}[Conductance] Given a measurable subset $A$ of $\Omega$, the
  \emph{conductance} $\Phi_\MM(A)$ of $A$ is defined as $\Psi_\MM(A)/\pi(A)$.
  The \emph{conductance} $\Phi_\MM$ of $\MM$ is defined as the infimum of
  $\Phi_\MM(S)$ over all measurable $S \subseteq \Omega$ such that
  $\pi(S) \leq \frac{1}{2}$:
  \begin{equation}
    \label{eq:94}
    \Phi_{\MM} \defeq \inf_{\substack{S \st \pi(S) \leq 1/2}}\Phi_\MM(S).
  \end{equation}

\end{definition}\begin{definition}[Conductance profile]

  For $\a \in (0, 1/2]$, we define the value $\Phi_{\a, \MM}$ of the
  \emph{conductance profile} of $\MM$ at $\a$ as the infimum of
  $\Phi_{\MM}(S)$ over all measurable $S \subseteq \Omega$ such that
  $\pi(S) \leq \a$.

\end{definition}
When the underlying chain $\MM$ is clear from the context, we will drop the
subscript $\MM$ from the quantities in the above definitions.

\begin{definition}[Density and warmth]
  Given probability distributions $\pi$ and $\nu$ on $\Omega$, we say that $\nu$
  has \emph{density} $f$ with respect to $\pi$ if there is a measurable function
  $f: \Omega \rightarrow [0, \infty)$ such that for every measurable subset $A$
  of $\Omega$,
  \begin{equation}
    \label{eq:95}
    \nu(A) = \int\limits_{x \in A}f(x)\pi(dx).
  \end{equation}
  We will also use the notation $f\pi$ to denote the probability distribution
  that has density $f$ with respect to $\pi$.  Note that this implicitly
  requires that $\E_{X \sim \pi}[f(X)] = \nu(\Omega) = 1$.

  A probability distribution $\nu$ is said to be \emph{$M$-warm} with respect to
  $\pi$ if it has a density $f$ with respect to $\pi$ such that $f(x) \leq M$
  for all $x \in \Omega$.  Note also that $\pi$ has as density the constant
  function $\mathbf{1}$ with respect to itself.
\end{definition}

Given a probability distribution $\pi$ on $\Omega$, one can define the norms
$L^p(\pi)$, $1 \leq p \leq \infty$ on the set of bounded measurable real valued
functions on $\Omega$ as follows:
\begin{equation}
  \label{eq:96}
  \norm[L^p(\pi)]{f} \defeq  \inp{\;\;\int\limits_{x \in \Omega}\abs{f(x)}^p\pi(dx)}^{1/p}
  = \E_{X \sim \pi}\insq{\abs{f(X)}^p}^{1/p}.
\end{equation}
We will need only the norms $L^1(\pi)$ and $L^2(\pi)$ in this paper. If $\nu$ has
density $f$ with respect to $\pi$, then the \emph{total variation distance}
$d_{TV}(\nu, \pi)$ between $\nu$ and $\pi$ can be written as
\begin{equation}
  \label{eq:97}
  d_{TV}(\nu, \pi) = \sup_{A \subseteq Q}\abs{\pi(A) - \nu(A)} = \frac{1}{2}\norm[L^{1}(\pi)]{f - \mathbf{1}}.
\end{equation}
From Jensen's inequality we also have that for every bounded measurable $f$,
\begin{equation}
  \label{eq:98}
  \norm[L^1(\pi)]{f} \leq \norm[L^2(\pi)]{f}.
\end{equation}
Corresponding to $L^2(\pi)$, we also have the inner product
\begin{equation}
  \label{eq:99}
  \ina{f, g}_{L^2(\pi)} \defeq \int\limits_{x \in \Omega}f(x)g(x)\pi(dx) = \E_{X \sim \pi}[f(X)g(X)],
\end{equation}
so that $\ina{f, f}_{L^2(\pi)} = \norm[L^2(\pi)]{f}^2$.  Note that any Markov
chain $\MM$ can be seen as a linear operator acting on probability measures
$\nu$ on $\Omega$ as
\begin{equation}
  \label{eq:100}
  (\nu\MM)(A) \defeq \int\limits_{x\in\Omega}P_{\MM}(x, A)\nu(dx),\;\text{ for every measurable $A \subseteq \Omega$,}
\end{equation}
and also on real valued function $f$ on $\Omega$ as
\begin{equation}
  \label{eq:101}
  (\MM{}f)(x) \defeq \int\limits_{y\in\Omega}f(y)P_{\MM}(x, dy),\;\text{ for every $x \in \Omega$.}
\end{equation}
When $\MM$ is \emph{reversible} with respect to $\pi$, we have (see, e.g.,
\cite[eq.~(1.2)]{LS93})
\begin{equation}
  \label{eq:102}
  \ina{\MM{}f, g}_{{L^2(\pi)}} =   \ina{f, \MM{}g}_{{L^2(\pi)}},
\end{equation}
and also that the probability distribution $\nu\MM$ has density $\MM{}f$
with respect to $\pi$ when the probability distribution $\nu$ has density $f$
with respect to $\pi$.  We will need the following result of Lovász and
Simonovits~\cite{LS93} connecting the mixing properties of reversible chains to
their conductance (the result builds upon previous work of Jerrum and Sinclair
for finite-state Markov chains~\cite{jerrum_conductance_1988}).

\begin{lemma}[\textbf{\cite[Corollary 1.8]{LS93}}]\label{lem:ls-l2-mixing}
  Suppose that the lazy Markov chain $\MM$ on $\Omega$ is reversible with
  respect to a probability distribution $\pi$ on $\Omega$.  Let $\nu_{0}$ have
  density $\eta_{0}$ with respect to $\pi$, and define $\eta_{t}$ to be the
  density of the distribution $\nu_{t} = \nu_0\MM^{t}$ obtained after $t$ steps
  of the Markov chain starting from the initial distribution $\nu_0$.  Then
  \begin{displaymath}
    \norm[L^{2}(\pi)]{\eta_{t} - \mathbf{1}}^2
    \leq \inp{1 - \frac{\Phi^2}{2}}^{2t}\norm[L^{2}(\pi)]{\eta_{0} - \mathbf{1}}^2,
  \end{displaymath}
  where $\Phi$ is the conductance of $\MM$.
\end{lemma}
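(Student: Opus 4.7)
The strategy is to reduce the claim to a one-step $L^2(\pi)$ contraction of $\MM$ on the mean-zero subspace, and then prove that contraction via a Cheeger-type inequality relating the Dirichlet form of $\MM$ to the conductance $\Phi$. Set $g_0 \defeq \eta_0 - \mathbf{1}$. Since $\E_{X \sim \pi}[\eta_0(X)] = \nu_0(\Omega) = 1$, the function $g_0$ is orthogonal to $\mathbf{1}$ in $L^2(\pi)$. Using the fact (noted just after \eqref{eq:102}, and a consequence of reversibility) that $\nu\MM$ has density $\MM\eta$ with respect to $\pi$ whenever $\nu$ has density $\eta$, together with $\MM\mathbf{1} = \mathbf{1}$, one obtains $\eta_t - \mathbf{1} = \MM^t g_0$, which remains mean-zero for every $t$. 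Hence it suffices to establish the one-step bound
\[
\norm[L^2(\pi)]{\MM g} \leq \inp{1 - \tfrac{\Phi^2}{2}} \norm[L^2(\pi)]{g}
\]
for every mean-zero $g \in L^2(\pi)$, since iterating $t$ times and then squaring both sides yields exactly the claimed inequality.

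\textbf{Spectral setup.} I would then record two structural properties of $\MM$ as an operator on $L^2(\pi)$. By \eqref{eq:102}, $\MM$ is self-adjoint, and Jensen's inequality applied to $(\MM f)(x) = \int f(y)\, P(x, dy)$ together with the stationarity identity \eqref{eq:90} gives $\norm[L^2(\pi)]{\MM f} \leq \norm[L^2(\pi)]{f}$; hence $\sigma(\MM) \subseteq [-1, 1]$. The laziness hypothesis allows us to decompose the kernel as $P(x, \cdot) = \tfrac{1}{2}\delta_x + \tfrac{1}{2}Q(x, \cdot)$ for a Markov kernel $Q$ that inherits reversibility with respect to $\pi$; the corresponding operator identity $\MM = \tfrac{1}{2} I + \tfrac{1}{2} \mathcal{Q}$ together with $\mathcal{Q} \succeq -I$ shows $\MM \succeq 0$. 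Thus $\sigma(\MM) \subseteq [0, 1]$, and the desired contraction is equivalent to the bound $\lambda_1(\MM|_{\mathbf{1}^\perp}) \leq 1 - \Phi^2/2$ on the mean-zero subspace.

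\textbf{Main obstacle: the Cheeger-type inequality.} The technical heart of the proof, and what I expect to be the main obstacle, is the estimate: for every mean-zero $f \in L^2(\pi)$,
\[
\ina{(I - \MM)f,\, f}_{L^2(\pi)} \;=\; \tfrac{1}{2} \int\!\!\int (f(x) - f(y))^2\, P(x, dy)\, \pi(dx) \;\geq\; \tfrac{\Phi^2}{2}\, \norm[L^2(\pi)]{f}^2.
\]
The identity on the left is an immediate reorganization using reversibility and the definition of ergodic flow in \eqref{eq:5}. For the inequality, I would follow the continuous analogue of the Jerrum--Sinclair/Lawler--Sokal co-area strategy: decompose $f = f_+ - f_-$; parametrize by superlevel sets $A_t = \{f_\pm > t\}$; apply the conductance bound $\Psi(A_t) \geq \Phi \min\{\pi(A_t), 1 - \pi(A_t)\}$ to each level set; integrate the resulting pointwise estimate in $t$; and apply Cauchy--Schwarz to turn the linear-in-$\Phi$ bound into the quadratic $\Phi^2/2$ bound. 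Combined with $\MM \succeq 0$ and self-adjointness, this yields $\lambda_1(\MM|_{\mathbf{1}^\perp}) \leq 1 - \Phi^2/2$, hence the one-step contraction, and therefore the theorem.
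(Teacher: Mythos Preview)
Your proposal is correct and follows the standard spectral/Cheeger route for reversible chains. The paper's proof, by contrast, is a two-line black-box argument: it applies \cite[Corollary~1.8]{LS93} directly with $f = \eta_0 - \mathbf{1}$ and $T = 2t$ to obtain
\[
\ina{\eta_0 - \mathbf{1},\, \MM^{2t}(\eta_0 - \mathbf{1})}_{L^2(\pi)} \leq \inp{1 - \tfrac{\Phi^2}{2}}^{2t}\norm[L^2(\pi)]{\eta_0 - \mathbf{1}}^2,
\]
and then uses self-adjointness (\cref{eq:102}) to rewrite the left-hand side as $\norm[L^2(\pi)]{\MM^t(\eta_0 - \mathbf{1})}^2 = \norm[L^2(\pi)]{\eta_t - \mathbf{1}}^2$. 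The cited result in \cite{LS93} actually holds for lazy chains without assuming reversibility; the paper only uses reversibility at this last identification step. Your route instead re-derives the spectral gap $1 - \lambda_1 \geq \Phi^2/2$ via the Dirichlet-form/co-area argument and then uses positivity of $\MM$ to get the one-step contraction; this is more self-contained but ends up reproving the content that the paper simply cites.
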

\begin{proof}
  Note that since $\MM$ is reversible with respect to $\pi$, the density
  $\eta_{t}$ of $\nu_t = \nu_0\MM^{t}$ with respect to $\pi$ is
  $\MM^{t}\eta_{0}$.  We now apply Corollary 1.8 of \cite{LS93} with $f$ in the
  statement of that corollary set to $\eta_0 - \mathbf{1}$, and $T$ set to $2t$.
  This ensures that $\E_{X \sim \pi}[f(X)] = 0$.  The corollary then gives
  \begin{equation}
    \label{eq:103}
    \ina{\eta_0- \mathbf{1}, \MM^{2t}(\eta_0 - \mathbf{1})}
    \leq \inp{1 - \frac{\Phi^2}{2}}^{2t}\norm[L^{2}(\pi)]{\nu_{0} - \mathbf{1}}^2.
  \end{equation}
  Now, by reversibility of $\MM$, we get (from \cref{eq:102}) that
  \begin{equation}
    \label{eq:104}
    \ina{\eta_0- \mathbf{1}, \MM^{2t}(\eta_0 - \mathbf{1})} = \ina{\MM^{{t}}(\eta_0- 1), \MM^{t}(\eta_0 - 1)}.
  \end{equation}
  The claim now follows since, as observed above, the reversibility of $\MM$
  implies that $\eta_t = \MM^t\eta_{0}$.
\end{proof}
\subsection{Geometric facts}
\paragraph{Notation} For any subset $S \subseteq \R^{n}$ we will denote by
$S^\circ$ its open interior (i.e., the union of all open sets contained in $S$),
and by $\bdry{S}$ the \emph{boundary} of $S$, defined as
$\bar{S} \setminus S^{\circ}$, where $\bar{S}$ is the closure of $S$.  Note that
$\bdry{S} \subseteq S$ if and only if $S$ is closed.  A \emph{convex body} in
$\R^{n}$ is a closed and bounded convex subset of $\R^{n}$ that is not contained
in any proper affine subspace of $\R^{n}$. By a \emph{cuboid} in $\R^n$, we mean
an $n$-dimensional hyper-rectangle, and by a \emph{cube} in $\R^n$ we mean an
$n$-dimensional hyper-rectangle all whose sides have the same length.  In this
paper, the cubes and cuboids we consider will typically be \emph{axis-aligned}:
their sides will all be parallel to the coordinate axes.

We will also need the following standard fact.

  \begin{lemma}
    \label{lem: l1 concave}
    Fix $p \geq 1$ and let $K$ be any convex body.  Then, the function
    $f : K \to \R$ defined by $f(x) = \dist_{\ell_p}(x,\partial K)$ is
    concave.
  \end{lemma}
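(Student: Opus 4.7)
The plan is to use the standard characterization of the distance to the boundary of a convex set as the supremum radius of an inscribed $\ell_p$-ball, and then to exploit the convexity of $K$ together with the symmetry of $B_p$ around the origin.

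First, I would establish that for every $x \in K$,
\[
  f(x) \;=\; \sup\bigl\{r \geq 0 \,:\, x + r B_p \subseteq K\bigr\},
\]
where $B_p$ denotes the closed unit $\ell_p$-ball. The ``$\geq$'' direction is immediate: if $x + r B_p \subseteq K$, then every $y \in \partial K$ satisfies $\|y - x\|_p \geq r$. For the ``$\leq$'' direction, note that whenever $r < f(x)$, the open $\ell_p$-ball of radius $r$ around $x$ meets neither $\partial K$ nor $\R^n \setminus K$ (any straight segment from $x \in K$ to a point outside $K$ would have to cross $\partial K$), hence is contained in $K$; taking $r \uparrow f(x)$ then gives $x + f(x) B_p \subseteq K$ by closedness of $K$.

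Next, given $x, y \in K$ and $\lambda \in [0,1]$, write $r = f(x)$ and $s = f(y)$, and set $z = \lambda x + (1-\lambda) y$ and $t = \lambda r + (1-\lambda) s$. I would show $z + t B_p \subseteq K$, which by the characterization above yields $f(z) \geq t = \lambda f(x) + (1-\lambda) f(y)$, i.e.\ concavity. To see the inclusion, take an arbitrary $u \in B_p$; then
\[
  z + t u \;=\; \lambda(x + r u) + (1-\lambda)(y + s u),
\]
and since $x + r u \in x + r B_p \subseteq K$ and $y + s u \in y + s B_p \subseteq K$, the convexity of $K$ places $z + tu$ in $K$.

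There is essentially no hard step here; the only point that requires a moment of care is the symmetry of $B_p$ (which is what allows the same vector $u$ to appear in both summands $x + ru$ and $y + su$ above), and the verification of the inscribed-ball characterization of the distance to $\partial K$. Both are standard, and the proof works uniformly for every $p \in [1,\infty]$.
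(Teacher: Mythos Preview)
Your proof is correct and is essentially the same argument as the paper's: both show that the ball $z + (\lambda f(x) + (1-\lambda)f(y))B_p$ is contained in $K$ by writing each point $z + tu$ as the convex combination $\lambda(x + f(x)u) + (1-\lambda)(y + f(y)u)$. Your formulation is slightly cleaner in that it explicitly states the inscribed-ball characterization of $f$ and thereby avoids the paper's separate treatment of the degenerate case $f(x) = f(y) = 0$.
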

  \begin{proof}
    Consider $x, y \in K$ such that $f(x) = \dist_{{\ell_p}}(x, \bdry{K}) = a$
    and $f(y) = \dist_{{\ell_p}}(y, \bdry{K}) = b$.
    If both $a$ and $b$ are zero then there is nothing to prove. 
    Otherwise, for any
    $\lambda \in (0, 1)$, let $z = \lambda x + (1-\lambda)y$.  Now, for any
    vector $v$ such that $\norm[p]{v} \leq \lambda a + (1-\lambda)b$, we have
    $z + v = \lambda x' + (1-\lambda)y'$ where
    $x' \defeq x + \frac{a}{\lambda a + (1-\lambda)b}\cdot v$ and
    $y' \defeq y + \frac{b}{\lambda a + (1-\lambda)b}\cdot v$.  By construction,
    $\dist_{\ell_p}(x, x') \leq a$ and $\dist_{\ell_{p}}(y, y') \leq b$ so that
    $x', y'$, and therefore $z + v$ are all elements of $K$. Since $v$ was an
    arbitrary vector with $\norm[p]{v} \leq \lambda a + (1-\lambda)b$, this
    shows that
    $f(z) = \dist_{{\ell_p}}(z, \bdry{K}) \geq \lambda a + (1-\lambda)b =
    \lambda f(a) + (1-\lambda)f(b)$.
\end{proof}

  In the proof of \cref{thm:mp-mixing-refined}, we will need the following
  well-known direct consequence of Cauchy's surface area formula (see, e.g.,
  \cite{tsukerman_brunn-minkowski_2017}).
\begin{proposition}\label{prop:cauchy-surface}
  Let $K$ and $L$ be convex bodies in $\R^n$ such that $K \subseteq L$.  Then
  $\vol[{{n-1}}]{\bdry{K}} \leq \vol[{{n-1}}]{\bdry{L}}$.
\end{proposition}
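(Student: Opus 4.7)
The plan is to invoke Cauchy's surface area formula directly and then use the obvious monotonicity of projections under set inclusion. Recall that for a convex body $C \subseteq \R^n$, Cauchy's formula expresses the surface area as
\[
\vol[{{n-1}}]{\bdry{C}} \;=\; \frac{1}{\vol[{{n-1}}]{B_2^{n-1}}} \int_{\bS^{n-1}} \vol[{{n-1}}]{P_u C}\, d\sigma(u),
\]
where $P_u$ denotes orthogonal projection onto the hyperplane $u^{\perp}$ and $\sigma$ is the (unnormalized) surface measure on the unit sphere. I would state this formula (citing, e.g., \cite{tsukerman_brunn-minkowski_2017}) as the only nontrivial input.

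The second step is the trivial observation that orthogonal projection is a linear (hence set-monotone) map. Since $K \subseteq L$, we have $P_u K \subseteq P_u L$ for every $u \in \bS^{n-1}$, and both $P_u K$ and $P_u L$ are convex subsets of the hyperplane $u^{\perp} \cong \R^{n-1}$. Monotonicity of Lebesgue measure on $\R^{n-1}$ then gives
\[
\vol[{{n-1}}]{P_u K} \;\leq\; \vol[{{n-1}}]{P_u L} \qquad \text{for every } u \in \bS^{n-1}.
\]

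The third and final step is simply to integrate this pointwise inequality over $\bS^{n-1}$ against $\sigma$ and divide by $\vol[{{n-1}}]{B_2^{n-1}}$. Applying Cauchy's formula on both sides yields $\vol[{{n-1}}]{\bdry{K}} \leq \vol[{{n-1}}]{\bdry{L}}$, as desired. There is no real obstacle here: all the content is packed into Cauchy's formula itself, after which the proof is a one-line monotone integration argument. (One could alternatively derive the inequality from the monotonicity of mixed volumes via the Kubota-type formula $\vol[{{n-1}}]{\bdry{C}} = n V(C,\dots,C,B_2)$ and the monotonicity of mixed volumes in each argument, but the projection-formula route is the most elementary.)
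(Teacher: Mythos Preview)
Your proof is correct and is exactly the argument the paper has in mind: the paper does not give a proof but simply cites the result as a ``well-known direct consequence of Cauchy's surface area formula,'' which is precisely the projection-formula plus monotonicity argument you wrote out.
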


\section{Whitney decompositions}
\label{sec:whitney-cubes}
Hassler Whitney introduced a decomposition of an open set in a Euclidean space
into cubes in a seminal paper \cite{Whitney}.  The goal of this work was to
investigate certain problems involving interpolation. Such decompositions were
further developed by Calder\'{o}n and Zygmund \cite{CZ}. For more recent uses of
decompositions of this type, see Fefferman \cite{Feff0} and Fefferman and
Klartag \cite{Feff}. We begin with the procedure for constructing a Whitney
decomposition of $K^\circ$, \ie the interior of $K$, for the $\ell_p$-norm,
along the lines of Theorem 1, page 167 of \cite{Stein}.

As in the statement of \cref{thm:mp-mixing-intro} we assume that
$K \subseteq \{x \st \|x\|_\infty < R_\infty\}$ where $R_\infty < 1$ is a
positive real, and that $K \supseteq \{x \st \|x\|_p < r_p\}$ for some positive
real $r_p$.  The assumption $R_{\infty} < 1$ is made for notational
convenience, and can be easily enforced by scaling the body if necessary. We
discuss in a remark following \cref{theo 1.1} below as to how to remove this
assumption.

Consider the lattice of points in $\R^n$ whose coordinates are
integral. This lattice determines a mesh $\cube_0$, which is a collection of
cubes: namely all cubes of unit side length, whose vertices are points of the
above lattice. The mesh $\cube_0$ leads to an infinite chain of such meshes
$\{\cube_k\}^\infty_{0}$, with $\cube_k = 2^{-k} \cube_0$. Thus, each cube in
the mesh $\cube_k$ gives rise to $2^n$ cubes in $\cube_{{k+1}}$ which are termed
its \emph{children} and are obtained by bisecting its sides. The cubes in the
mesh $\cube_k$ each have sides of length $2^{-k}$ and are thus of
$\ell_p$-diameter $n^{\frac{1}{p}}2^{-k}$.

We now inductively define sets $\FF_{i} = \FF_{i}^{(p)}$, $i \geq 0$ as follows.  Let $\FF_0$
consist of those cubes $Q \in \cube_0$ for which
$\dist_{\ell_p}(\cntr(Q), K) \leq \frac{n^{1/p}}{2}$. Fix $\la = 1/2$. A cube
$Q$ in $\FF_k$ is subdivided into its children in $\cube_{k+1}$ if
\begin{equation}
  \la\dist_{\ell_p}(\cntr(Q), \partial K) < \diam_{\ell_p}(Q),\label{eq:Whitney}
\end{equation}
which are then declared to belong to $\FF_{k+1}$. Otherwise $Q$ is not divided and its children are not in $\FF_{k+1}$.

Let $\FF = \FF^{(p)} = \{Q_1, Q_2, \ldots, Q_k, \ldots\}$ denote the set of all
cubes $Q$ such that \ben \item There exists a $k$ for which
$Q \in \FF_k = \FF_{k}^{{(p)}}$ but the children of $Q$ do not belong to
$\FF_{k+1} = \FF_{{k+1}}^{{(p)}}$.
\item $\cntr(Q) \in K^\circ$. \een We will refer to $\FF^{(p)}$ as a
  \emph{Whitney decomposition} of $K$, and the cubes included in $\FF^{(p)}$ as
  \emph{Whitney cubes}.  In our notation, we will often suppress the dependence
  of $\FF^{(p)}$ on the underlying $\ell_p$ norm when the value of $p$ being
  used is clear from the context.  The following theorem describes the important
  features of this construction.
\begin{restatable}{theorem}{whitneythm}
\label{theo 1.1}
Fix $p$ such that $1 \leq p \leq \infty$. Let $R_{\infty} < 1$ and let $K
\subseteq R_{\infty}\cdot
B_{\infty}$ be a convex body. Then, the following statements hold true for the
Whitney decomposition $\FF = \FF^{{(p)}}$ of $K$.  \ben
\item \label{item:Whitney-cubes} $\bigcup_{Q \in \FF} Q = K^\circ$.  Further, if
  $Q \in \FF$, then $Q \not\in \cube_0$.
\item \label{item:disjoint-Whitney} The interiors $Q_k^\circ$ are mutually disjoint.
\item \label{item: diameter of cube center}For any Whitney cube $Q \in \FF$,
\[ 2\diam_{\ell_p}(Q) \le \dist_{\ell_p}(\cntr(Q),\Kb) \le \frac{9}{2} \diam_{\ell_p}(Q).\]
\item \label{item:diamater-of-cube} For any Whitney cube
  $Q \in \FF$ and $y \in Q$,
\[ \frac{3}{2}\diam_{\ell_p}(Q)\leq \dist_{\ell_p}(y, \Kb) \leq 5 \diam_{\ell_p}(Q).\]
In particular, this is true when $\dist_{\ell_p}(y,\Kb) = \dist_{\ell_p}(Q,\Kb)$.
\item\label{item:side-length-ratio} The ratio of sidelengths of any two abutting
  cubes lies in $\inb{1/2,1,2}$.
\een
\end{restatable}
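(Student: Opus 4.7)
The core estimate is item~\ref{item: diameter of cube center}; the remaining items essentially follow from it. First, I would dispose of the second half of item~\ref{item:Whitney-cubes} ($\FF \cap \cube_0 = \emptyset$) so that the analysis can focus on $k \geq 1$: for any $Q \in \cube_0$ with $\cntr(Q) \in K^\circ$, the assumption $K \subseteq R_\infty B_\infty \subseteq R_\infty n^{1/p} B_p$ with $R_\infty < 1$ implies (by considering any chord of $K$ through the center) $\dist_{\ell_p}(\cntr(Q), \partial K) \leq R_\infty n^{1/p} < 2 n^{1/p} = 2 \diam_{\ell_p}(Q)$, so the rule (\ref{eq:Whitney}) triggers and the children of $Q$ land in $\FF_1$, disqualifying $Q$ from membership in $\FF$. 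With that settled, the lower bound of item~\ref{item: diameter of cube center} is immediate: if $Q \in \FF \cap \FF_k$ then $Q$ was not subdivided, whence $\dist_{\ell_p}(\cntr(Q),\partial K) \geq (1/\la)\diam_{\ell_p}(Q) = 2\diam_{\ell_p}(Q)$. For the upper bound I would use that $Q$ has a parent $Q' \in \FF_{k-1}$ (necessarily $k \geq 1$) which was subdivided, giving $\dist_{\ell_p}(\cntr(Q'), \partial K) < 4\diam_{\ell_p}(Q)$. A direct computation shows the child-to-parent center displacement equals $(\pm s/2, \ldots, \pm s/2)$ where $s$ is the sidelength of $Q$, so $\|\cntr(Q) - \cntr(Q')\|_p = \diam_{\ell_p}(Q)/2$, and the triangle inequality then yields $\dist_{\ell_p}(\cntr(Q), \partial K) \leq (9/2)\diam_{\ell_p}(Q)$.

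Items~\ref{item:diamater-of-cube} and~\ref{item:side-length-ratio} follow quickly from item~\ref{item: diameter of cube center}. For the former, combining $\|y - \cntr(Q)\|_p \leq \diam_{\ell_p}(Q)/2$ for $y \in Q$ with item~\ref{item: diameter of cube center} gives the two-sided estimate by the triangle inequality. For the latter, I would argue by contradiction: if two abutting cubes $Q_1, Q_2 \in \FF$ have sidelength ratio at least $4$ with $Q_1$ larger, then any point $y$ on their shared facet would satisfy $\dist_{\ell_p}(y, \partial K) \geq (3/2)\diam_{\ell_p}(Q_1) \geq 6\diam_{\ell_p}(Q_2)$ from item~\ref{item:diamater-of-cube} applied to $Q_1$, and simultaneously $\dist_{\ell_p}(y, \partial K) \leq 5\diam_{\ell_p}(Q_2)$ from item~\ref{item:diamater-of-cube} applied to $Q_2$, a contradiction. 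Since all sidelengths are powers of $2$, the ratio is pinned to $\{1/2, 1, 2\}$.

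For the covering claim of item~\ref{item:Whitney-cubes}, I would fix $x \in K^\circ$ and track the nested sequence $Q^{(k)} \in \cube_k$ of dyadic cubes containing $x$. The containment $Q^{(0)} \in \FF_0$ follows because $\dist_{\ell_p}(\cntr(Q^{(0)}), K) \leq \|\cntr(Q^{(0)}) - x\|_p \leq n^{1/p}/2$. Inductively, either $Q^{(k)} \in \FF_k$ is subdivided by (\ref{eq:Whitney}) (placing $Q^{(k+1)}$ into $\FF_{k+1}$) or it is not; since $\diam_{\ell_p}(Q^{(k)}) \to 0$ while $\dist_{\ell_p}(\cntr(Q^{(k)}), \partial K)$ remains bounded below near the positive constant $\dist_{\ell_p}(x, \partial K)$ for large $k$, the descent must terminate at some $k^* \geq 1$ with $Q^{(k^*)} \in \FF_{k^*}$ not subdivided. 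To finish, I would verify $\cntr(Q^{(k^*)}) \in K^\circ$: the lower bound of item~\ref{item: diameter of cube center} ensures $\dist_{\ell_p}(\cntr(Q^{(k^*)}), \partial K) > 0$, and the center cannot lie in the exterior of $K$ because any segment from $x \in K$ to such a point would cross $\partial K$ within distance $\diam_{\ell_p}(Q^{(k^*)})$ of the center, contradicting that lower bound. Item~\ref{item:disjoint-Whitney} is then a standard dyadic fact: two overlapping dyadic cubes must nest, and a proper dyadic ancestor of some $Q' \in \FF$ cannot itself be in $\FF$ because it would have had to be subdivided on the descent to $Q'$.

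The main obstacle I foresee is carefully orchestrating the covering proof of item~\ref{item:Whitney-cubes}: one must simultaneously maintain $Q^{(k)} \in \FF_k$ along the descent, establish termination at a finite $k^*$, and verify that the terminal cube satisfies the third defining condition $\cntr(Q^{(k^*)}) \in K^\circ$ of $\FF$. The remainder of the proof is essentially bookkeeping built atop the key estimate in item~\ref{item: diameter of cube center}.
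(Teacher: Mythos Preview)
Your proposal is correct and follows essentially the same approach as the paper: both arguments hinge on item~\ref{item: diameter of cube center} (lower bound from ``$Q$ not subdivided'', upper bound from ``parent of $Q$ was subdivided'' plus the center-displacement bound $\|\cntr(Q)-\cntr(Q')\|_p = \tfrac{1}{2}\diam_{\ell_p}(Q)$), then derive items~\ref{item:diamater-of-cube} and~\ref{item:side-length-ratio} by triangle inequalities, and handle item~\ref{item:Whitney-cubes} by tracking the nested dyadic chain containing $x$ until subdivision stops. The only cosmetic differences are that the paper orders item~\ref{item:Whitney-cubes} first and deduces $\FF\cap\cube_0=\emptyset$ from the already-proved containment $\bigcup_{Q\in\FF}Q\subseteq K^\circ$ (since $K$ cannot contain a unit cube), whereas you argue this directly up front; and the paper's contradiction for $\cntr(Q^{(k^*)})\in K^\circ$ goes via $\dist_{\ell_p}(\cntr(Q),K)$ rather than a chord-crossing argument, but both are equivalent.
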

The proof of this theorem can be found in \cref{sec:prop-whitn-cubes}.
\begin{remark}
  For notational simplicity, we described the construction of Whitney cubes
  above under the assumption that $K \subseteq R_\infty\cdot B_{\infty}$ with
  $R_\infty < 1$.  However, it is easy to see that this assumption can be done
  away with using a simple scaling operation.  If $R_\infty > 1$, let $2^a$ be
  the smallest integral power of two that is larger than $R_\infty$.  For any
  $p$ such that $1 \leq p \leq \infty$, denote by $\FF^{(p)}(K/2^{a})$ the
  Whitney decomposition of the scaled body $K/2^{a}$ (which can be constructed
  as above since $R_{\infty}/2^k<1$).  Now scale each cube in the decomposition
  $\FF^{(p)}(K/2^{a})$ up by a factor of $2^a$, and declare this to be the
  Whitney decomposition $\FF^{(p)}$ of $K$.  Since only linear scalings are
  performed, all properties guaranteed by \cref{theo 1.1} for
  $\FF^{(p)}(K/2^{a})$ remain true for $\FF^{(p)}$, except possibly for the
  property that unit cubes $Q \in \mathcal{Q}_0$ do not belong to $\FF^{(p)}$.
  Henceforth, we will therefore drop the requirement that $K$ has to be strictly
  contained in $B_{\infty}$ for it to have a Whitney decomposition
  $\FF^{(p)}$.
\end{remark}

\section{Markov chains on Whitney decompositions}
\label{sec:markov-chain}

Fix a convex body $K$ as in the statement of \cref{theo 1.1}, and a $p$ such
that $1 \leq p \le \infty$.  We now proceed to define the Markov chain $\MM_p$.

\paragraph{The state space and the stationary distribution} The state space of
the chain $\MM_p$ is the set $\FF = \FF^{(p)}$ as in the statement of \cref{theo
  1.1}.  The stationary distribution $\pi$ is defined as
\begin{equation}
  \label{eq:1}
  \pi(Q) \defeq \frac{\vol{Q}}{\vol{K}} \text{ for every $Q \in \FF$.}
\end{equation}
\paragraph{Transition probabilities}
In describing the transition rule below, we will assume that given a point $x$
which lies in the interior of an unknown cube $Q$ in $\FF^{{(p)}}$, we can
determine $Q$.  The details of how to algorithmically perform this operation are
discussed in the next section.

The transition rule from a cube $Q \in \FF$ is a lazy Metropolis filter,
described as follows.  With probability $1/2$ remain at $Q$. Else, pick a
uniformly random point $x$ on the boundary of $Q$. \Cref{item:diamater-of-cube}
of \cref{theo 1.1} implies that $x$ is in the interior $K^{o}$ of
$K$. Additionally, pick a point $x'$ such that
$\|x'-x\|_2 = \frac{\sidelen{Q}}{4}$ and $x'-x$ is parallel to the unique
outward normal of the face that $x$ belongs to.  With probability $1$, there is
a unique abutting cube $Q' \in \FF$ which also contains $x$.  By
\cref{item:side-length-ratio} of \cref{theo 1.1}, $Q'$ is also characterised by
being the unique cube in $\FF$ that contains $x'$ in its interior. If this
abutting cube $Q'$ has side length greater or equal to $Q$, then transition to
$Q'$.  Otherwise, do the following: with probability
$\frac{\sidelen{Q'}}{\sidelen{Q}}$ accept the transition to $Q'$ and with the
remaining probability remain at $Q$.

We now verify that this chain is reversible with respect to the stationary
distribution $\pi$ described in \cref{eq:1}.  Let $P(Q, Q') = P_{\MM_p}(Q, Q')$
denote the probability of transitioning to $Q' \in \FF$ in one step, starting
from $Q \in \FF$.  We then have
\begin{equation}
  \label{eq:2}
  P(Q, Q') = \frac{1}{2}
  \cdot \frac{
    \vol[n-1]{\bdry{Q} \cap \bdry{Q'}}
  }{
    \vol[n-1]{\bdry{Q}}
  }\cdot
  \min\inb{1,
    \frac{
      \sidelen{Q'}
    }{
      \sidelen{Q}
    }
  }.
\end{equation}
We thus have (since $\sidelen{Q}\cdot\vol[n-1]{Q} = 2n\cdot\vol[n]{Q}$)
\begin{align}
  \label{eq:3}
  \pi(Q)P(Q, Q')
  &= \frac{
    \vol{Q}
    }{
    \vol{K}
    }\cdot \frac{1}{2}
    \cdot \frac{
    \vol[n-1]{\bdry{Q} \cap \bdry{Q'}}
    }{
    \vol[n-1]{\bdry{Q}}
    }\cdot
    \min\inb{1,
    \frac{
    \sidelen{Q'}
    }{
    \sidelen{Q}
    }
    }\\
  &= \frac{1}{4n}
    \cdot
    \frac{
    \vol[n-1]{\bdry{Q} \cap \bdry{Q'}}
    }{
    \vol{K}
    }\cdot
    \min\inb{
    \sidelen{Q'},
    \sidelen{Q}
    }\\
  &= \pi(Q')P(Q', Q), \text{ by its symmetry in $Q$ and $Q'$}.
\end{align}

\subsection{Finding the Whitney cube containing a given point}
\label{sec:finding-whitney-cube}
The above description of our Markov chain assumed that we can determine the
Whitney cube $q \in \FF$ that a point $z \in K^\circ$ is contained in.  We only
needed to do this for points $z$ that are not on the boundary of such cubes, so
we assume that $z$ is contained in the interior of $q$.  In particular, this
implies that $q$ is uniquely determined by $z$ (by
\cref{item:Whitney-cubes,item:disjoint-Whitney} of \cref{theo 1.1}).

Suppose that $\sidelen{q} = 2^{-b}$, where $b$ is a currently unknown
non-negative integer.  Note that since $z$ lies in the interior of $q$, the
construction of Whitney cubes implies that given $b$ and $z$, $q$ can be
uniquely determined as follows: round each coordinate of the point $2^{b}z$ down
to its integer floor to get a vertex $v \in \Z^{n}$, and then take $q$ to be the
unique axis-aligned cube of side length $2^{{-b}}$ with center at
$2^{{-b}}(v + (1/2)\vec{1})$.  It thus remains to find $b$.

Assume now that we have access to an ``$\ell_p$-distance inequality oracle'' for
$K$, which, on input a point $x \in K$ and an algebraic number $\gamma$ answers
``YES'' if
\begin{equation*}
  \dist_{\ell_p}(x, \Kb) > \gamma
\end{equation*}
and ``NO'' otherwise, along with an ``approximate $\ell_{p}$-distance oracle'',
which outputs an $2^{\pm 0.01}$-factor multiplicative approximation $\tilde{d}$ of
$d(x) \defeq \dist_{{\ell_p}}(x, \Kb)$ for any input $x \in K^{o}$.  When
$p = 1$, such oracles can be efficiently implemented for any convex body $K$
with a well-guaranteed membership oracle.  However, they may be hard to
implement for other $p$ unless $K$ has special properties. We discuss this issue
in more detail in \cref{sec:ell_p-dist-oracl} below: here we assume that we have
access to such $\ell_p$-distance oracles for $K$.

Now, since $\tilde{d}$ is a $2^{\pm 0.01}$-factor multiplicative approximation
of $d(x)$, \cref{item:diamater-of-cube} of \Cref{theo 1.1} implies that
\begin{equation*}
  2^{-0.01}\cdot\frac{1}{5} \cdot \frac{\tilde{d}}{n^{1/p}} \le \sidelen{q}
  \le \frac{2}{3} \cdot\frac{\tilde{d}}{n^{1/p}}\cdot2^{0.01}.
\end{equation*}
Since $\sidelen{q} = 2^{-b}$, this gives
\begin{equation}
  \label{eq:84}
  \ceil{
    \log_2 \left( \frac{3n^{1/p}}{2d} \right)
    - 0.01 
  }\le
  b
  \le
  \floor{
    \log_2\inp{
      \frac{3n^{1/p}}{2d}
    }
    + \log_2\inp{
      \frac{10}{3}
    }
    + 0.01
  }.
\end{equation}
Let $b_{\min}$ and $b_{\max}$ be the lower and upper bounds in
\cref{eq:84}. Note that the range $[b_{\min},b_{\max}]$ has at most two
integers. We try both these possibilities for $b$ in decreasing order, and check
for each possibility whether the corresponding candidate $q$ obtained as in the
previous paragraph is subdivided in accordance with \cref{eq:Whitney}.  By the
construction of Whitney cubes, the first candidate $q$ that is \emph{not}
subdivided is the correct $q$ (and least one of the candidate cubes is
guaranteed to pass this check).  Note that this check requires one call to the
$\ell_p$-distance inequality oracle for $K$.

\subsubsection{\texorpdfstring{$\ell_p$-distance oracles for $K$}{lp distance
    oracles for K}}
\label{sec:ell_p-dist-oracl}
When $p = 1$, the distance oracle can be implemented to $O(2^{-L})$ precision as
follows. Given a point $x \in K^\circ$, consider for each canonical basis vector
$e_i$ and each sign $\sigma = \pm 1$, the ray
$\{y \st (y - x) \in \R_+ \cdot \sigma e_i\}$. The intersection $y_{i, \sigma}$
of this ray with the boundary of the convex set can be computed to a precision
of $O(2^{-L})$ using binary search and $L + O(1)$ calls to the the membership
oracle. The $\ell_1$ distance to the complement of $K$ from $x$ equals
$\min(\|y_{i, \sigma}- x\|_1 \st i \in [n], \sigma \in \{-1,1\})$, provided all
the $\|y_{i, \sigma} - x\|_1$ are finite and the point $x$ is not in $K$
otherwise.  The implementation of the $\ell_1$-distance inequality oracle also
follows from the same consideration: for $x \in K$,
$\dist_{\ell_1}(x, \bdry{K}) > \gamma$ if and only if all of the points
$\inb{x + \sigma \gamma e_{i} \st 1 \leq i \leq n, \sigma \in \inb{-1,1}}$ are
in $K^\circ$.

When $p >1$, and $K$ is an arbitrary convex body, there is a non-convex
optimization involved in computing the $\ell_p$-distance.  However, for
polytopes with $m$ faces with explicitly given constraints, the following
procedure may be used.

We compute the $\ell_p$-distance to each face and then take the minimum. These
distances have a closed form expression given as follows. Let $K$ be the
intersection of the halfspaces $H_i$, where $H_i$ is given by
$\{y \st a_i \cdot (y - x) \leq 1\}$. The $\ell_p$-distance of $x$ to
$\R^n \setminus H_i$ is given by \beq \inf\limits_{y \in \R^n\setminus
  H_i} \norm[p]{y - x} = \norm[q]{a_i}^{-1},\label{eq:4-nov-1-hari}\eeq for
$1/p + 1/q = 1$.  To see (\ref{eq:4-nov-1-hari}), note that for any $a_i$,
equality in $\|y - x\|_p \cdot \|a_i\|_q \geq 1$, can be achieved by some $y$ in
$\overline {\R^n\setminus H_i}$ by the fact that equality in H\"{o}lder's
inequality is achievable for any fixed vector $a_i$.

\paragraph{A note on numerical precision} Since we are only concerned with walks
that run for polynomially many steps, it follows as a consequence of the fact
that the ratios of the side lengths of abutting cubes lie in
$\{\frac{1}{2}, 1, 2\}$ (\cref{item:side-length-ratio} of \cref{theo 1.1}) that
the distance to the boundary cannot change in the course of the run of the walk
by a multiplicative factor that is outside a range of the form
$[\exp(n^{-C}), \exp(n^C)]$, where $C$ is a constant.  Due to this, the number
of bits needed to represent the side lengths of the cubes used is never more
than a polynomial in the parameters $n, R/r$ and $M$ in
\cref{thm:mp-mixing-intro}, and thus $L$ in the description above can also be
chosen to be $\poly{n, R/r, \log(M)}$ in order to achieve an ``approximate
$\ell_1$ distance oracle'' of the form considered in
\cref{sec:ell_p-dist-oracl}.

\section{Analysis of Markov chains on Whitney decompositions}
\label{sec:an-isop-ineq}
\subsection{An isoperimetric inequality}
In this subsection, we take the first step in our strategy for proving a lower
bound on the conductance of the $\ell_p$-multiscale chain $\MM_{p}$, which is to
equip $K$ with a suitable metric and prove an isoperimetric inequality for the
corresponding metric-measure space coming from the uniform measure on $K$. We
then relate (in \cref{sec:bounding-conductance}) the conductance of the chain to
the isoperimetric profile of the metric-measure space.

The metric we introduce is a kind of degenerate Finsler metric, in which the
norms on the tangent spaces are rescaled versions of $\ell_\infty$, by a factor
of $\dist_{\ell_p}(x,\partial K)^{-1}$ so that the distance to the boundary of
$K$ in the local norm is always greater than
$\Omega\left(n^{-\frac{1}{p}}\right)$. In order to prove the results we need on
the isoperimetric profile, we need to lower bound the volume of a tube of
thickness $\delta$ around a subset $S_1$ of $K$ whose measure is less than
$1/2$. This is done by considering two cases. First, if $S_1$ has a strong
presence in the deep interior of $K,$ we look at the intersection of $S_1$ with
an inner parallel body, and get the necessary results by appealing to existing
results of Kannan, Lov\'{a}sz, and Montenegro \cite{KLM06}. The case when $S_1$
does not penetrate much into the deep interior of $K$ constitutes the bulk of
the technical challenge in proving this isoperimetric inequality.  We handle
this case by using a radial needle decomposition to fiber $S_1,$ and then
proving on a significant fraction of these needles (namely those given by
\cref{eq:31}) an appropriate isoperimetric inequality from which the desired
result follows.  We now proceed with the technical details.

Equip $K$ with a family of Minkowski functionals
$F_p: K^\circ \times \R^n \rightarrow \R_+$, $p \geq 1$, defined by
\begin{equation}
  \label{eq:25}
  F_p(x, v) \defeq (\dist_{\ell_p}(x,\partial K))^{-1} \|v\|_\infty
\end{equation}
for each $x \in K^\circ$ and $v \in \R^n$.  Note that each $F_p$ is a continuous
map that satisfies $F_p(x, \alpha v) = \abs{\alpha} F_p(x, v)$ for each
$x \in K^\circ, v \in \R^n$ and $\alpha \in \R$. Given this, the length
$\len_{g_p}(\gamma)$ (for each $p \geq 1$) of any piecewise continuously differentiable curve
$\gamma: [0, 1] \rightarrow K^\circ$, is defined as
\begin{equation}
  \label{eq:15}
  \len_{g_p}(\gamma) \defeq \int\limits_0^1{F_p(\gamma(t), \gamma'(t))} \; dt.
\end{equation}
(Note that the length of a curve defined as above does not change if the curve
is re-parameterized.)  This defines a metric on $K^\circ$ as usual: for
$x, y \in K^\circ$,
\begin{equation}
  \label{eq:16}
  \dist_{g_p}(x, y) \defeq \inf_{\gamma} \len_{g_p}(\gamma),
\end{equation}
where the infimum is taken over all piecewise continuously differentiable curves
$\gamma: [0, 1] \rightarrow K^\circ$ satisfying $\gamma(0) = x$ and $\gamma(1) = y$.

We are now ready to state our isoperimetric inequality.

\begin{theorem}
  \label{theo isoperimetry}
  There exist absolute positive constants $C_{0}$, $C_1$ and $C_2$ such that the
  following is true.  Let $K$ be a convex body such that
  $r_{p}B_p \subseteq K \subseteq R_{\infty}B_{\infty}$.  Let $S_1,S_2,S_3$ be a
  partition of $K$ into three parts such that $\dist_{g_p}(S_1,S_2) > \delta$, and
  $\vol{S_1} \leq \frac{1}{2} \vol{K}$. Define
  $\rho_p \defeq r_p/R_{\infty} \leq 1$. Then, for $\delta \leq 1$, we have the
  following:
if $\vol{S_1} > \exp(-C_{0}n) \cdot \vol{K}$ then
    \begin{align}
      \vol{S_3}
      & \geq C_1\cdot\frac{\rho_p}{n}
        \cdot \delta
        \cdot \vol{S_1}
        \cdot \log\inp{1 + 0.9\frac{\vol{K}}{\vol{S_1}}}.\label{eq:24}
    \end{align}
    and if $\vol{S_1} \leq \exp(-C_{0}n)\cdot \vol{K}$ then
    \begin{align}
      \vol{S_3}
      & \geq C_2\cdot{\rho_p}
        \cdot \delta
        \cdot \vol{S_1}.\label{eq:10}
    \end{align}
  \end{theorem}

  In the proof of \cref{theo isoperimetry}, we will need to consider needles
  analogous to those that appear in the localization lemma of \cite{LS93}. As
  described above, however, we will have to analyze such needles in detail in
  part 2 of the proof. We therefore proceed to list some of their properties
  that will be needed in the proof.
  
  \begin{definition}[\textbf{Needle}]\label{def:needle} Fix some $x_0 \in K$
    such that
    $\dist_{\ell_p}(x_0,\partial K) = \max_{x \in K} \dist_{\ell_p}(x,\partial
    K) \geq r_p$.  By a \emph{needle}, we mean a set
    \[ N_{u} \defeq K \cap \left\{ x_0 + tu \st t \ge 0 \right\} \] where
    $u \in \bS^{n-1}$ is a unit vector. Also define
    $\ell_2(N_u) \defeq \sup\{ t \st x_0 + tu \in K \}$ and in general,
    $\ell_p(N_u) \defeq \ell_2(N_u) \cdot \|u\|_p$.
  \end{definition}
  Note that by the choice of $x_{0}$, ${\ell_p(N) \ge r_p}$ for any $p$ and
  any needle $N$. Similarly, we also have $\ell_{\infty}(N) \leq 2R_{\infty}$.

  Let $\mathcal{N}$ denote the set of all needles.  Clearly, $\mathcal{N}$ is in
  bijection with $\bS^{n-1}$, and we will often identify a needle with the
  corresponding element of $\bS^{{n-1}}$. Let $\sigma$ denote the uniform (Haar)
  probability measure on $\bS^{n-1}$, and $\omega_n$ the $(n-1)$-dimensional
  surface area of $\bS^{n-1}$.  Then, for any measurable subset $S$ of $K$, we
  can use a standard coordinate transformation to polar coordinates followed
  by Fubini's theorem (see, e.g., \cite[Corollary 2.2]{burgisser13:_condit} and
  \cite[Theorem 3.12]{coarea}) to write
  \begin{align}
    \vol{S}
    &= \int\limits_{x \in \R^n \setminus \inb{x_0}}I_S(x) dx \label{eq:73}\\
    &=\omega_{n}\int\limits_{r = 0}^{\infty}
      \int\limits_{\hat{u} \in \bS^{n-1}}
      I_{S}(x_{0} + r \hat{u}) r^{{n-1}} dr \,\sigma(d\hat{u})\label{eq:74}\\
    &=\frac{\omega_{n}}{n}
      \int\limits_{\hat{u} \in \bS^{n-1}}\ell_{2}\inp{{N_{\hat{u}}}}^{n}
      \cdot \mu_{{N_{\hat{u}}}}(S \cap {N_{\hat{u}}}) \sigma(du),\label{eq:75}
  \end{align}
  where for any needle $N_{\hat{u}}$, the probability measure
  $\mu_{{N_{\hat{u}}}}$ on $N_{\hat{u}}$ is defined as
  \begin{align}
    \label{eq:72}
    \mu_{{N_{\hat{u}}}}(S \cap {N_{\hat{u}}})
    &\defeq
      \frac{n}{\ell_{2}\inp{N_{{\hat{u}}}}^{n}}
      \cdot \int\limits_{r = 0}^{\infty}I_{S}(x_{0} + r \hat{u}) r^{{n-1}} dr\\
    &=
      \frac{n}{\ell_{2}\inp{N_{{\hat{u}}}}^{n}}
      \cdot \int\limits_{r = 0}^{\ell_2(N_{\hat{u}})}I_{S}(x_{0} + r \hat{u})
      r^{{n-1}} dr
      \quad
      \text{for every measurable } S \subseteq K.
  \end{align}
  More generally, Fubini's theorem also yields the following.  Let $A$ be a
  Haar-measurable subset of $\bS^{n-1}$, $S$ a measurable subset of $K$, and set
  \begin{equation}
    \label{eq:76}
    T = S \cap \bigcup_{\hat{u} \in A}N_{\hat{u}}.
  \end{equation}
  Then $T$ itself is measurable and
  \begin{align}
    \frac{\omega_{n}}{n}
    \int\limits_{\hat{u} \in \bS^{{n-1}}}I_{A}(\hat{u})\cdot \ell_{2}\inp{{N_{\hat{u}}}}^{n}
    \cdot \mu_{{N_{\hat{u}}}}(S \cap {N_{\hat{u}}}) \sigma(d\hat{u})
    &=\omega_{n}\int\limits_{r = 0}^{\infty}
      \int\limits_{\hat{u} \in \bS^{n-1}}
      I_{A}(\hat{u}) I_{S}(x_{0} + r \hat{u}) r^{{n-1}} dr \sigma(d\hat{u})\label{eq:77}\\
    &=\omega_{n}\int\limits_{r = 0}^{\infty}
      \int\limits_{\hat{u} \in \bS^{n-1}}
      I_{T}(x_{0} + r \hat{u}) r^{{n-1}} dr \sigma(d\hat{u})\label{eq:78}\\
    &= \vol{T}.\label{eq:79}
  \end{align}

  The following alternative description of $\mu_{N_{{\hat{u}}}}$ will be useful.
  Let us label the points in $N = N_{{\hat{u}}}$ by their $\ell_p$ distance,
  along $N$, from the boundary $\bdry{K}$: thus $x_0$ is labelled $\ell_p(N)$.  In
  what follows, we will often identify, without comment, a point $x \in N$ with
  its label $\nlab[N](x) \in [0, \ell_p(N)]$. Note that
  $\nlab[N](x_{0} + t\hat{u}) = \ell_{p}(N)\cdot(1 - t/\ell_{2}(N))$.  By a
  slight abuse of notation, we will denote the inverse of the bijective map
  $\nlab[N]$ by $N$.  Thus, the label of the point $N(x) \in N$, where
  $x \in [0, \ell_p(N)]$, is $x$.  The pushforward of $\mu_{N}$ under the
  bijective label map $\nlab[N]$ is then a probability measure on
  $[0, \ell_{p}(N)]$ with the density
  \begin{equation}
    \label{eq:42}
    \tilde{\mu}_{N}(x) = \frac{n}{\ell_p(N)} \inp{1 - \frac{x}{\ell_p(N)}}^{n-1}.
  \end{equation}

  We are now ready to begin with the proof of \cref{theo isoperimetry}.
  \begin{proof}[Proof of \cref{theo isoperimetry}]
Let $c_1 \leq c_2 < 1$ and $\beta < \alpha \leq 1/2$ be positive constants
    to be fixed later.  The proof is divided into two parts, based on the value
    of $\Pr[x \sim S_1]{\dist_{\ell_p}(x,\partial K) \ge c_1 r_p/n}$.
\paragraph*{Part 1: }
Suppose that $\Pr[x \sim S_1]{\dist_{\ell_p}(x,\partial K) \ge c_1 r_p/n} \geq \beta$.\\
In this case, let
    \begin{equation}
      K' \defeq \{ x \in K :
      \dist_{\ell_p}(x,\partial K) \geq c_1 r_p/n
      \}.\label{eq:83}
  \end{equation}
  \Cref{lem: l1 concave} then implies that $K'$ is convex, and further that $x_0
  + \inb{\inp{1 - c_1/n}(z - x_0) \st z \in K} \subseteq
  K'$.  This inclusion implies that
    \begin{equation}
      \label{eq:17}
      \vol{K'} \geq 0.95\,\vol{K}
    \end{equation}
    {provided $c_1 \leq 0.05$}. Now, by the assumption in this part,
    \begin{equation}
      \label{eq:19}
      \frac{\vol{S_1 \cap K'}}{\vol{S_1}} = \p_{x \sim S_1}[x \in K'] \ge \beta.
    \end{equation}
    If $\vol{S_3} \ge \frac{1}{2} \vol{S_1}$, then we already have the required
    lower bound on the volume of $S_3$.  So we assume that
    $\vol{S_3} \le \frac{1}{2} \vol{S_1}$, and get
    $\vol{S_1 \cup S_3} \leq \frac{3}{4}\vol{K}$ (where we use the fact that
    $\vol{S_1} \leq \frac{1}{2}\vol{K}$). From \cref{eq:17}, we therefore get
    \begin{equation}
      \label{eq:18}
      \vol{S_2 \cap K'} \ge \vol{K'}  - \vol{S_1 \cup S_3} \ge \frac{1}{8} \vol{K}.
    \end{equation}
    Note also that since $\dist_{\ell_p}(x, \bdry{K}) \geq c_1r_p/n$ for every
    $x \in K'$, we have
    \begin{equation}
      \label{eq:20}
      \dist_{\ell_\infty}(S_1 \cap K', S_2 \cap K')
      \geq \frac{c_1r_p}{n}
      \cdot \dist_{g_p}(S_1 \cap K', S_2 \cap K')
      \geq \frac{c_1r_p}{n}
      \cdot\dist_{g_p}(S_1, S_2) \geq \frac{c_1r_p\delta}{n}.
    \end{equation}
The isoperimetric constant of $K'$ can now be bounded from below using a
    ``multiscale'' isoperimetric inequality of Kannan, Lovász and Montenegro
    (Theorem 4.3 of \cite{KLM06}), applied with the underlying norm being the
    $\ell_\infty$ norm.\footnote{ The inequality in \cite{KLM06} is stated for
      distance and diameters in $\ell_2$ only, but exactly the same proof works
      when the distance and the diameters are both in $\ell_\infty$ (or in any
      $\ell_p$ norm), because the final calculation is on a straight line, just
      as is the case for \cite{LS93}.  See \cref{sec:isop-ineq-kann} for
      details.}  Applying this result to $K'$, along with
    \cref{eq:20,eq:19,eq:18} we get
    \begin{align}
      \label{eq:21}
      \vol{S_3}
      &\ge  \vol{S_3 \cap K'}\\
      &\ge \frac{c_1r_p \delta}{2nR_\infty}\cdot
        \frac{\vol{S_1\cap K'}\vol{S_2\cap K'}}{\vol{K'}}
        \cdot \log\inp{
        1 +
        \frac{\vol{K'}^2}{\vol{S_1\cap K'}\vol{S_2\cap K'}}
        }\\
      &\geq \frac{\beta c_1r_p \delta}{16nR_\infty}\cdot
        \vol{S_1}
        \cdot \log\inp{
        1 +
        \frac{\vol{K'}}{\vol{S_1}}
        }\\
      &\geq \frac{\beta c_1r_p \delta}{16nR_\infty}\cdot
        \vol{S_1}
        \cdot \log\inp{
        1 +
        \frac{9}{10}\cdot\frac{\vol{K}}{\vol{S_1}}
        }\text{, using \cref{eq:17}.}\label{eq-isoperimetry-inner-multi-scale}
    \end{align}

\paragraph*{Part 2: }
Suppose now that $\Pr[x \sim S_1]{\dist_{\ell_p}(x,\partial K) \geq c_1 r_p /n} < \beta$.\\
In this case, for any needle $N$, define the sets
    \begin{align}
      N_{in}
      &\defeq
        N \cap \inb{x \st \dist_{\ell_p}(x, \bdry{K}) > c_1r_p/n}
        \text{, and}\label{eq:47}\\
      N_{out}
      &\defeq
        N \cap \inb{x \st \dist_{\ell_p}(x, \bdry{K}) \leq c_1r_p/n},\label{eq:48}
    \end{align}
    and consider the set $B_{\alpha, c_1}$ of \emph{\sbad{}} needles, defined
    as a subset of $\bS^{n-1}$ as
    \begin{equation}
      B_{\alpha, c_1} \defeq \inb{\hat{u} \st 
        \mu_{N_{\hat{u}}}\inp{S_1 \cap N_{\hat{u},in}}
        \geq \alpha \cdot \mu_{N_{\hat{u}}}(N_{\hat{u}} \cap S_1)}.\label{eq:27}
    \end{equation}
    Note that $B_{{\alpha, c_{1}}}$ is a measurable subset of $\bS^{{n-1}}$.
    Integrating this inequality over all such needles using the formula in
    \cref{eq:79}, we get
    \begin{align}
      \alpha \cdot \vol{
        S_{1} \cap \bigcup_{\hat{u} \in B_{\alpha, c_1}}N_{\hat{u}}
      }
      & \leq \vol{
        S_1
        \cap
        \inb{
          x \st
          \dist_{\ell_p}(x, \bdry{K}) > {c_1r_p}/{n}
        }
        \cap
        \bigcup_{\hat{u} \in B_{\alpha, c_1}}N_{\hat{u}}
      }\\
      &\leq \vol{
        S_1
        \cap
        \inb{
          x \st
          \dist_{\ell_p}(x, \bdry{K}) > {c_1r_p}/{n}
        }
      }.\label{eq:28}
    \end{align}
    By our assumption for this case, we have
    \begin{equation}
      \label{eq:29}
      \frac{\vol{
        S_1
        \cap
        \inb{
          x \st
          \dist_{\ell_p}(x, \bdry{K}) > {c_1r_p}/{n}
        }}
      }{
        \vol{S_1}
      }
      = \Pr[x \sim S_1]{\dist_{\ell_p}(x,\partial K) > c_1 r_p /n}
      < \beta.
    \end{equation}
Substituting \cref{eq:29} in \cref{eq:28} gives
    \begin{equation}
      \label{eq:30}
      \vol{S_1 \cap \bigcup_{\hat{u} \in B_{\alpha, c_1}}N_{\hat{u}}}
      \leq \frac{\beta}{\alpha} \cdot \vol{S_1}.
    \end{equation}
    which shows that when {$\beta$ is small enough compared to $\alpha$}, a
    point sampled randomly from $S_1$ is unlikely to land in \asbad{} needle.
Let $G_{\alpha, c_1}$ be the set of \emph{good} needles defined as follows
    (again, as a subset of $\bS^{n-1}$):
    \begin{equation}
      \label{eq:31}
      G_{\alpha, c_1}
      \defeq
      \bS^{n-1} \setminus B_{\alpha,c_1}
      = \left\{\hat{u} \in \bS^{n-1}  \st
        \mu_{N_{\hat{u}}} \inp{N_{\hat{u}, in} \cap S_1}
        < \alpha\mu_{N_{\hat{u}}}(N_{\hat{u}} \cap S_1) \right\}  .
\end{equation}
\Cref{eq:30} then gives that
\begin{equation}
      \vol{S_1 \cap \bigcup_{\hat{u} \in G_{\alpha, c_1}}N_{\hat{u}}} \ge \left( 1 -
        \frac{\beta}{\alpha} \right) \vol{S_1}.\label{eq:33}
    \end{equation}
Our goal now is to show that for every $\hat{u} \in G_{\alpha, c_1}$, we
    have for the needle $N = N_{\hat{u}}$
    \begin{equation}
      \label{eq:32}
      \mu_N(S_3 \cap N) \geq C \cdot \mu_N(S_1 \cap N)
    \end{equation}
    for some $C = C(K)$.  Indeed, given \cref{eq:32}, we get that (identifying
    each needle with the corresponding element in $\bS^{n-1}$ for ease of
    notation)
    \begin{align}
      \vol{S_3}
      &\ge \vol{S_3 \cap \bigcup_{N \in G_{\alpha,c_1}} N} \\
      &\stackrel{\text{\cref{eq:79}}}{=}
        \frac{\omega_n}{n} \int_{G_{\alpha,c_1}} \ell_2(N)^n \cdot \mu_N(S_3 \cap N) \sigma(dN) \\
      &\stackrel{\text{\cref{eq:32}}}{\ge}
        C \cdot \frac{\omega_n}{n} \int_{G_{\alpha,c_1}} \ell_2(N)^n \mu_N(S_1 \cap N) \sigma(dN) \\
      &\stackrel{\text{\cref{eq:79}}}{=}
        C \cdot \vol{S_1 \cap \bigcup_{N \in G_{\alpha,c_1}} N} \\
&\stackrel{\text{\cref{eq:33}}}{\ge}
        C \left( 1 - \frac{\beta}{\alpha} \right) \vol{S_1}. \label{eq:35}
    \end{align}
We now proceed to the proof of \cref{eq:32}.  Given a needle
    $N \in G_{\alpha, c_1}$ as above and $c_2 \geq c_1$, define \stub[c_2]{N} as
    \begin{equation}
      \label{eq:36}
\stub[c_2]{N} \defeq \inb{N(x) \st x \in [0, c_2{\ell_p(N)}/n]}.
    \end{equation}
    In particular, for every $\gamma \in [0, 1]$ we have
    \begin{equation}
      \label{eq:43}
      \frac{\gamma}{2} \leq \mu_N(\stub[\gamma]{N}) = 1 - \inp{1 -\frac{\gamma}{n}}^n \leq \gamma.
    \end{equation}
    Note also that since $N(\ell_p(N)) = x_0$ with $\dist_{\ell_p}(x_0,\partial K) \geq r_p$, and
    $N(0) \in \bdry{K}$, \cref{lem: l1 concave} implies that
    \begin{equation}
      \label{eq:44}
      \dist_{\ell_p}\inp{N\inp{\frac{\gamma\ell_p(N)}{n}}, \bdry{K}}
      \geq \frac{\gamma r_p}{n}.
    \end{equation}
    More generally, the concavity of $\dist_{\ell_p}(\cdot, \bdry{K})$ along
    with the fact that $\dist_{\ell_p}(x_0, \bdry{K}) \geq r_p > c_1r_p/n$ implies
    that the labels of the points in the sets $N_{in}$ and $N_{out}$ form a
    partition of the interval $[0, \ell_p(N)]$ into disjoint intervals, with
    $x_0 = N(\ell_p(N)) \in N_{in}$ and $N(0) \in N_{out}$.  Further, from
    \cref{eq:44}, we get that
    \begin{align}
      \label{eq:50}
      N_{out} &\subseteq N([0, c_1\ell_p(N)/n)]) \subseteq \stub[c_2]{N},\;\text{and}\\
      N_{{in}}&\supseteq N((c_{1}\ell_{p}(N)/n, \ell_{p}(N)]).\label{eq:59}
    \end{align}
    whenever {$c_1 \leq c_2$}.
\paragraph{Some estimates} We now record some estimates that follow directly
    from the above computations. Recall that {$c_{1} \leq c_{2}$}.  Let
    $\tilde{N}$ denote $\stub[c_2]{N}$.  Similarly, define (see
    \cref{fig:needle})
    \begin{equation}
      \label{eq:60}
      \tilde{N}_{{out}} \defeq \stub[c_1]{N},
      \qquad \text{and} \qquad
      \tilde{N}_{{in}} \defeq \stub[c_2]{N} \setminus \stub[c_1]{N}.
    \end{equation}
    \begin{figure}[t]
      \centering
      \includegraphics[width=\textwidth]{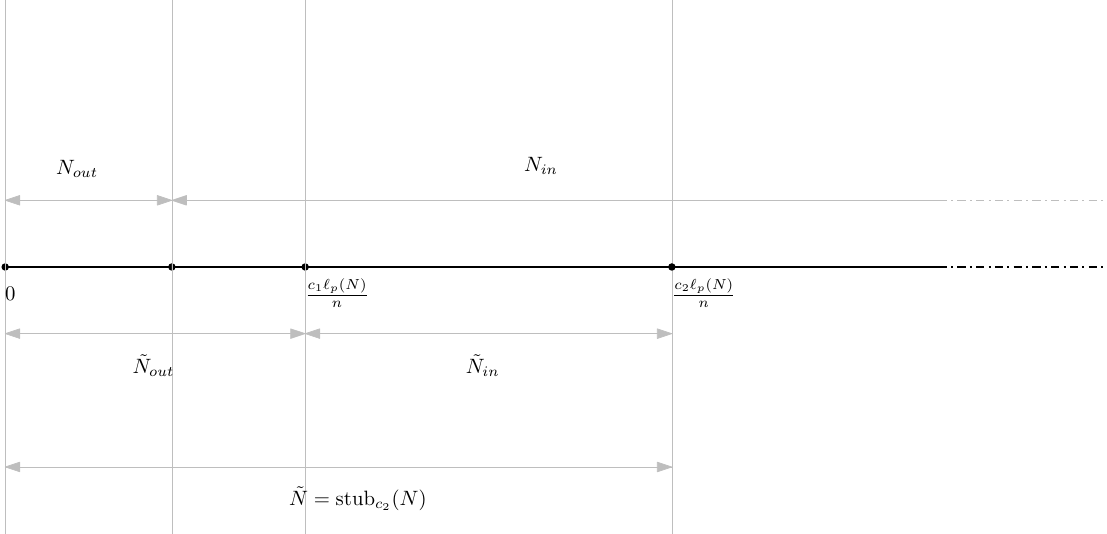}
      \caption{Various parts of a needle}
      \label{fig:needle}
    \end{figure}
    From \cref{eq:50,eq:59}, we then get that
    \begin{equation}
      \label{eq:61}
      N_{{out}} \subseteq \tilde{N}_{{out}} 
      \qquad \text{and} \qquad
      \tilde{N}_{{in}} \subseteq \tilde{N} \cap N_{in}.
    \end{equation}
    We then have
    \begin{align}
      \label{eq:51}
      \mu_N(N_{out})
      &\stackrel{\text{\cref{eq:61}}}{\leq}
        \mu_{N}(\tilde{N}_{out})
        \stackrel{\text{\cref{eq:43}}}{\leq}
        c_1,\;\\
      \label{eq:52}
      \mu_N(N_{in})
      &\stackrel{\text{\cref{eq:51}}}{\geq}
        1 - c_1,\; \text{since $N_{in}$ and $N_{out}$ form a partition of $N$, and}\\
      \label{eq:53}
      \mu_N(N_{in}\cap \tilde{N})
      &\stackrel{\text{\cref{eq:61}}}{\geq}
        \mu_{N}(\tilde{N}_{in})
        = \mu_N(\tilde{N}) - \mu_N(\tilde{N}_{out}) 
        \stackrel{\text{\cref{eq:43}}} \geq \frac{c_2}{2} - c_1.
    \end{align}
    Now, when $N \in G_{\alpha, c_{1}}$, we also have
    \begin{align}
      \label{eq:54}
      \mu_{N}(\tilde{N}_{in} \cap S_{1})
      \stackrel{\text{\cref{eq:61}}}{\leq}
      \mu_{N}(N_{{in}} \cap \tilde{N} \cap S_{1})
      &\leq \alpha, \;\text{and}\\
      \label{eq:55}
      \mu_{N}(N_{{in}} \cap \tilde{N} \cap (S_{2} \cup S_{3}))
      \stackrel{\text{\cref{eq:61}}}{\geq}
      \mu_{N}(\tilde{N}_{{in}} \cap (S_{2} \cup S_{3}))
      &\geq \frac{c_{2}}{2} - c_{1} - \alpha.
    \end{align}
    Here, \cref{eq:54} follows from \cref{eq:31} because
    $N \in G_{{\alpha, c_{1}}}$.  \Cref{eq:55} then follows from
    \cref{eq:53,eq:54}.  Along with the fact that $N_{{in}}$ and $N_{out}$ form
    a partition of $N$, \cref{eq:31} for $N \in G_{{\alpha, c_{1}}}$ also yields
    \begin{equation}
      \label{eq:56}
      \mu_{N}(S_{1} \cap N) < \frac{\mu_{N}(S_{1} \cap N_{out})}{1-\alpha}
      \leq \frac{\mu_{N}(N_{out})}{1-\alpha}
      \stackrel{\text{\cref{eq:51}}}{\leq} \frac{c_{1}}{1-\alpha}.
    \end{equation}

\paragraph{Proving \cref{eq:32}.} Since the conclusion of \cref{eq:32} is trivial
    when $\mu_N(S_1 \cap N) = 0$, we assume that $\mu_N(N \cap S_1) > 0$. As
    $N \in G_{\alpha, c_1}$, \cref{eq:56} yields
    \begin{equation}
      \label{eq:46}
      \mu_{N}(\tilde{N}_{out} \cap S_{1})
      \stackrel{\text{\cref{eq:61}}}{\geq}
      \mu_N\inp{N_{out} \cap S_1}
      \stackrel{\text{\cref{eq:56}}}{>}
      (1-\alpha)\mu_N(N \cap S_1)
      > 0.
    \end{equation}
    We now have two cases.
    \begin{description}
    \item[{Case 1:}] $\mu_{N}(\tilde{N}_{in} \cap S_{2}) = 0$.  In
      this case, \cref{eq:55} implies that
      \begin{equation}
        \label{eq:57}
        \mu_{N}(S_{3}\cap N) \geq \mu_{N}(S_{3} \cap \tilde{N}_{in})
        \geq \frac{c_{2}}{2} - c_{1} - \alpha.
      \end{equation}
      Combined with \cref{eq:56}, this gives
      \begin{equation}
        \label{eq:58}
        \mu_{N}(S_{3}\cap N) \geq (1-\alpha)\cdot\frac{c_{2} - 2(c_{1} + \alpha)}{2c_{1}}\cdot\mu_{N}(S_{1} \cap N).
      \end{equation}
    \item[{Case 2:}] $\mu_{N}(\tilde{N}_{in} \cap S_{2}) > 0$.  In this case, we
      define
      \begin{equation}
        t' \defeq \inf\inb{x \st N(x\ell_{p}(N)/n) \in \tilde{N}_{in} \cap S_{2}}
        = \inf\inb{x \in (c_{1}, c_{2}) \st N(x\ell_{p}(N)/n) \in   S_{2}}.\label{eq:62}
      \end{equation}
      and note that the assumption for the case means that $t'$ exists and
      satisfies $t' \geq c_{1}$.  We then define
      \begin{equation}
        \label{eq:49}
        s \defeq \sup\inb{x < t' \st N(x\ell_{p}(N)/n) \in S_{1}}.
      \end{equation}
      It follows from \cref{eq:46} that $s$ is well defined.  Again using
      \cref{eq:46} followed by the definition of $s$, we then have
      \begin{equation}
        \label{eq:63}
        (1-\alpha)\mu_{N}(S_{1} \cap N)
        \leq \mu_{N}(S_{1} \cap \tilde{N}_{{out}})
        \leq \mu_{N}(\stub[s]{N})
        \stackrel{\text{\cref{eq:43}}}{\leq}
        s.
      \end{equation}
      Now, define
      \begin{equation}
        t \defeq \inf\inb{x > s \st N(x\ell_{p}(N)/n) \in S_{2}}.\label{eq:66}
      \end{equation}
      Note that $s \leq t \leq t'$, and the open segment of $N$ between the
      points $N(s\ell_{p}(N)/n)$ and $N(t\ell_{p}(N)/n)$ is contained in
      $S_{3}$.  Thus,
      \begin{equation}
        \label{eq:68}
        \mu_{N}(S_{3}\cap N)
        \geq \mu_{N}\inp{
          N\inp{
            \inp{
              s\ell_{p}(N)/n,
              t\ell_{p}(N)/n
            }}}.
      \end{equation}
      Further, since $\dist_{g_{p}}(S_{1}, S_{2}) \geq \delta$, we get that the
      $g_{p}$ length of the segment from $N(s\ell_{p}(N)/n)$ to
      $N(t\ell_{p}(N)/n)$ along $N$ must also be at least $\delta$. From
      \cref{eq:44}, we see that for any point $\tau$ on this segment,
      \begin{equation}
        \label{eq:65}
        \dist_{\ell_{p}}(\tau, \bdry{K}) \geq \frac{sr_{p}}{n}.
      \end{equation}
      Using the definition of the $g_{p}$ metric (see \cref{eq:25,eq:15,eq:16})
      we therefore get
      \begin{equation}
        \label{eq:64}
        \delta
        \leq \frac{
          \dist_{{\ell_{\infty}}}\inp{N(s\ell_{p}(N)/n), N(t\ell_{p}(N)/n)}
        }{
          \frac{sr_{p}}{n}
        }
\leq
        \frac{\ell_{\infty}(N)}{r_{p}}\frac{t - s}{s},
      \end{equation}
      where $\ell_\infty(N)$ is the $\ell_\infty$ length of $N$ (this is because
      the segment from $N(s\ell_{p}(N)/n)$ to $N(t\ell_{p}(N)/n)$ of $N$
      constitutes a $(t-s)/n$ fraction of the length of $N$, in any $\ell_q$
      norm).  Rearranging, we get
      \begin{equation}
        \label{eq:67}
        t - s \geq \frac{r_{p}}{\ell_{\infty}(N)}\cdot s\delta.
      \end{equation}
      Now, a direct calculation using \cref{eq:43} and the convexity of the map
      $x \mapsto \inp{1 - x/n}^{n}$ gives
        \begin{align}
          \label{eq:69}
          \mu_{N}(S_{3}\cap N)
          &\geq
            \mu_{N}\inp{
            N\inp{\inp{
            s\ell_{p}(N)/n,
            t\ell_{p}(N)/n}
            }},\;\text{from \cref{eq:68}},\\
          &=
            (1 - s/n)^{n} - (1-t/n)^{n},\;\text{from \cref{eq:43}} \\
          &\geq (1 - t/n)^{n-1}(t - s)
            \geq (1 - c_{2})(t-s),\;\text{since }t \leq c_{2},\\
          &\geq (1-c_2)\cdot\delta\cdot\frac{r_{p}}{\ell_{\infty}(N)}\cdot s,
            \;\text{from \cref{eq:67}}\\
          &\geq (1-\alpha)\cdot(1-c_2)\cdot\delta\cdot\frac{r_{p}}{\ell_{\infty}(N)}\cdot\mu_N(S_1\cap N),
            \;\text{from \cref{eq:63}}.\label{eq:70}
        \end{align}
      \end{description}
      Combining \cref{eq:58,eq:70} we therefore get that \cref{eq:32} holds with
      \begin{equation}
        \label{eq:71}
        C = \min\inb{
          (1-\alpha)\cdot(1-c_2)\cdot\delta\cdot\frac{r_{p}}{\ell_{\infty}(N)},\;
          (1-\alpha)\cdot\frac{c_{2} - 2(c_{1} + \alpha)}{2c_{1}}
        }.
      \end{equation}
      We can now choose $c_1 = 0.05, c_2 = 0.5, \alpha = 0.1$ and
      $\beta = 0.05$.  Then, since $\delta \leq 1$, and
      $\ell_{\infty}(N) \leq 2R_{\infty}$, the right hand side above is at least
      $C' \rho_{p} \delta$ for some absolute constant $C'$ (recall from the
      statement of the theorem that $\rho_p = r_p/R_{\infty} \leq 1$).

      We now combine the results for the two parts
      (\cref{eq-isoperimetry-inner-multi-scale}, and \cref{eq:35,eq:71} and the discussion in the previous paragraph, respectively)
      to conclude that there exist positive constants $C_1'$ and $C_{2}'$ such
      that
      \begin{equation}
        \label{eq:82}
        \vol{S_3} \geq \min\inb{C_{1}', \frac{1}{n}\log\inp{1 +
            0.9\cdot\frac{\vol{K}}{\vol{S_1}}}}\cdot C_{2}' \rho_{p} \delta \cdot \vol{S_1}.
      \end{equation}
      The existence of constants $C_0, C_1$ and $C_2$ as in the statement of the
      theorem follows immediately from \cref{eq:82}, by considering when each of
      the two quantities in the minimum above is the smaller one.
    \end{proof}

\subsection{Bounding the conductance}
\label{sec:bounding-conductance}
In this subsection, we prove \cref{thm:conductance} which gives a lower bound on
the conductance of the $\MM_p$ random walks on Whitney cubes described earlier.
In the proof, we will need the following two geometric lemmas, whose proofs can
be found in \cref{sec:geometry}.  In \cref{sec:tightn-cond-bound}, we further
show that in the worst case, the conductance lower bound we obtain here for the
$\MM_{p}$ random walks is tight up to a factor of $O(\log n)$, where $n$ is the
dimension.

\begin{restatable}{lemma}{surface}
  Let $K$ be a convex body, $\FF$ a Whitney decomposition of it as described in
  \cref{sec:whitney-cubes}, and consider any set $S \subseteq \FF$.  As before,
  we identify $S$ also with the union of cubes in $S$.  For any cube $Q \in S$,
  we have \label{lem:surface-area-infty}
  \begin{equation*}
    \vol[n-1]{\bdry{S} \cap \bdry{Q}} = \lim_{\epsilon \downarrow
      0}\frac{\vol[n]{(Q + \epsilon B_\infty)\setminus S}}{\epsilon}.
  \end{equation*}
\end{restatable}
Recall that the general definition of surface area uses Minkowski sums with
scalings of the Euclidean unit ball $B_{2}$.  The reason we can work instead
with the $\ell_{\infty}$-unit ball $B_{\infty}$ in \cref{lem:surface-area-infty}
is because all the surfaces involved are unions of finitely many axis-aligned
cuboidal surfaces.

The following lemma relates distances in the $\ell_{\infty}$-norm to distances
in the $g_{p}$ metric defined before the statement of \cref{theo isoperimetry}.
\begin{restatable}{lemma}{metricdist}
  \label{lem-small-metric-distances} Fix a convex body $K \subseteq \R^n$.  Then there exists
  $\delta = \delta(n, p)$ such that for all
  $\epsilon \in [0, \delta]$, and all points $x, y \in K^\circ$ with
  \begin{equation*}
    \norm[\infty]{x - y}
    \geq \epsilon
    \cdot 
      \dist_{\ell_p}(x, \bdry{K}),
  \end{equation*}
  it holds that
  \begin{equation*}
    \dist_{g_p}(x, y) \geq \frac{\epsilon}{2}.
  \end{equation*}
\end{restatable}
We are now ready to state and prove our conductance lower bound.
\begin{theorem}\label{thm:conductance} Fix $p$ such that $1 \leq p \leq \infty$. Let $K$ be a convex
  body such that
  $r_p\cdot B_{p} \subseteq K \subseteq R_\infty\cdot B_{\infty}$. Define
  $\rho_p \defeq r_p/R_\infty \leq 1$ as in the statement of \cref{theo
    isoperimetry}.  The conductance $\Phi = \Phi_{\MM_p}$ of the chain $\MM_p$
  on the Whitney decomposition $\FF^{(p)}$ of $K$ satisfies
  \begin{equation}
    \label{eq:86}
    \Phi \geq \frac{\rho_p}{O(n^{2 + \frac{1}{p}})}.
  \end{equation}
More precisely, letting
  $C_0$ be as in the statement of \cref{theo isoperimetry},
 the  conductance profile $\Phi_\a$ for $\a > \exp(-C_{0}n)$ satisfies
  \begin{equation}
   \label{eqn: multiscale conductance profile - 1}
       \Phi_\a \geq  \frac{\rho_p}{O(n^{2 + \frac{1}{p}})}
    \cdot \log\inp{1 + \frac{0.9}{\a}},
  \end{equation}
 and
   for $\a \leq \exp(-C_0 n),$ $\Phi_\a$ satisfies
  \begin{equation}
    \label{eqn: multiscale conductance profile - 2}
    \Phi_\a \geq  \frac{\rho_p}{O(n^{1 + \frac{1}{p}})}.
  \end{equation}
\end{theorem}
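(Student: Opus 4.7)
\medskip

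\noindent\emph{Proof proposal.} The plan is to prove the conductance (respectively, conductance profile) lower bounds by combining the isoperimetric inequality (\cref{theo isoperimetry}) with the two geometric lemmas just stated. Fix a measurable $S \subseteq \FF^{(p)}$ with $\pi(S) \le 1/2$, identified with the union of its cubes in $K$. Our goal is to lower-bound the ergodic flow
\begin{equation*}
\Psi_{\MM_p}(S) \;=\; \frac{1}{4n\vol{K}} \sum_{\substack{Q \in S\\ Q' \notin S}} \vol[n-1]{\bdry{Q}\cap\bdry{Q'}} \cdot \min\{\sidelen{Q}, \sidelen{Q'}\}
\end{equation*}
(which was derived in \cref{sec:markov-chain}) in terms of $\pi(S)$.

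Choose a parameter $\delta$ with $0 < \delta \le c/n^{1/p}$ for a sufficiently small absolute constant $c$, and set
\begin{equation*}
S_1 := S, \qquad S_3 := \{y \in K \setminus S \st \dist_{g_p}(y, S) \le \delta\}, \qquad S_2 := K \setminus (S_1 \cup S_3).
\end{equation*}
Then $S_1, S_2, S_3$ partition $K$, $\dist_{g_p}(S_1,S_2) \ge \delta$, and $\vol{S_1} = \pi(S) \vol{K} \le \vol{K}/2$, so \cref{theo isoperimetry} applies. The first main step is to use the contrapositive of \cref{lem-small-metric-distances} (with $\epsilon = 2\delta$) to conclude that every $y \in S_3$ lies within $\ell_\infty$-distance $2\delta \cdot \dist_{\ell_p}(x,\bdry{K})$ of some $x \in S$; combining with \cref{item:diamater-of-cube} of \cref{theo 1.1} (which gives $\dist_{\ell_p}(x,\bdry{K}) \le 5 n^{1/p} \sidelen{Q}$ when $x \in Q$), this yields the pointwise containment
\begin{equation*}
S_3 \;\subseteq\; \bigcup_{Q \in S} \bigl((Q + 10 \delta n^{1/p} \sidelen{Q} \cdot B_\infty) \setminus S\bigr).
\end{equation*}
Since $c$ was chosen small, the per-cube dilation factor $10\delta n^{1/p}\sidelen{Q}$ is a small fraction of $\sidelen{Q}$, so \cref{lem:surface-area-infty} (applied in the regime where the $\epsilon$-neighborhood of $Q$ is already contained in the union of $Q$ and its immediate neighbors in $\FF^{(p)}$) gives the upper bound
\begin{equation*}
\vol{S_3} \;\le\; 20 \delta n^{1/p} \sum_{Q \in S} \sidelen{Q} \cdot \vol[n-1]{\bdry{S} \cap \bdry{Q}}.
\end{equation*}
Finally, expanding $\vol[n-1]{\bdry{S}\cap\bdry{Q}}$ as a sum over abutting cubes $Q' \notin S$ and using \cref{item:side-length-ratio} of \cref{theo 1.1} to replace $\sidelen{Q}$ by at most $2\min\{\sidelen{Q},\sidelen{Q'}\}$, I conclude
\begin{equation*}
\vol{S_3} \;\le\; 40\, \delta\, n^{1/p} \cdot 4n \vol{K} \cdot \Psi_{\MM_p}(S) \;=\; O\bigl(\delta\, n^{1+1/p}\bigr) \vol{K} \cdot \Psi_{\MM_p}(S).
\end{equation*}

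It then remains to combine this upper bound with the two lower bounds on $\vol{S_3}$ from \cref{theo isoperimetry}. In the regime $\pi(S) > \exp(-C_0 n)$, the inequality $\vol{S_3} \ge C_1 (\rho_p/n)\, \delta\, \vol{S_1} \log(1 + 0.9/\pi(S))$ yields, after the $\delta$'s cancel,
\begin{equation*}
\Phi_{\MM_p}(S) \;=\; \frac{\Psi_{\MM_p}(S)}{\pi(S)} \;\ge\; \frac{\rho_p}{O(n^{2+1/p})} \log\!\Bigl(1 + \frac{0.9}{\pi(S)}\Bigr),
\end{equation*}
proving \cref{eqn: multiscale conductance profile - 1}; in the regime $\pi(S) \le \exp(-C_0 n)$, the bound $\vol{S_3} \ge C_2 \rho_p \delta \vol{S_1}$ similarly yields \cref{eqn: multiscale conductance profile - 2}. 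The conductance bound \cref{eq:86} follows by taking the infimum over $\pi(S) \le 1/2$, where the first profile bound is the binding one (with $\pi(S) = 1/2$, it gives $\Phi \ge \rho_p/O(n^{2+1/p})$). The main technical delicacy in this argument is step 3: one must choose $\delta$ small enough (but at a scale that cancels against the $\delta$ in the isoperimetric lower bound on $\vol{S_3}$) so that, \emph{uniformly} over cubes $Q$ of arbitrarily small sidelength, the $\ell_\infty$-tube used to cover $S_3$ near $Q$ stays within the regime where \cref{lem:surface-area-infty} gives the linear surface-area estimate; this is exactly what the scale-invariant factor $n^{1/p}\sidelen{Q}$ in the $\ell_\infty$ dilation enables.
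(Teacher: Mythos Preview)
Your overall strategy matches the paper's: relate the ergodic flow to the boundary surface area $\sum_{Q\in S}\sidelen{Q}\cdot\vol[n-1]{\bdry S\cap\bdry Q}$, relate that in turn to the volume of a $g_p$-tube around $S$, and then invoke \cref{theo isoperimetry}. The gap is in your step~3, where you try to use \cref{lem:surface-area-infty} at a \emph{fixed} positive scale $\epsilon_0(Q)=10\delta n^{1/p}\sidelen{Q}$ with an absolute constant $c$. That lemma is a limit statement; for fixed $\epsilon_0$, the inequality
\[
\vol{(Q+\epsilon_0 B_\infty)\setminus S}\ \le\ 2\,\epsilon_0\,\vol[n-1]{\bdry S\cap\bdry Q}
\]
can simply fail term by term. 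Concretely, if every facet-neighbour of $Q$ lies in $S$ but a diagonal neighbour does not, then the right-hand side is $0$ while the left-hand side picks up a positive ``corner'' contribution. Even when the right-hand side is nonzero, the ratio of the two sides is governed by $(1+2\epsilon_0/\sidelen{Q})^{n-1}=(1+20c)^{n-1}$, which is unbounded in $n$ for any absolute $c>0$. Your closing remark correctly senses a delicacy here but misdiagnoses it as a uniformity issue across cube sizes; scale invariance does handle that, but not the $n$-dependence and not the corner leakage.

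The paper sidesteps this entirely by working with the limit $\epsilon\downarrow 0$: it defines $S_\epsilon=\bigcup_{x\in S}\bigl(x+2\epsilon\,\dist_{\ell_p}(x,\bdry K)B_\infty\bigr)$, uses \cref{lem:surface-area-infty} as stated to bound each $\lim_{\epsilon\downarrow 0}\vol{Q_\epsilon\setminus S}/\epsilon$, interchanges the limit with the countable sum over $Q\in S$ via dominated convergence, and only then applies \cref{lem-small-metric-distances} and \cref{theo isoperimetry} (with $\delta=\epsilon$) inside the $\limsup$. Since the isoperimetric lower bound on $\vol{S_\epsilon\setminus S}$ is linear in $\epsilon$, the $\epsilon$'s cancel in the limit and one obtains the stated conductance bounds. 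Your argument becomes correct if you replace the fixed $\delta$ by this limiting procedure; as written, step~3 does not follow from the lemmas you cite.
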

\begin{proof} Let $S \subseteq \FF = \FF^{(p)}$ be such that $\pi(S) \le
  (1/2)$. We shall often also view $S$ as the subset of $K$ corresponding to the
  union of the cubes in it.  For each $Q \in S$ and $\epsilon > 0$, consider the
  set $Q_\epsilon$ defined by
  \begin{equation}
    Q_\epsilon = \bigcup_{x \in Q} \left( x + {2\epsilon\dist_{\ell_p}(x,\partial
        K)}B_\infty \right).\label{eq:9}
  \end{equation}
Then, by \cref{item:diamater-of-cube} of \Cref{theo 1.1}, we have that
  $\dist_{\ell_p}(x,\bdry{K}) \leq 5 \diam_{\ell_p}(Q)$.  Thus, we get that for
  some absolute constant $C > 0$
  \[ Q_\epsilon \subseteq Q + C\epsilon \diam_{\ell_p}(Q)B_\infty = Q + \epsilon
    \cdot C n^{1/p} \sidelen{Q}B_\infty . \] Consequently, using
  \cref{lem:surface-area-infty},
  \begin{align}
    \limsup_{\epsilon \downarrow 0} \frac{\vol{Q_\epsilon \setminus S}}{\epsilon}
    & \leq \lim_{\epsilon \downarrow 0} \frac{\vol{(Q + C\epsilon \diam_{\ell_p}(Q)B_\infty) \setminus S}}{\epsilon} \\
    &
      = C\diam_{\ell_p}(Q) \vol[n-1]{\partial Q \cap \partial S}\nonumber\\
    &
      = C n^{1/p}\sidelen{Q} \vol[n-1]{\partial Q \cap \partial S}.\label{eq:11}
  \end{align}
  Define $S_\epsilon$ similarly by 
   \begin{equation}
    S_\epsilon = \bigcup_{x \in S} \left( x + {2\epsilon\dist_{\ell_p}(x,\partial
        K)}B_\infty \right).\label{eq:9.5}
  \end{equation}
  Clearly, $S_\epsilon = \bigcup_{Q \in S} Q_\epsilon$.  Before proceeding, we
  also note that it follows from \cref{lem-small-metric-distances} that when
  $\epsilon \in (0, 1)$ is sufficiently small (as a function of $n$ and $p$)
  then
  \begin{equation}
    \dist_{g_p}(S,(K\setminus S_\epsilon)) \ge \epsilon.\label{eq:105}
  \end{equation}
  We now compute the ergodic flow of $\MM_p$ out of $S$.
  \begin{align}
    \Psi(S) &= \sum_{Q \in S} \sum_{Q' \not\in S}  \pi(Q) P(Q,Q') \\
            &= \sum_{Q \in S} \sum_{Q' \not\in S} \frac{1}{2} \cdot
              \frac{\vol{Q}}{\vol{K}} \cdot \frac{\vol[n-1]{\partial Q \cap
              \partial Q'}}{\vol[n-1]{\partial Q}} \cdot \min\left\{ 1 ,
              \frac{\sidelen{Q'}}{\sidelen{Q}} \right\}
              \quad \text{(from \cref{eq:2})} \\
            &=\frac{1}{4n\,\vol{K}}
              \sum_{Q \in S}  \sidelen{Q}
              \sum_{Q' \not\in S}\vol[n-1]{\partial Q \cap \partial Q'}
              \cdot \min\left\{ 1 , \frac{\sidelen{Q'}}{\sidelen{Q}} \right\}\\
            &\ge \frac{1}{8n\vol{K}} \sum_{Q \in S} \sidelen{Q} \cdot
              \vol[n-1]{\partial Q \cap \partial S}
              \quad \text{(from \cref{item:side-length-ratio} of \Cref{theo 1.1})} \\
            &\ge \frac{1}{8C n^{1 + 1/p}\vol{K}} \sum_{Q \in S} \limsup_{\epsilon \downarrow
              0} \frac{\vol{Q_\epsilon \setminus S}}{\epsilon}
              \quad \text{(from \cref{eq:11})}\label{eq:120} \\
            &\geq \frac{1}{8C n^{1 + 1/p}\vol{K}}
              \limsup_{\epsilon \downarrow 0}
              \sum_{Q \in S}  \frac{\vol{Q_\epsilon \setminus S}}{\epsilon}
              \quad \text{(from the reverse Fatou's lemma, see below)}\nonumber \\
            &\geq \frac{1}{8C n^{1 + 1/p}\vol{K}} \limsup_{\epsilon \downarrow 0}
              \frac{\vol{S_\epsilon \setminus S}}{\epsilon}
              \quad \text{(since $S_\epsilon = \bigcup_{Q \in S} Q_\epsilon$ )}.\label{eq:106}
  \end{align}
  Here, to interchange the limit and the sum in \cref{eq:120}, one can use,
  e.g., the reverse Fatou's lemma, after noting that
  $\sum_{Q \in S} \vol{Q} = \vol{S} < \infty$, and that for all
  $0 < \epsilon \leq 1$, and all $Q \in S$, the discussion following the
  definition of $Q_\epsilon$ in \cref{eq:9} implies that
  \begin{equation*}
    \frac{\vol{Q_\epsilon \setminus S}}{\epsilon} \leq \frac{\vol{Q_\epsilon
        \setminus Q}}{\epsilon} \leq \vol{Q}\cdot\frac{\inp{1 + 2\epsilon\cdot Cn^{1/p}}^n - 1}{\epsilon}
    \leq \vol{Q}\cdot \xi_{n,p},
  \end{equation*}
  where $\xi_{n, p} \defeq \inp{1 + 2Cn^{1/p}}^n$ is a finite positive
  number that depends only on $n$ and $p$, and not on $Q$ or $\epsilon$.  Now,
  as noted above, \cref{lem-small-metric-distances} implies that when
  $\epsilon \in (0, 1)$ is sufficiently small (as a function of $n$ and $p$)
  then $\dist_{g_p}(S,(K\setminus S_\epsilon)) \ge \epsilon$.  We can therefore
  apply \Cref{theo isoperimetry} after setting $\delta = \epsilon$, $S_1 = S$,
  $S_3 = (S_\epsilon \setminus S)$, and $S_2 = (K \setminus S_\epsilon)$.  Let
  $C_0$ be as in the statement of \cref{theo isoperimetry}. Applying \cref{theo
    isoperimetry}, we then get that if $S$ is such that
  $\pi(S) \leq \exp(-C_{0}n)$, then
  \begin{equation}
    \label{eq:107}
    \frac{\vol{S_\epsilon \setminus S}}{\epsilon} \ge \Theta(1)\cdot\rho_p\cdot\vol{S_1}.
  \end{equation}
  Substituting this in \cref{eq:106}, we thus get that for such $S$,
  \begin{equation}
    \label{eq:108}
    \Psi(S) \ge  \frac{\rho_p}{O(n^{1 + \frac{1}{p}})}
    \cdot \pi(S).
  \end{equation}
  We thus get that the value of the conductance profile $\Phi_\a$ at
  $\a \leq \exp(-C_0 n)$ is
  \begin{equation}
\Phi_\a \geq  \frac{\rho_p}{O(n^{1 + \frac{1}{p}})}.\label{eq:110}
  \end{equation}
  Similarly, when $S$ is such that $\frac{1}{2} \geq \pi(S) > \exp(-C_{0}n)$,
  \cref{theo isoperimetry} gives
  \begin{equation}
    \label{eq:80}
    \frac{\vol{S_\epsilon \setminus S}}{\epsilon} \ge \frac{\rho_p}{O(n)}
    \cdot \vol{S_1}
    \cdot \log\inp{1 + 0.9\frac{\vol{K}}{\vol{S_1}}}.
  \end{equation}
  Substituting this in \cref{eq:106}, we thus get that for such $S$,
  \begin{equation}
    \label{eq:81}
    \Psi(S) \ge  \frac{\rho_p}{O(n^{2 + \frac{1}{p}})}
    \cdot \pi(S)
    \cdot \log\inp{1 + \frac{0.9}{\pi(S)}}.
  \end{equation}
  Combining this with \cref{eq:110}, we get that the value of the conductance
  profile $\Phi_\a$ at $\a > \exp(-C_{0}n)$ therefore satisfies
\begin{equation}
  \Phi_\a \geq  \frac{\rho_p}{O(n^{2 + \frac{1}{p}})}
    \cdot \log\inp{1 + \frac{0.9}{\a}}.\label{eq:111}
  \end{equation}
  Combining \cref{eq:110,eq:111}, we also get that the conductance
  $\Phi = \Phi_{\MM_p}$ of the chain $\MM_p$ satisfies
  \[
\Phi \geq \frac{\rho_p}{O(n^{2 + \frac{1}{p}})}. \qedhere
  \]
\end{proof}

\subsection{Tightness of the conductance bound}
\label{sec:tightn-cond-bound}
We now show that at least in the worst case, the conductance lower bound proved
above for the $\MM_{p}$ chains is tight up to a logarithmic factor in the
dimension.
\begin{proposition}\label{prop:conductacne-lower-bound}
  Fix $1 \leq p \leq \infty$, and the convex body
  $K = \left[-\frac{1}{2}, \frac{1}{2}\right]^n$, and consider the Markov chain $\MM_p$ on
  the Whitney decomposition $\FF = \FF^{(p)}$ of $K$.  We then have
  $$\Phi_{\MM_p} \leq O\left(\frac{\log n}{n^{2 + \frac{1}{p}}}\right).$$
\end{proposition}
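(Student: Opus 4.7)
The plan is to exhibit a single set $S \subseteq \FF^{(p)}$ with $\pi(S) = 1/2$ whose ergodic flow out of $S$ is $O(1/n^{2+1/p})$. The natural candidate is $S = \{Q \in \FF^{(p)} : Q \subseteq \{x : x_1 < 0\}\}$, the half of the Whitney decomposition lying on one side of the coordinate hyperplane $H \defeq \{x_1 = 0\} \cap K$. Since $\{x_1 = 0\}$ sits inside the $(n-1)$-skeleton of every mesh $\cube_k$, every Whitney cube lies entirely on one side of $H$, so $S$ is well-defined, and by the symmetry of $K$ under $x_1 \mapsto -x_1$ we have $\pi(S) = 1/2$.

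To bound $\Psi(S)$, I would use the explicit transition formula derived in \cref{eq:3}. A pair $(Q, Q')$ with $Q \in S$ and $Q' \notin S$ contributes nontrivially only when $Q$ and $Q'$ abut across $H$. Each such contribution equals $\frac{1}{4n\,\vol{K}} \min\{\sidelen{Q}, \sidelen{Q'}\} \cdot \vol[n-1]{\partial Q \cap \partial Q'}$. By \cref{item:diamater-of-cube} of \cref{theo 1.1}, both $\sidelen{Q}$ and $\sidelen{Q'}$ are $O\!\left(n^{-1/p}\,\dist_{\ell_p}(y,\partial K)\right)$ for every $y \in \partial Q \cap \partial Q'$; aggregating over the shared faces, which partition $H$ up to a measure-zero set, yields
\[
\Psi(S) \le \frac{C}{n^{1+1/p}\,\vol{K}} \int_{H} \dist_{\ell_p}(y,\partial K)\, dy.
\]

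The remaining step is to estimate this integral. For $K = [-\tfrac{1}{2}, \tfrac{1}{2}]^n$ and any $p \in [1,\infty]$, one has $\dist_{\ell_p}(y,\partial K) = \min_{i}(\tfrac{1}{2} - |y_i|)$. For $y \in H$, so $y_1 = 0$, this minimum effectively ranges only over the $n-1$ remaining coordinates, each uniformly distributed in $[-\tfrac{1}{2}, \tfrac{1}{2}]$. Setting $Z_i \defeq \tfrac{1}{2} - |y_i|$, which is uniform on $[0, \tfrac{1}{2}]$, the integral equals the expected minimum of $n-1$ i.i.d.\ uniform random variables on $[0, \tfrac{1}{2}]$, hence is $\Theta(1/n)$. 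Consequently $\Psi(S) = O(1/n^{2+1/p})$ and $\Phi_{\MM_p} \le \Phi_{\MM_p}(S) = 2\Psi(S) = O(1/n^{2+1/p}) \le O(\log n / n^{2+1/p})$, as claimed.

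The only potentially delicate step is the pointwise estimate $\min\{\sidelen{Q}, \sidelen{Q'}\} \le O(n^{-1/p}\,\dist_{\ell_p}(y,\partial K))$ at a point $y$ on the shared face, which is exactly what \cref{item:diamater-of-cube} of \cref{theo 1.1} delivers (via $\diam_{\ell_p}(Q) = n^{1/p}\sidelen{Q}$). Because $H$ aligns perfectly with the dyadic lattice, the shared-face decomposition of $H$ between $S$ and its complement has no boundary artefacts, and the resulting sum telescopes cleanly into the integral above. There is in principle some slack in the bound: the computation actually delivers $O(1/n^{2+1/p})$, which matches the lower bound of \cref{thm:conductance} without the $\log n$ factor, so \cref{prop:conductacne-lower-bound} as stated follows a fortiori.
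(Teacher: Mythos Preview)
Your argument is correct and in fact sharper than the paper's. Both proofs use the same test set $S$ (the Whitney cubes on one side of the coordinate hyperplane $H=\{x_1=0\}$), but the paper proceeds by the cruder estimate $\Psi(S)\le \frac{1}{2n}\pi(A)$, where $A$ is the set of cubes touching $H$, and then bounds $\pi(A)$ via a volume argument that introduces a $\log n$ threshold for the distance to $\partial K$; this is the source of the extra $\log n$. You instead plug directly into the symmetric flow formula $\pi(Q)P(Q,Q')=\tfrac{1}{4n\,\vol{K}}\min\{\sidelen{Q},\sidelen{Q'}\}\vol[n-1]{\partial Q\cap\partial Q'}$, control the sidelength pointwise by $\dist_{\ell_p}(y,\partial K)/n^{1/p}$ via \cref{item:diamater-of-cube} of \cref{theo 1.1}, and collapse the sum over abutting pairs into $\int_H \dist_{\ell_p}(y,\partial K)\,dy$ using that the shared faces tile $H$. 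Since this integral is exactly the expected minimum of $n-1$ independent uniforms on $[0,1/2]$, hence $\Theta(1/n)$, you obtain $\Phi_{\MM_p}\le O(n^{-2-1/p})$ with no logarithm. Combined with \cref{thm:conductance} (and $\rho_p=1$ for $K=[-1/2,1/2]^n$), this pins down $\Phi_{\MM_p}=\Theta(n^{-2-1/p})$ for the cube, so your approach not only proves the proposition but removes the gap the paper leaves between its upper and lower conductance bounds in this example.
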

\begin{proof}
  We consider the half-cube $S_1 = K \cap H$, where $H$ is the half-space
  $\inb{x \st x_{1} \leq 0}$.  Note that the construction of $\FF$ implies that
  the boundary of $H$ does not intersect the interior of any Whitney cube in
  $\FF$. Thus, the set $S \subseteq \FF$ of Whitney cubes lying inside $S_1$ in
  fact covers $S_1$ fully.  In particular, this implies that
  $\pi(S) = \pi_{K}(S_{1}) = 1/2$.

  We now proceed to bound the ergodic flow $\Psi_{\MM_{p}}(S)$ from above.
  Let $A \subseteq S$ be the set of Whitney cubes whose boundary has a non-zero
  intersection with the boundary of $H$.  We then note that
  $P_{{\MM_{p}}}(q, \FF \setminus S) = 0$ when $q \in S \setminus A$
  (because $\MM_{p}$ only moves to abutting cubes) and also that
  \begin{equation}
    \label{eq:112}
    P_{{\MM_{p}}}(q, \FF \setminus S) \leq \frac{1}{2n} \; \text{ when } q \in A,
  \end{equation}
  since $\MM_{p}$ chooses a point uniformly at random from the boundary
  $\bdry{q}$ of $q$ to propose the next step, and only one of the $2n$ faces of
  $q$ can have a non-trivial intersection with the boundary of $H$.  It follows
  that
  \begin{equation}
    \label{eq:113}
    \Psi_{\MM_{p}}(S) \leq \frac{1}{2n} \pi(A).
  \end{equation}
  We will show now that $\pi(A) \leq O\inp{\frac{\log n}{n^{1 + \frac{1}{p}}}}$,
  which will complete the proof.
  
  To estimate $\pi(A)$, we recall that sampling a cube from the probability
  distribution $\pi$ on $\FF$ can also be described as first sampling a point
  $x$ according to the uniform distribution $\pi_K$ on $K$, and then choosing
  the cube $q \in \FF$ containing $x$ (as discussed earlier, with probability
  $1$ over the choice of $x$, there is a unique $q$ containing $x$).  Now, let
  $\tilde{A}$ denote the set of those $x \in S_1$ which are at
  $\ell_{p}$-distance at most $\frac{10\log n}{n}$ from the boundary of $K$, and
  also at $\ell_{\infty}$-distance at least $\frac{100 \log n}{n^{1 + \frac{1}{p}}}$ from the
  boundary $\bdry{H}$ of $H$.  Formally,
  \begin{equation}
    \label{eq:115}
    \tilde{A}
    \defeq
    \inb{x \st \dist_{\ell_p}(x, \bdry{K}) \leq \frac{10\log n}{n}}
    \cap \inb{x \st \dist_{\ell_\infty}(x, \bdry{H}) \geq \frac{100\log n}{n^{1+\frac{1}{p}}}}
    \cap S_1
  \end{equation}

  The first condition for $x$ being in $\tilde{A}$ implies, due to
  \cref{item:diamater-of-cube} of \cref{theo 1.1}, that the Whitney cube
  $q \in \FF$ containing $x$ has $\ell_{p}$-diameter at most
  $\frac{10\log n}{n}$. The sidelength of this $q$ is therefore at most
  $\frac{10\log n}{n^{1 + \frac{1}{p}}}$.  Combined with the second condition in
  the definition of $\tilde{A}$, this implies that if $x \in \tilde{A}$, then
  $q \in S$ and $q \not\in A$. Since $A\subseteq S$, we can now use the
  translation described above between the probability distributions $\pi$ on
  $\FF$ and $\pi_K$ on $K$ to get
  \begin{equation}
    \label{eq:114}
    \pi_{K}(\tilde{A})
    = \Pr[x \sim \pi_K]{x \in \tilde{A}}
    \leq \Pr[q \sim \pi]{q \in S \text{ and } q \not\in A}
    = \pi(S \setminus A) = \pi(S) - \pi(A).
  \end{equation}
  Now, note that since $K$ is an axis aligned cube, we have
  $\dist_{\ell_{p}}(x, \bdry{K}) = \dist_{\ell_{\infty}}(x, \bdry{K})$ for any
  $x \in K$ and any $\ell_{p}$-norm, where $1 \leq p \leq \infty$.  Using this,
  a direct calculation gives
  \begin{equation}
    \pi_{K}(\tilde{A})
    \geq \frac{1}{2} \cdot\inp{1 - \inp{1 - \frac{20\log n}{n}}^n} -
    \frac{100\log n}{n^{1 + \frac{1}{p}}} \geq \frac{1}{2}\inp{1 - n^{-20}} - \frac{100 \log
      n}{n^{1 + \frac{1}{p}}}.
  \end{equation}
  Plugging this into \cref{eq:114} and using $\pi(S) = 1/2$ gives
  $\pi(A) \leq n^{-20}/2 + \frac{100 \log n}{n^{1 + \frac{1}{p}}}$.  Using
  \cref{eq:113}, this gives
  $\Psi_{\MM_{p}}(S) \leq O\inp{\frac{\log n}{n^{2 + \frac{1}{p}}}}$.  Since
  $\pi(S) = 1/2$, this yields the claimed upper bound on the conductance
  $\Phi_{\MM_{p}}(S) = \Psi_{\MM_{{p}}}(S)/\pi(S)$ of $S$.
\end{proof}

\subsection{Rapid mixing from a cold start: Proof of \texorpdfstring{\cref{thm:mp-mixing-intro}}{Theorem 1.2}}
\label{sec:mixing-from-cold}
Given the conductance bound, the proof for rapid mixing from a cold start follows
from a result of Lovász and Simonovits~\cite{LS93} (\cite[Corollary 1.8]{LS93},
as stated in \cref{lem:ls-l2-mixing}).
Recall that we denote the multiscale chain corresponding to the $\ell_p$-norm by
$\MM_p$.  We say that a starting density $\eta_0$ is \emph{$M$-warm in the
  $L^2(\pi)$ sense }if $\|\eta_0 - \one\|_{L^2(\pi)} \leq M$. Note that if
$\|\eta_0\|_\infty \leq M - 1$, then this criterion is satisfied.
\begin{corollary} Let $0 < \eps < 1/2$.  The mixing time $T$ of $\MM_p$ to
  achieve a total variation distance of $\eps$, from any $M$-warm start (in the
  $L^2(\pi)$ sense), obeys
$$T \leq O\left(\frac{n^{4 + \frac{2}{p}}}{\rho_p^2}\log \frac{M}{\eps}\right).$$ \label{cor:mp-mixing}
\end{corollary}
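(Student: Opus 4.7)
The plan is to feed the conductance lower bound from \cref{thm:conductance} directly into the $L^{2}$-mixing lemma recorded as \cref{lem:ls-l2-mixing}. From \cref{thm:conductance}, $\Phi = \Phi_{\MM_p} \geq \rho_p/O(n^{2+1/p})$, so $\Phi^{2}/2 \geq \rho_p^{2}/O(n^{4 + 2/p})$. Since the chain $\MM_p$ is lazy and reversible with respect to $\pi$ (as verified in \cref{sec:markov-chain}), \cref{lem:ls-l2-mixing} applies directly; writing $\eta_t$ for the density with respect to $\pi$ of the distribution after $t$ steps of $\MM_p$, it yields
\begin{equation*}
\|\eta_t - \mathbf{1}\|_{L^{2}(\pi)}^{2} \leq \left(1 - \tfrac{\Phi^{2}}{2}\right)^{2t} \|\eta_0 - \mathbf{1}\|_{L^{2}(\pi)}^{2} \leq \left(1 - \tfrac{\Phi^{2}}{2}\right)^{2t} M^{2},
\end{equation*}
where the final inequality uses the hypothesis that $\eta_0$ is $M$-warm in the $L^{2}(\pi)$ sense.

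To convert this $L^{2}$ bound into a total-variation bound, I would combine the Jensen-style inequality \cref{eq:98} (namely $\|\cdot\|_{L^{1}(\pi)} \leq \|\cdot\|_{L^{2}(\pi)}$) with the identity \cref{eq:97}, obtaining $d_{TV}(\nu_t,\pi) \leq \tfrac{1}{2}\|\eta_t - \mathbf{1}\|_{L^{2}(\pi)}$. Demanding that the right-hand side be at most $\eps$ and using the elementary bound $\log(1 - x) \leq -x$ for $x \in (0,1)$, it suffices to take
\begin{equation*}
t \;\geq\; 2\Phi^{-2}\,\log\!\big(M/(2\eps)\big) \;=\; O\!\left(\frac{n^{4 + 2/p}}{\rho_p^{2}}\,\log(M/\eps)\right),
\end{equation*}
which is exactly the claimed mixing-time bound.

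There is no real obstacle here: all the substantive work has already been carried out in the isoperimetric inequality (\cref{theo isoperimetry}) and its conductance consequence (\cref{thm:conductance}), and what remains is just the by-now standard conversion of a conductance bound into an $L^{2}$-mixing bound and then into a TV-mixing bound via \cref{lem:ls-l2-mixing}. The only minor sanity check to record is that the alternative pointwise hypothesis ``$\|\eta_0\|_\infty \leq M - 1$'' mentioned just before the corollary does imply the $L^{2}$ warmness used above: since $|\eta_0(x) - 1| \leq M - 1$ pointwise, one has $\|\eta_0 - \mathbf{1}\|_{L^{2}(\pi)} \leq M - 1 \leq M$, so the argument applies without change.
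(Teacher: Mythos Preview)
Your proposal is correct and follows essentially the same route as the paper: apply the conductance lower bound from \cref{thm:conductance} to \cref{lem:ls-l2-mixing}, then convert the $L^{2}$ bound to total variation via \cref{eq:97,eq:98}. The paper's own proof is even terser but identical in substance.
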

\begin{proof} If $\frac{R}{r}$ is bounded above by a polynomial in $n$, so is
  $\rho_p^{-1}$. Recall that $f\pi$ denotes the probability distribution with
  density $f$ with respect to $\pi$. As in \cref{lem:ls-l2-mixing}, let
  $\eta_{T}$ denote the density (with respect to $\pi$) of the distribution
  obtained after $T$ steps of chain.  Since we have
  $$2 \, d_{TV}(\eta_T\pi, \pi) = \|\eta_T -1 \|_{L^{1}(\pi)}\leq \sqrt{\langle \eta_T - \one, \eta_T -
    \one\rangle_{L^2(\pi)}},$$ this corollary follows from
  \cref{lem:ls-l2-mixing} and the fact that
  $$\Phi \geq \frac{\rho_p}{O(n^{2 + \frac{1}{p}})},$$ as shown in
  \cref{eq:86} in \cref{thm:conductance}.
\end{proof}

\begin{proof}[Proof of \cref{thm:mp-mixing-intro}] The theorem follows
  immediately from \cref{cor:mp-mixing} after a few substitutions.  Recall that
  the notation $f\nu$ denotes the probability distribution that has density $f$
  with respect to $\nu$.  Let $\eta_0'$ be the density of the initial $M$-warm
  start (with respect to $\pi_K)$ in the statement of
  \cref{thm:mp-mixing-intro}, and let $\nu_0' = \eta_{0}'\pi_{K}$ be the
  corresponding probability distribution on $K$.  Let $\eta_0$ be the density
  (with respect to the distribution $\pi$ on $\FF^{{(p)}}$) of the probability
  distribution $\nu_0$ on $\FF^{(p)}$ obtained by first sampling a point $x$
  according to $\nu_{0}'$ and then choosing the cube in $\FF_q$ in which $x$
  lies.  As argued in the paragraph preceding \cref{thm:mp-mixing-intro}, the
  probability distribution $\nu_{0}$ is also $M$-warm with respect to $\pi$ in
  the sense that $\norm[\infty]{\eta_0} \leq M$.  This implies that
  $\norm[L^2(\pi)]{\eta_0 - 1} \leq M + 1$.  Next, note that $\rho_p = R/r$ by
  definition, where $R$ and $r$ are as in the statement of
  \cref{thm:mp-mixing-intro}. \cref{cor:mp-mixing} then implies that the total
  variation distance between the probability distributions
  $\nu_{T} \defeq \eta_{T}\pi$ and $\pi$ on $\FF^{(p)}$ is at most $\epsilon$.
  This implies that for any function $f: \FF^{(p)} \rightarrow [0, 1]$,
  \begin{equation}
    \label{eq:89}
    \E_{Q \sim \pi}[f(X)] - \E_{Q \sim \nu_T}[f(X)] \leq \epsilon.
  \end{equation}

  To prove \cref{thm:mp-mixing-intro}, we now need to show that the probability
  distribution $\nu_T' \defeq \eta_T'\pi_K$ on $K$ obtained by first sampling a
  cube $Q$ from $\nu_T = \eta_T\pi$, and then sampling a point uniformly at
  random from $Q$ (as discussed in the paragraph preceding
  \cref{thm:mp-mixing-intro}) is within total variation distance at most
  $\epsilon$ from the uniform distance $\pi_K$ on $K$.  To do this, we only need
  to show that for any measurable subset $S$ of $K$, we have
  \begin{equation}
    \label{eq:88}
    \pi_K(S) - \nu_T'(S) \leq \epsilon.
  \end{equation}
  Recall that $\pi_W$ denotes the uniform probability distribution on $W$, where
  $W$ is any measurable set. For any measurable subset $S$ of $K$ and a cube
  $Q \in \FF^{(p)}$, we then have
  $\pi_K(S \cap Q) = \pi_Q(S \cap Q)\pi_K(Q) = \pi_Q(S \cap Q)\pi(Q)$ and
  $\nu_T'(S \cap Q) = \pi_Q(S \cap Q)\nu_T(Q)$.  Now, since the cubes in
  $\FF = \FF^{(p)}$ form a countable partition of $K^\circ$
  (\cref{item:Whitney-cubes,item:disjoint-Whitney} of \cref{theo 1.1}), we get
  that
  \begin{align}
    \pi_K(S) - \nu_T'(S)
    &= \sum_{Q \in \FF}\pi_Q(S \cap Q)\pi(Q) -  \sum_{Q \in \FF}\pi_{Q}(S \cap Q)\nu_T(Q)\\
    &= \E_{Q \sim \pi}\insq{\pi_Q(S \cap Q)} - \E_{Q \sim \nu_T}\insq{\pi_{Q}(S \cap Q)}
      \stackrel{\text{\cref{eq:89}}}{\leq}
      \epsilon.\label{eq:91}
  \end{align}
  Here, the inequality in \cref{eq:89} is applied with
  $f(Q) \defeq \pi_Q(S \cap Q) \in [0, 1]$.  \Cref{eq:91} thus proves
  \cref{eq:88} and hence completes the proof of the theorem.
\end{proof}

\subsection{An extension: rapid mixing from a given state}
\label{sec:mixing-time-from}

In the following theorem, we state an upper bound on the time taken by $\MM_p$
to achieve a total variation distance of $\eps$ from the stationary distribution
$\pi$ on the set of cubes starting from a given state. By using the notion of
``average conductance'' introduced by Lov\'{a}sz and Kannan in \cite{LK}, we
save a multiplicative factor of $\tilde{O}(n)$ (assuming that the starting state
is at least $1/\poly{n}$ away from the boundary of the body $K$) from what would
be obtained from a direct application of the conductance bound above. This is
possible because our lower bound on the value of the conductance profile for
small sets is significantly larger than our lower bound on the worst case value
of the conductance.
\begin{theorem}[Mixing time from a given state] Fix $p$ such that
  $1 \leq p \leq \infty$. Let $K$ be a convex body such that
  $r_p\cdot B_{p} \subseteq K \subseteq R_\infty\cdot B_{\infty}$.  Define
  $\rho_p \defeq r_p/R_\infty \leq 1$ as in the statement of \cref{theo
    isoperimetry}. Consider the Markov chain $\MM_p$ defined on the Whitney
  decomposition $\FF^{(p)}$ of $K$.  Let $X_t$ be a Markov chain evolving
  according to $\MM_p$, where $X_0 = Q \in \FF$. Suppose that
  $\dist_{\ell_p}(\cntr(Q), \partial K) = d$. Given $\eps \in (0, 1/2)$, after
  $$T = C\log \eps^{-1}\left(\frac{n^{4 +
        \frac{2}{p}}}{\rho_p^2}\right)\left(\log \frac{nR_\infty}{d} +
    n^{-1}\log\left(\frac{n}{\rho_p\eps}\right)\right)$$ steps, the total
  variation distance $d_{TV}(X_T, \pi)$ is less than $\eps$, for a universal
  constant $C$.\label{thm:mp-mixing-refined}
\end{theorem}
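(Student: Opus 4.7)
The plan is to replace the worst-case conductance bound used in \cref{cor:mp-mixing} with the average-conductance mixing-time bound of Lov\'{a}sz and Kannan~\cite{LK}, which uses the full conductance profile $\Phi_\a$ rather than the single quantity $\Phi$. Since the profile bound \eqref{eqn: multiscale conductance profile - 2} for small sets is a factor of $\Theta(n)$ better than the worst-case bound \eqref{eqn: multiscale conductance profile - 1}, and the early phase of the walk (starting from a point mass at $Q$) is precisely the phase where most of the mass still lives on sets of very small $\pi$-measure, this substitution saves a factor of $\tilde{O}(n)$ over the naive bound.

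The first step is to translate the starting cube $Q$ into a quantitative lower bound on $\pi(Q)$. Item~\ref{item: diameter of cube center} of \cref{theo 1.1} implies $\diam_{\ell_p}(Q) \ge 2d/9$, and hence $\sidelen{Q} \ge \Omega(d/n^{1/p})$. Combined with $\vol{K} \le (2R_\infty)^n$, this gives
\[
  \pi(Q) \;=\; \frac{\vol{Q}}{\vol{K}} \;\ge\; \Omega\!\inp{\frac{d}{n^{1/p} R_\infty}}^{\!n},
  \qquad \log \frac{1}{\pi(Q)} \;=\; O\!\inp{n \log \frac{n R_\infty}{d}}.
\]
A direct application of \cref{cor:mp-mixing} with warmth $M = 1/\pi(Q)$ would then give mixing time of order $(n^{5 + 2/p}/\rho_p^2) \log(n R_\infty/d) \log(1/\eps)$, worse by a factor of $n$ than the target.

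The main step is to apply the Lov\'{a}sz--Kannan average-conductance bound, which controls the mixing time from a state of measure $\pi(Q)$ by an integral of the form $\int_{\pi(Q)}^{1/2} (x\Phi_x^2)^{-1}\,dx$ (up to absolute constants). I split this integral at $x = \exp(-C_0 n)$, using \eqref{eqn: multiscale conductance profile - 2} for $x \le \exp(-C_0 n)$ and \eqref{eqn: multiscale conductance profile - 1} for the remaining range. The small-$x$ part contributes
\[
  O\!\inp{\frac{n^{2 + 2/p}}{\rho_p^2}} \cdot \log \frac{\exp(-C_0 n)}{\pi(Q)}
  \;=\; O\!\inp{\frac{n^{3 + 2/p}}{\rho_p^2} \log \frac{n R_\infty}{d}},
\]
while the large-$x$ part, after the substitution $u = \log(1/x)$, becomes $(n^{4+2/p}/\rho_p^2) \int_{\log 2}^{C_0 n} du/\log^2(1 + 0.9 e^u) = O(n^{4+2/p}/\rho_p^2)$ because the integrand decays like $1/u^2$. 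A subsequent $L^2$-refinement based on \cref{lem:ls-l2-mixing} produces the outer $\log \eps^{-1}$ factor; exploiting the small-set profile bound \eqref{eqn: multiscale conductance profile - 2} (rather than the worst-case $\Phi$) also during the final $\eps$-refinement rounds is what yields the extra $n^{-1}\log(n/(\rho_p\eps))$ term inside the parenthesis instead of the $\log(1/\eps)$ one would get by applying the worst-case conductance there.

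The main obstacle is the standard issue that a point-mass start has no finite $L^2(\pi)$-density, so \cref{lem:ls-l2-mixing} cannot be applied directly to the initial distribution; this is circumvented by using the Lov\'{a}sz--Kannan bound, which handles point starts natively, to reach a moderate warmth, and only then switching to the $L^2$-refinement loop. A secondary technical point is the bookkeeping needed to interpolate the two regimes of $\Phi_\a$ smoothly across the threshold $\a = \exp(-C_0 n)$ so that the two integrals combine correctly and no spurious factors of $n$ re-enter from the boundary terms.
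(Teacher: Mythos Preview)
Your high-level plan---replace the worst-case conductance by the Lov\'{a}sz--Kannan average-conductance bound and split the profile integral at $\exp(-C_0 n)$---is exactly the paper's plan, and your computation of the two pieces of the integral is essentially the calculation the paper carries out. But you have skipped the one genuine technical obstacle in this proof, and your explanation of where the $n^{-1}\log(n/(\rho_p\eps))$ term comes from is not correct.

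The issue is that $\MM_p$ has a \emph{countably infinite} state space with no positive lower bound on $\pi(\cdot)$: Whitney cubes near $\bdry{K}$ have arbitrarily small measure. The Lov\'{a}sz--Kannan result the paper cites (\cite[Theorem~2.2]{LK}, corrected in \cite[Theorem~2.2]{MS01}) is for finite chains, and the lower limit of the profile integral is the minimum stationary probability $\pi_0$, not the measure $\pi(Q)$ of the starting state. So you cannot simply write $\int_{\pi(Q)}^{1/2}(x\Phi_x^2)^{-1}\,dx$ and be done; there is nothing in your proposal that justifies taking $\pi(Q)$ as the lower endpoint. The paper handles this by constructing, for each integer $a$, an auxiliary \emph{finite} chain $\MM_{p,a}$ in which all Whitney cubes of sidelength at most $2^{-a}$ are fused into a single state $Q_\infty$, checking that the conductance profile of $\MM_{p,a}$ dominates that of $\MM_p$, and then choosing $a = \tilde{a}(\eps)$ large enough that the original walk started at $Q$ avoids $Q_\infty$ for the required number of steps with probability at least $1-\eps/2$. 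This last step requires bounding $\vol{Q_\infty}$ via the coarea formula and Cauchy's surface-area monotonicity, together with the $L^\infty$-contraction $\|\eta_t\|_\infty \le \|\eta_0\|_\infty$; it is this bound on $\tilde{a}(\eps)$---namely $\tilde{a}(\eps) = O(nb + \log(n/(\rho_p\eps)))$ with $b = O(\log(nR_\infty/d))$---that produces the $n^{-1}\log(n/(\rho_p\eps))$ term after it is multiplied by the small-set profile factor $(n^{1+1/p}/\rho_p)^2$ and divided by the leading $n^{4+2/p}/\rho_p^2$. Your alternative story about an ``$L^2$-refinement loop'' using the small-set profile in the final $\eps$-rounds is not how this term arises, and the two-phase description (LK to moderate warmth, then \cref{lem:ls-l2-mixing}) does not match the paper's structure either: the paper uses \cref{lem:ls-l2-mixing} only to define the time horizon $\tilde{T}(\eps/2)$ needed to calibrate $\tilde{a}(\eps)$, and the mixing bound itself comes entirely from the average-conductance integral applied to the finite auxiliary chain.
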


In the proof of this theorem, we will need an auxiliary finite version of the
$\MM_p$ chain, that we now proceed to define.  For notational convenience, we
will also assume in the proof that the body $K$ is scaled so that
$R_\infty \leq 1$ (to translate the calculations below back to the general
setting, we will need to replace $d$ by $d/R_{\infty}$).

\subsubsection{A family of  auxiliary chains \texorpdfstring{$\MM_{p, a}$}{Mp,a}}\label{ssec:aux}
We consider a finite variant of the multiscale chain $\MM_p$, which we term $\MM_{p, a}$ where $p$ corresponds to the $\ell_p$-norm used and $a \geq 1$ is a natural number.
In this chain, all the states $Q$ in $\MM_p$ that correspond to cubes of side
length less or equal to $2^{-a}$ are fused into a single state, which we call $Q_\infty$, which we also identify with the union of these cubes. The transition probabilities $P_{p, a}$ of $\MM_{p, a}$ to and from $Q_\infty$ are defined as follows, in terms of the transition probabilities $P_p$ of $\MM_p$. Suppose (the interior of) $Q'$ is disjoint from $Q_\infty$.
$$P_{p, a}(Q', Q_\infty) \defeq \sum\limits_{\FF\ni Q \subseteq Q_\infty}  P_p(Q', Q).$$

$$P_{p, a}(Q_\infty, Q') \defeq \frac{\sum\limits_{\FF\ni Q \subseteq Q_\infty} \pi(Q) P_p(Q, Q')}{\sum\limits_{\FF\ni Q \subseteq Q_\infty} \pi(Q)}.$$
For two cubes $Q, Q'$ that are disjoint from $Q_\infty$, 
$$P_{p, a}(Q, Q') \defeq P_p(Q, Q').$$
Finally
$$P_{p, a}(Q_\infty, Q_\infty) \defeq  \frac{ \sum\limits_{\FF\ni Q'' \subseteq Q_\infty}\sum\limits_{\FF\ni Q \subseteq Q_\infty} \pi(Q) P_p(Q, Q'')}{\sum\limits_{\FF\ni Q \subseteq Q_\infty} \pi(Q)}.$$

\noindent We now proceed with the proof of \cref{thm:mp-mixing-refined}.
\begin{proof}[Proof of \cref{thm:mp-mixing-refined}]
  To avoid cluttering of notation in this proof, we will adopt the standard
  convention that different occurrences of the letters $C$ and $c$ can refer to
  different absolute constants.

  By \cref{item: diameter of cube center} of \cref{theo 1.1},  for any Whitney cube $Q \in \FF$,
\[  \dist_{\ell_p}(\cntr(Q),\Kb) \le \frac{9}{2} \diam_{\ell_p}(Q).\]
Let $2^{-b}$ denote the side length of $Q$, where $b$ is a positive natural number.
  Thus, $$2^{-b} = n^{-\frac{1}{p}}\diam_{\ell_p}(Q) \geq \frac{2dn^{-\frac{1}{p}}}{9}.$$ With the notation of \cref{lem:ls-l2-mixing}, we have 
 $$  \langle \eta_0 - \one, \eta_0 - \one\rangle_{L^2(\pi)} = 2^{-nb}(2^{2nb})(\vol{K}) - 2 (2^{-nb})(2^{nb}) + 1 <  2^{nb}(\vol{K}).$$
Thus, using the inequality $ 1 - x \leq \exp(-x)$ for positive $x$ together with \cref{lem:ls-l2-mixing}, we have 
$$\langle \eta_T - \one, \eta_T - \one\rangle_{L^2(\pi)} \leq \exp\left( - \Phi^2T\right) 2^{nb}(\vol{K}).$$
Since
$$\|\eta_T -1 \|_{L^{1}(\pi)} \leq \|\eta_T -1 \|_{L^2(\pi)} = \sqrt{\langle
  \eta_T - \one, \eta_T - \one\rangle_{L^2(\pi)}},$$ and $R_\infty \leq 1$,
we have
$$\|\eta_T -1 \|_{L^{1}(\pi)}\leq \exp\left( - \frac{\Phi^2T}{2}\right)
2^{\frac{n(b+1)}{2}}.$$ Substituting
$\Phi \geq \frac{r_p}{O(n^{2 + \frac{1}{p}}) \cdot R_\infty}$ from \cref{eq:86}
we have
    $$\|\eta_T -1 \|_{L^{1}(\pi)}\leq \exp\left( -\left( \frac{ r_p}{O(n^{2 + \frac{1}{p}}) \cdot R_\infty}\right)^2T\right) 2^{\frac{n(b+1)}{2}}.$$
    In order to have  $\|\eta_T -1 \|_{L^{1}(\pi)}\leq 2\eps$, it suffices to have $$T \geq \tilde{T}(\eps) \defeq \left(\log\left( (2\eps)^{-1}\right) + \frac{n(b+1) \ln 2}{2} \right)\left( \frac{O(n^{2 + \frac{1}{p}}) \cdot R_\infty}{ r_p}\right)^2.$$
Let $f_{p, a}$ be the function from $\MM_p$ to $\MM_{p, a}$ that maps a cube $Q \in \MM_p$ to $Q$ if $Q$ is not contained in $Q_\infty$, and otherwise maps $Q$ to $Q_\infty$.
Recall that  $X_t$ is a Markov chain evolving according to $\MM_p$, where $X_0 = Q \not\subseteq Q_\infty$. So $f_{p, a}(X_t)$ evolves according to $\MM_{p, a}$ until the random time $\tau(Q, Q_\infty)$  that it hits $Q_\infty$. Let $\tilde{a}(\eps)$ be the minimum $a$ such that $Q_\infty$ satisfies the following property:
\begin{equation}
 \p\left[\tau(Q, Q_\infty) \geq \tilde{T}\left(\frac{\eps}{2}\right) \right]
 \geq 1 - \frac{\eps}{2}.\label{eq:118}
\end{equation}
Let $\pi^a$ denote the stationary measure of the chain $\MM_{p,a}$ for any natural number $a$.
Note that $\pi^a(Q') = \pi(Q')$ for all $Q'$ that are not contained in $Q_\infty$.
Therefore, for any $T \leq \tilde{T}\left(\frac{\eps}{2}\right)$, we have the following upper bound:
\begin{equation}
  d_{TV}(X_T, \pi) \leq d_{TV}(f_{p, \tilde{a}(\eps)}(X_T), \pi^{\tilde{a}(\eps)}) + \frac{\eps}{2}. \label{eq:atilde}
\end{equation}
We will next obtain an upper bound on the right hand side by finding an upper
bound on the mixing time of $\MM_{p, a}$. The conductance profile of
$\MM_{p, a}$ dominates that of $\MM_p$ because for any subset $S$ of the states
of $\MM_{p, a}$, $f_{p, a}^{-1}(S)$ is a subset of the states of $\MM_p$ of the same
measure, and the transitions of $\MM_{p, a}$ correspond to that of a chain
obtained from fusing the states in $Q_\infty$ as stated in \cref{ssec:aux}.

In preparation for applying the average conductance result of Lovász and
Kannan~\cite{LK} to the chain $\MM_{p, \tilde{a}(\epsilon)}$, we denote \beq\HH
\defeq \frac{1}{\Phi_{1/2}} +
\int\limits_{(2^{\tilde{a}(\eps)n}\vol{K})^{-1}}^{\frac{1}{2}} \frac{d\a}{\a
  \Phi_\a^2}.\label{eq:H0}\eeq Here, $1/(2^{\tilde{a}(\eps)n}\vol{K})$ is a
lower bound on the stationary probability $\pi^{\tilde{a}(\epsilon)}(q)$ of any
state $q$ of the finite state Markov chain $\MM_{p, \tilde{a}(\epsilon)}$, and
$\Phi_\alpha = \Phi_{\MM_{p}, \alpha}$ is the conductance profile of $\MM_{p}$.
As argued above, this conductance profile is dominated by that of
$\MM_{p, \tilde{a}(\epsilon)}$, i.e.,
$\Phi_{\MM_{p, \tilde{a}(\epsilon)}, \alpha} \geq \Phi_{\MM_{p}, \alpha}$ for
every $\pi_0 \leq \alpha \leq 1/2$, where $\pi_0$ is the minimum stationary
probability of any state of $\MM_{p, \tilde{a}(\epsilon)}$.  Using the
discussion in \cite[p.~283, especially Theorem 2.2]{LK} (see also the corrected
version \cite[Theorem 2.2]{MS01}), we thus get that the mixing time to reach a
total variation distance of $\eps/2$ from any state of the finite Markov chain
$\MM_{p, \tilde{a}(e)}$ is at most (for some absolute constant $C$)
\begin{equation}
C\HH\log\left(\frac{2}{\eps}\right). \label{eq:H}
\end{equation}
Therefore, by
\cref{eq:atilde}, $C\HH\log\left(\frac{2}{\eps}\right)$ is an upper bound on the
the time needed by $\MM_p$ starting at the state $Q$ to achieve a total
variation distance of $\eps$ to stationarity. It remains to estimate
(\ref{eq:H}) from above.  As a consequence of \cref{eqn: multiscale conductance
  profile - 1,eqn: multiscale conductance profile - 2} in
\cref{thm:conductance}, \beqs \Phi_{\exp(-x)} \geq \begin{cases}
  \frac{\rho_p}{O(n^{2 + \frac{1}{p}})}
                                                     \cdot (x + \log (0.9)), & \text{if } x  < C_{0}n\\
                                                     \frac{\rho_p}{O(n^{1 +
                                                     \frac{1}{p}})}, & \text{if
                                                                       } x \geq
                                                                       C_0 n.
   \end{cases} \eeqs
   Therefore,
\beq\HH &=& \frac{1}{\Phi_{1/2}} + \int\limits_{(2^{\tilde{a}(\eps)n}\vol{K})^{-1}}^{\frac{1}{2}} \frac{d\a}{\a \Phi_\a^2}\\
& = & \frac{1}{\Phi_{1/2}} +\int\limits^{\ln\left(2^{\tilde{a}(\eps)n}\vol{K}\right)}_{\ln 2}  \Phi_{\exp(-x)}^{-2}dx\\
& \leq & \left(\frac{\rho_p}{O(n^{2 + \frac{1}{p}})}\right)^{-2} +
\tilde{a}(\eps)n\left( \frac{\rho_p}{O(n^{1 +
      \frac{1}{p}})}\right)^{-2},\label{eq:92-hari-Nov6-2022}\eeq
where in the last line we use the assumption $R_\infty \leq 1$ to upper bound
$\vol{K}$.

Lastly, we will find an upper bound for $\tilde{a}(\eps)$.
Observe that, by reversibility (see the discussion following \cref{eq:102}), we
have that for all $t \geq 0, $ $\|\eta_{t+1}\|_\infty =
\|\MM_{p, \tilde{a}(\epsilon)}\eta_{t}\|_\infty  \leq \|\eta_{t}\|_\infty$, and consequently, for all positive $t$,
\begin{equation}
\|\eta_t\|_\infty \leq \|\eta_0\|_\infty \leq 2^{nb}\vol{K}.\label{eq:116}
\end{equation}
We now proceed to upper bound the total volume of all cubes contained in
$Q_\infty$, for any value of the parameter $a$ used in its definition.  Since
any cube in $Q_\infty$ has side length at most $2^{-a}$, it follows from
\cref{item:diamater-of-cube} of \cref{theo 1.1} that any point in such a cube
must be at an Euclidean distance of at most
$5n^2 2^{-a} < C n^{C} 2^{-a}$ from $\bdry{K}$.  Define the inner
parallel body $K_r$ to be the set of all points in $K$ at a Euclidean distance
at least $r$ from $\partial K.$ This set is convex as can be seen from
\cref{lem: l1 concave}.  From the above discussion, we also
$\vol{Q_\infty} \leq \vol{K^\circ \setminus K_{C n^C 2^{-a}}}$.

Now, by the coarea formula for the closed set $\R^n\setminus K^\circ$ (see Lemma
3.2.34 on p.~271 of \cite{federer_geometric_1996}), letting $u(x)$ be the
$\ell_2$ distance of $x$ to $\R^n\setminus K^\circ$ (and therefore also to
$\bdry{K}$), we have
\begin{equation}
  \vol{K^\circ \setminus K_{r}}
  =\int\limits_{0}^{r}
  H_{n-1}(u^{-1}(t))\;dt
  =\int\limits_{0}^{r}\vol[n-1]{\bdry{K_t}}
  \,dt
  \end{equation}
  where $H_{n-1}$ is the $(n -1)$-dimensional Hausdorff measure. In particular,
  this, together with $$\vol[n-1]{\partial K_t} \leq \vol[n-1]{\partial K},$$
  (which follows from \cref{prop:cauchy-surface}) implies
  \begin{equation}
    \vol{K^\circ\setminus K_{r} }
    = \int\limits_0^{r}
    \vol[n-1]{\partial K_t} dt \leq r\cdot\vol[n-1]{\partial K}.
  \end{equation}
  Thus (again using \cref{prop:cauchy-surface}, along with the assumption
  $R_\infty \leq 1$, which implies $K \subseteq B_\infty$),
$$\vol{Q_\infty} < C2^{-a}n^C\vol[n-1]{\partial K}
< C2^{-a}n^C\vol[n-1]{\partial B_\infty} < C2^{-a + n}n^{C+1}.$$
Thus, for all $a$
such that
$$C2^{-a + n}n^{C+1} <
\frac{\eps}{2^{nb+2}\left(\tilde{T}\left(\frac{\eps}{2}\right)\right)},$$ we
have (using \cref{eq:116}, and recalling that $\eta_t\pi^{a}$
is the probability distribution with density $\eta_t$ with respect to
$\pi^{a}$)
  $$
  \p\left[\tau(Q, Q_\infty) < \tilde{T}\left(\frac{\eps}{2}\right) \right]
  \leq \sum\limits_{t =
    0}^{\tilde{T}\left(\frac{\eps}{2}\right)}(\eta_t\pi^{a})(Q_\infty)
  \leq 2^{nb}\cdot\inp{1 + \tilde{T}\left(\frac{\eps}{2}\right)}\cdot\vol{Q_\infty} < \frac{\eps}{2}.$$
Thus, from \cref{eq:118}, we see that $\tilde{a}(\eps)$ only needs to satisfy
$$2^{-\tilde{a}(\eps)} <
\frac{c\eps}{2^{n(b+1)}n^{C+1}\left(\tilde{T}\left(\frac{\eps}{2}\right)\right)}.$$
Since
$$\tilde{T}(\eps/2) \leq \left(\log \eps^{-1} + nb\right)\left(O(n^C
  \rho_p^{-2})\right),$$ we see that $\tilde{a}(\eps)$ can be chosen so that
$$ \tilde{a}(\eps) \leq \log \left(\frac{2^{n(b+1)}n^{C+1} \left(\log \eps^{-1}
      +nb\right)\left(O(n^C \rho_p^{-2})\right)}{c\eps}\right),$$ which simplifies
to \beq \tilde{a}(\eps) < C\left(nb +
  \log\left(\frac{n}{\rho_p\eps}\right)\right).\label{eq:93}\eeq Finally,
putting together (\ref{eq:92-hari-Nov6-2022}) and (\ref{eq:93}), we have for all
$\eps < \frac{1}{2}$,
\[ C\HH\log \frac{2}{\eps} \leq C\log \eps^{-1}\left(\frac{n^{4 +
        \frac{2}{p}}}{\rho_p^2}\right)\left( \log \frac{n}{d} +
    n^{-1}\log\left(\frac{n}{\rho_p\eps}\right)\right). \] In light of the
discussion following \cref{eq:H} and the choice of $\tilde{a}(\epsilon)$, this
completes the proof of the claimed mixing time for $\MM_p$ as well (recall that
we assumed by scaling the body that $R_\infty \leq 1$).
 \end{proof}

\section{Coordinate hit-and-run}
\label{sec:coordinate-hit-run}
Given a convex body $K$ in $\R^n$ and $x_1 \in K$, the steps $x_1,x_2,\ldots,$
of the Coordinate Hit-and-Run (CHR) random walk are generated as follows. Given
$x_i$, with probability $1/2$, we stay at $x_{i}$. Otherwise, we uniformly
randomly draw $j$ from $[n]$ and let $\ell$ be the chord $(x_i + e_j\R) \cap K$,
and then set $x_{i+1}$ to be a uniformly random point from this segment $\ell$.
In this section, we prove \cref{thm:intro-chr}, which shows that the CHR random
walk on convex bodies mixes rapidly even from a cold start. Our main technical
ingredient is an improvement (\cref{theo: axis-disjoint isoperimetry}) of an
isoperimetric inequality of Laddha and Vempala~\cite{laddha_convergence_2021}.

\subsection{Isoperimetric inequality for axis-disjoint sets}
We need the following definition, due to Laddha and
Vempala~\cite{laddha_convergence_2021}.
\begin{definition}[\textbf{Axis-disjoint sets~\cite{laddha_convergence_2021}}]
  Subsets $S_1,S_2$ of $\R^n$ are said to be \emph{axis-disjoint} if for all
  $i \in [n]$, $(S_1 + e_i\R) \cap S_2 = \emptyset$, where $e_i$ is the standard
  unit vector in the $i$th coordinate direction.  In other words, it is not
  possible to ``reach'' $S_2$ from $S_1$ by moving along a coordinate direction.
\end{definition}

The main technical result of this section is the following
isoperimetric inequality for axis-disjoint sets.
\begin{theorem}
  \label{theo: axis-disjoint isoperimetry}
  Let $K$ be a convex body in $\R^n$.  Denote by $\Phi_{\MM_\infty}$ the
  conductance of the Markov chain $\MM_{\infty}$ defined on the Whitney
  decomposition $\FF^{(\infty)}$ of $K$.  Suppose that
  $K = S_1 \cup S_2 \cup S_3$ is a partition of $K$ into measurable sets such
  that $S_1,S_2$ are axis-disjoint.  Then,
  \[ \vol{S_3} \ge \Omega\left(\frac{\Phi_{\MM_\infty}}{n^{3/2}}\right) \cdot \min\{\vol{S_1} , \vol{S_2}\}. \]
\end{theorem}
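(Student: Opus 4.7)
My plan is to use the Whitney decomposition $\FF^{(\infty)}$ of $K$ together with the conductance bound on $\MM_\infty$ proved in \cref{thm:conductance}. The variable size of the Whitney cubes near $\partial K$ is the key ingredient allowing us, unlike the fixed-scale tiling argument of Laddha--Vempala, to obtain a lower bound on $\vol{S_3}$ that is nontrivial for arbitrarily small $\min(\vol{S_1},\vol{S_2})$. Assume without loss of generality that $\vol{S_1}\leq\vol{S_2}$. For each cube $Q\in\FF^{(\infty)}$, write $\alpha_j(Q)\defeq\vol{Q\cap S_j}/\vol{Q}$ for $j\in\{1,2\}$, and partition
\[
  T_1\defeq\inb{Q : \alpha_1(Q)>\tfrac12},\qquad
  T_2\defeq\inb{Q : \alpha_2(Q)>\tfrac12},\qquad
  T_m\defeq\FF^{(\infty)}\setminus(T_1\cup T_2).
\]
A short Fubini argument on direction-$e_i$ lines through a shared face shows that for same-sized adjacent cubes $Q,Q'$ with shared face perpendicular to $e_i$, axis-disjointness forces $\alpha_1(Q)+\alpha_2(Q')\leq 1$: the fraction of cross-section lines meeting $S_1\cap Q$ is at least $\alpha_1(Q)$ (by Markov), that meeting $S_2\cap Q'$ is at least $\alpha_2(Q')$, and these two families must be disjoint by axis-disjointness. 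Hence $T_1$ and $T_2$ are never adjacent in the cube graph; the bounded-geometry property \cref{item:side-length-ratio} handles the two possible sidelength ratios with only a small quantitative adjustment of thresholds.

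Next, I would exhibit a set $T\in\{T_1,T_2\}$ with $\pi(T)\leq 1/2$ and $\pi(T)\geq\vol{S_1}/(2\vol{K})$. Since $\alpha_1<1/2$ on $T_1^c$, one has $\vol{S_1\cap T_1^c}\leq\vol{T_1^c}/2$, so a dichotomy on whether $\vol{S_1\cap T_1}\geq\vol{S_1}/2$ either produces $\pi(T_1)\geq\vol{S_1}/(2\vol{K})$ directly or, via the analogous argument and the assumption $\vol{S_1}\leq\vol{S_2}$, produces the same lower bound for $\pi(T_2)$. Applying the conductance bound from \cref{thm:conductance} and the explicit transition formula \cref{eq:2}, and using the non-adjacency of $T_1$ and $T_2$ to restrict the boundary of $T$ to $T_m$, I obtain
\[
  \Phi_{\MM_\infty}\cdot\frac{\vol{S_1}}{2\vol{K}}
  \;\leq\;\Psi_{\MM_\infty}(T)
  \;=\;\frac{1}{4n\vol{K}}\sum_{\substack{Q\in T,\,Q'\in T_m\\ Q\sim Q'}}
        \vol[n-1]{\partial Q\cap\partial Q'}\cdot\min(\sidelen{Q},\sidelen{Q'}).
\]

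The final and most delicate step is to show that this multiscale surface area is at most $O(n^{1/2})\cdot\vol{S_3}$. For each boundary face $F$ between some $Q\in T$ and $Q'\in T_m$ perpendicular to $e_i$, the $e_i$-direction lines through $F$ that meet $S_1\cap Q$—more than half the cross-section—are barred by axis-disjointness from meeting $S_2\cap Q'$; together with the bound $\alpha_1(Q')\leq 1/2$ coming from $Q'\in T_m$, this lets one show that the resulting tube in $Q'$ has $S_3$-content proportional to $\vol[n-1]{F}\cdot\sidelen{Q'}/\sqrt n$ after accounting for the tradeoff between the $\ell_\infty$ face area and the perpendicular $\ell_2$ length. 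Assigning each boundary face to its $T_m$-neighbor and using the bounded $O(n)$ degree of the Whitney cube-adjacency graph to control double-counting then yields $\vol{S_3}\geq\Omega(\Phi_{\MM_\infty}/n^{3/2})\cdot\vol{S_1}$. The main obstacle will be precisely this last accounting: carving $\vol{Q'\cap S_3}$ into per-face contributions (or at worst $O(n)$-times overlapping contributions) while preserving the per-face lower bound, where the bounded-geometry guarantee of the Whitney decomposition is essential because it ensures that the local neighborhood structure around each cube is well-behaved.
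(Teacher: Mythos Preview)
There are two genuine gaps.

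First, your claim that $T_1$ and $T_2$ are never adjacent fails when the abutting Whitney cubes have different sidelengths. Suppose $Q\in T_1$ has twice the sidelength of $Q'\in T_2$, sharing a facet perpendicular to $e_i$. The cross-section of $Q'$ occupies only a $2^{-(n-1)}$ fraction of the cross-section of $Q$, so your Fubini/Markov argument yields at best $\alpha_1(Q)+2^{-(n-1)}\alpha_2(Q')\leq 1$, which is no contradiction. Concretely, $S_1\cap Q$ can fill just over half of $Q$ while avoiding entirely the thin slab of $Q$ aligned with $Q'$, leaving $Q'$ free to be almost all $S_2$. No uniform ``adjustment of thresholds'' repairs this: forcing non-adjacency across the size-$2$ ratio would require thresholds exceeding $1-2^{-(n-1)}$, which destroys the lower bound $\pi(T)\gtrsim\vol{S_1}/\vol{K}$ you need later. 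The paper sidesteps the issue by using the asymmetric split $\FF_2=\{\pi_Q(S_1)\geq 2/3\}$, never asserting non-adjacency, and instead doing a case analysis on the relative sizes of $q\in\FF_2$ and its $\FF_1$-neighbors (Cases~1--3).

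Second, your final step does not produce the claimed $S_3$-content. For same-sized $Q\in T_1$, $Q'\in T_m$, your tube of $e_i$-fibers meeting $S_1\cap Q$ lies in $(S_1\cup S_3)\cap Q'$ and has fraction $>\alpha_1(Q)$; but since $\alpha_1(Q')$ can be as large as $1/2$ and all of $S_1\cap Q'$ may sit inside the tube, you only conclude $\alpha_3(Q')>\alpha_1(Q)-1/2$, which can be arbitrarily small. The paper gets a usable bound here via the \emph{cube isoperimetry lemma} of Laddha--Vempala (proved by Loomis--Whitney): once $\pi_{Q'}(S_1\cup S_3)\geq 2/3$ and $\pi_{Q'}(S_1)<2/3$, axis-disjointness inside $Q'$ forces $\pi_{Q'}(S_3)\geq\Omega(1/n)$. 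That lemma, which you do not invoke, supplies one factor of $n^{-1}$. The remaining $n^{-1/2}$ is not an $\ell_\infty$/$\ell_2$ tradeoff but Harper's vertex-isoperimetric inequality on the $(n-1)$-hypercube, used when a large $\FF_2$-cube borders $2^{n-1}$ half-sized cubes of which only a small fraction lie in $\FF_1$ (Case~2(b) in the paper); this combinatorial input has no counterpart in your outline.
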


Combined with the results already proved for the multiscale chain $\MM_\infty$,
this implies a conductance bound (\cref{thm:CHR-cond}), followed by rapid mixing
from a cold start (\cref{thm:intro-chr}), for the CHR walk.  \cref{theo:
  axis-disjoint isoperimetry} should be compared against the main isoperimetric
inequality of Laddha and Vempala~\cite[Theorem 3; Theorem 2 in the arXiv
version]{laddha_convergence_2021}.  The result of \cite{laddha_convergence_2021}
essentially required the sets $S_1$ and $S_2$ to be not too small: they proved
that for any $\epsilon > 0$ and under the same notation as in \cref{theo:
  axis-disjoint isoperimetry},
\begin{equation}
  \vol{S_3} \ge \epsilon\cdot\Omega\left(\frac{r}{n^{3/2}\cdot R}
  \right) \cdot \inp{\min\{\vol{S_1} , \vol{S_2}\} - \epsilon\cdot\vol{K}},\label{eq:92}
\end{equation}
when the body $K$ satisfies $rB_2 \subseteq K \subseteq RB_2$.  Such an
inequality gives a non-trivial lower bound on the ratio of $\vol{S_3}$ and
$\min\{\vol{S_1} , \vol{S_2}\}$ only when the latter is at least
$\epsilon \cdot \vol{K}$.  Further, due to the $\epsilon$ pre-factor, the volume
guarantee that it gives for $\vol{S_3}$ as a multiple of
$\min\{\vol{S_1} , \vol{S_2}\}$ degrades as the lower bound imposed on the
volumes of the sets $S_1$ and $S_2$ is lowered.  Thus, it cannot lead to a
non-trivial lower bound on the conductance of arbitrarily small sets.  As
discussed in the technical overview, this was the main bottleneck leading to the
rapid mixing result of Laddha and Vempala~\cite{laddha_convergence_2021}
requiring a warm start.  The proof strategy employed by Narayanan and
Srivastava~\cite{narayanan_srivastava_2022} for the polynomial time mixing of
CHR from a warm start was different from that of \cite{laddha_convergence_2021},
but still faced a similar bottleneck: non-trivial conductance bounds could be
obtained only for sets with volume bounded below.  In contrast, \cref{theo:
  axis-disjoint isoperimetry} allows one to prove a non-trivial conductance
bound for sets of arbitrarily small size: see the proof of \cref{thm:CHR-cond}.

The proof of the inequality in \cref{eq:92} by Laddha and
Vempala~\cite{laddha_convergence_2021} built upon an isoperimetry result for
cubes (\cref{lemma: cube isoperimetry} below).  At a high level, they then
combined this with a tiling of the body with a lattice of \emph{fixed} width, to
reduce the problem to a classical isoperimetric inequality for the Euclidean
metric~\cite{LS93}. In part due the fact that they used a lattice of fixed
width, they had to ``throw away'' some of the mass of the $K$ lying close to the
boundary $\bdry{K}$, which led to the troublesome $-\eps\vol{K}$ term in
\cref{eq:92} above.  The inequality in \cref{theo: axis-disjoint isoperimetry}
is able to overcome this barrier and provide a non-trivial conductance bound
even for small sets based on the following two features of our argument. First, at a superficial
level, the multiscale decomposition ensures that we do not have to throw away
any mass (on the other hand, not having a tiling of $K$ by a fixed lattice makes
the argument in the proof of \cref{theo: axis-disjoint isoperimetry} more
complicated).  Second, and more fundamentally, the multiscale decomposition
allows us to indirectly use (through the connection to the conductance of the
$\MM_\infty$ chain) our isoperimetric inequality (\cref{theo isoperimetry}),
which is especially oriented for handling sets with a significant amount of mass
close to the boundary $\bdry{K}$.

We now proceed to the proofs of \cref{theo: axis-disjoint
  isoperimetry,thm:CHR-cond}.  We begin by listing some results and observations
of Laddha and Vempala~\cite{laddha_convergence_2021} about axis-disjoint sets.

\begin{lemma}[\textbf{Laddha and Vempala~\cite[Lemma
    2.4]{laddha_convergence_2021}}]
    \label{lemma: cube isoperimetry}
    Let $S_1,S_2$ be measurable axis-disjoint subsets of an axis-aligned cube $Q$.
    Set $S_3 \defeq Q \setminus (S_1 \cup S_2)$. Then,
    \[ \pi_Q(S_3) \ge \frac{1}{2n} \min\{\pi_Q(S_1), \pi_Q(S_2)\}
      \geq \frac{1}{2n} \cdot \pi_Q(S_1) \cdot \pi_Q(S_2). \] In
    particular, if $\pi_Q(S_1) \le (2/3)$, then
    \[ \pi_Q(S_3) \ge \frac{1}{8n} \pi_Q(S_1). \]
  \end{lemma}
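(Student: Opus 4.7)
The plan is to avoid any random-walk or conductance machinery in this special case of a cube, and derive the inequality directly from axis-disjointness together with the Loomis--Whitney inequality.

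The starting observation is that axis-disjointness in direction $e_i$ forces the projections $P_i(S_1)$ and $P_i(S_2)$ (onto the hyperplane orthogonal to $e_i$) to be disjoint: a common image would place a point of $S_1$ and one of $S_2$ on a single axis-parallel line in direction $e_i$. Consequently, the ``halo'' $U_i \defeq P_i^{-1}(P_i(S_1)) \cap Q$, the union of all direction-$e_i$ lines through $S_1$ in $Q$, avoids $S_2$ entirely, so $U_i \setminus S_1 \subseteq S_3$. Since $U_i$ is a product, $\vol{U_i} = s \cdot \vol[n-1]{P_i(S_1)}$, where $s$ is the sidelength of $Q$, yielding the ``for free'' bound $\vol{S_3} \ge s\cdot\vol[n-1]{P_i(S_1)} - \vol{S_1}$ for every $i \in [n]$.

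The only non-trivial input is to guarantee that at least one coordinate projection of $S_1$ is appreciably large. This is exactly the content of Loomis--Whitney: $\prod_{i=1}^n \vol[n-1]{P_i(S_1)} \ge \vol{S_1}^{n-1}$, which implies $\max_i \vol[n-1]{P_i(S_1)} \ge \vol{S_1}^{(n-1)/n}$. Choosing the maximising $i$ in the halo bound and dividing by $\vol{Q} = s^n$ gives
\[
  \pi_Q(S_3) \;\ge\; \pi_Q(S_1)^{(n-1)/n} - \pi_Q(S_1) \;\ge\; \tfrac{1}{n}\,\pi_Q(S_1)\,\ln\bigl(1/\pi_Q(S_1)\bigr),
\]
where the second inequality uses $e^t \ge 1+t$ applied to $t = -(\ln \pi_Q(S_1))/n \ge 0$.

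After swapping $S_1$ and $S_2$ if necessary, one may assume $\pi_Q(S_1) = \min\{\pi_Q(S_1),\pi_Q(S_2)\} \le 1/2$, so $\ln(1/\pi_Q(S_1)) \ge \ln 2 > 1/2$ and the first stated inequality of the lemma follows. The second inequality is the trivial bound $\min(a,b) \ge ab$ for $a,b \in [0,1]$. For the ``in particular'' strengthening, a short case split suffices: either $\pi_Q(S_2) \ge \pi_Q(S_1)/4$, and the main inequality applied to the smaller of $S_1$ and $S_2$ directly yields the $1/(8n)$ bound; or $\pi_Q(S_2) < \pi_Q(S_1)/4$, in which case $\pi_Q(S_3) \ge 1 - \tfrac{5}{4}\pi_Q(S_1) \ge 1/6$ comfortably dominates $\pi_Q(S_1)/(8n)$ once $\pi_Q(S_1) \le 2/3$. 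I do not anticipate any real difficulty; the only delicate point is the measurability of the projections $P_i(S_1)$, which is standard.
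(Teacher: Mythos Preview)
Your proof is correct and essentially identical to the paper's: both use the halo containment $P_i^{-1}(P_i(S_1))\cap Q \subseteq S_1\cup S_3$ from axis-disjointness, then Loomis--Whitney to bound the projections, arriving at $\pi_Q(S_3)\ge \pi_Q(S_1)^{(n-1)/n}-\pi_Q(S_1)$, followed by the elementary bound $2^{1/n}-1\ge 1/(2n)$. The only cosmetic differences are that the paper averages over $i$ and applies AM--GM where you take the maximum, and the paper's case split for the ``in particular'' is at $\pi_Q(S_2)=1/4$ rather than your $\pi_Q(S_2)=\pi_Q(S_1)/4$.
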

  For completeness, we include the proof.\footnote{In a manuscript that appeared
    after this paper had been posted online, Fernandez V \cite{FV23} has
    reported an improvement to this result, leading to a consequent improvement
    by a factor of $n$ over the mixing time we claim in \cref{thm:intro-chr}.}
  \begin{proof}[Proof of Lemma 2.4 in \cite{laddha_convergence_2021}] Without loss
    of generality, we assume that $Q = [0,1]^n$ and $\vol{S_1} \le \vol{S_2}$,
    so that $\pi_Q(S) = \vol{S_1} \le (1/2)$. Denote by
    $\mathrm{proj}_j(S_1) \subseteq \R^{n-1}$ the $(n-1)$-dimensional projection
    \[ \mathrm{proj}_j(S_1) = \{ (x_1,\ldots,x_{j-1},x_{j+1},\ldots,x_n) \st x \in
      S_1 \} \] of $S_1$ onto the $j$th hyperplane. Since $S_{1}$ and $S_{2}$
    are axis-disjoint, we have, for every $1 \leq j \leq n$,
    \[ S_1 \subseteq \left\{ x \in Q \st (x_1,\ldots,x_{j-1},x_{j+1},\ldots,x_n)
        \in \mathrm{proj}_j(S_1) \right\} \subseteq (S_1 \cup S_3), \] so that
    $\vol{S_3} \ge \vol[n-1]{\mathrm{proj}_j(S_1)} - \vol{S_1}$. Averaging this
    over $j$,
    \begin{align*}
      \pi_Q(S_3) =
      \vol{S_3}
      &\ge \frac{1}{n} \sum_{j=1}^n \left(\vol[n-1]{\mathrm{proj}_j(S_1)} - \vol{S_1}\right) \\
      &\ge \left(\prod_{j=1}^n \vol[n-1]{\mathrm{proj}_j(S_1)}\right)^{1/n} - \vol{S_1} \qquad \text{(AM-GM inequality)} \\
      &\ge \vol{S_1}^{1 - (1/n)} - \vol{S_1} \qquad \text{(Loomis-Whitney inequality)} \\
      &\ge \vol{S_1} \left( 2^{1/n} - 1 \right) \qquad\text{(because $\vol{S_1} \leq 1/2$)} \\
      &\ge \frac{1}{2n} \vol{S_1} = \frac{1}{2n} \min\{\pi_Q(S_1), \pi_Q(S_2)\}.
    \end{align*}
    The final comment in the statement follows by considering separately the
    cases $\pi_Q(S_2) \leq 1/4$ and $\pi_Q(S_2) > 1/4$.
  \end{proof}

  We will need the following corollary.
  \begin{lemma}
    \label{cor: minilemma}
    Let $S$ be a countable union of axis-aligned cubes in $\R^n$ with disjoint
    interiors. Let $S_1, S_2 \subseteq S$ be axis-disjoint and define
    $S_3 \defeq S \setminus (S_1 \cup S_2)$. Suppose that
    $\pi_{Q}(S_1) \le (2/3)$ for each cube $Q \in S$ and
    $\pi_S(S_1 \cup S_3) \ge \eta$. Then,
    \[ \pi_S(S_3) \ge \frac{\eta}{16n}. \]
  \end{lemma}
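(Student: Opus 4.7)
The plan is to apply \cref{lemma: cube isoperimetry} separately to each cube $Q$ in the decomposition of $S$ and then sum the resulting inequalities, combining with the hypothesis $\pi_S(S_1 \cup S_3) \ge \eta$ by a simple case split. First I would fix any cube $Q$ appearing in the decomposition of $S$. Since $S_1$ and $S_2$ are axis-disjoint in $\R^n$, their restrictions $S_1 \cap Q$ and $S_2 \cap Q$ are axis-disjoint subsets of $Q$, and by hypothesis $\pi_Q(S_1 \cap Q) \le 2/3$. Moreover, because $Q \subseteq S$, we have $Q \setminus ((S_1 \cap Q) \cup (S_2 \cap Q)) = Q \cap S_3$. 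The final clause of \cref{lemma: cube isoperimetry} then yields
\begin{equation*}
\vol{S_3 \cap Q} \ge \frac{1}{8n}\,\vol{S_1 \cap Q}.
\end{equation*}

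Next, since the cubes in the decomposition of $S$ have pairwise disjoint interiors, summing the previous bound over all cubes $Q$ in $S$ gives
\begin{equation*}
\vol{S_3} \;=\; \sum_{Q \in S} \vol{S_3 \cap Q} \;\ge\; \frac{1}{8n} \sum_{Q \in S}\vol{S_1 \cap Q} \;=\; \frac{1}{8n}\,\vol{S_1}.
\end{equation*}
Equivalently, $\pi_S(S_3) \ge \frac{1}{8n}\,\pi_S(S_1)$.

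Finally, I would close the argument by a two-case dichotomy on the size of $\pi_S(S_1)$ relative to $\eta$. If $\pi_S(S_1) \ge \eta/2$, then the display above gives $\pi_S(S_3) \ge \frac{\eta}{16n}$ directly. Otherwise $\pi_S(S_1) < \eta/2$, and the hypothesis $\pi_S(S_1) + \pi_S(S_3) = \pi_S(S_1 \cup S_3) \ge \eta$ forces $\pi_S(S_3) > \eta/2 \ge \frac{\eta}{16n}$. This exhausts both cases and yields the claimed inequality. There is no genuine obstacle: the only thing to verify carefully is that the hypothesis $\pi_Q(S_1) \le 2/3$ is exactly what is needed to invoke the sharper ``$1/(8n)$'' form (rather than the weaker product form) of \cref{lemma: cube isoperimetry}, and that summing over the countably many cubes is legitimate because the integrals being summed are the $\sigma$-additive Lebesgue measures of disjoint restrictions of $S_3$ and $S_1$.
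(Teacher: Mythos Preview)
Your proposal is correct and follows essentially the same approach as the paper: both apply the $1/(8n)$ clause of \cref{lemma: cube isoperimetry} cube-by-cube, sum over the decomposition to get $\pi_S(S_3) \ge \frac{1}{8n}\pi_S(S_1)$, and finish with the same case split on whether $\pi_S(S_1) \ge \eta/2$. The only cosmetic difference is that the paper treats the case $\pi_S(S_1) \le \eta/2$ first and writes the sum in terms of $\pi_Q(\cdot)\pi_S(Q)$ rather than volumes.
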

  \begin{proof}
    If $\pi_S(S_1) \le (\eta/2)$, we are done. If $(\eta/2) \le \pi_S(S_1)$, the previous lemma gives that
    \[ \pi_S(S_3) = \sum_{Q \in S} \pi_Q(S_3) \pi_S(Q) \ge \frac{1}{8n} \sum_{Q \in S} \pi_Q(S_1) \pi_S(Q) = \frac{1}{8n} \pi_S(S_1) \ge \frac{\eta}{16n}. \qedhere \]
  \end{proof}

  We will also need the following simple observation.
  \begin{lemma}
    \label{lem: axis-disjoint reachability}
    Let $q,q'$ be axis-aligned cuboids with a common facet, and $S_1,S_2,S_3$ a partition of $(q \cup q')$ such that $S_1$ and $S_2$ are axis-disjoint. Then, $\pi_{q'}(S_1 \cup S_3) \ge \pi_q(S_1)$.
  \end{lemma}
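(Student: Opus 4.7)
The plan is to exploit the fact that two axis-aligned cuboids sharing a common facet are ``stacked'' along a single coordinate direction, and then to fiber each of $q$ and $q'$ into parallel lines in that direction. Axis-disjointness then forces a whole family of lines in $q'$ to avoid $S_2$ entirely, which is exactly what is needed.

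Concretely, I would first observe that since $q$ and $q'$ are axis-aligned cuboids sharing a common facet $F$, there exists an index $j \in [n]$ such that $F$ is orthogonal to $e_j$, and both $q$ and $q'$ project onto the same $(n-1)$-dimensional set $F$ under the map that drops the $j$-th coordinate. Write $h$ and $h'$ for the side lengths of $q$ and $q'$ in the $e_j$ direction. For each $w \in F$, let $L_q(w) \defeq q \cap (w + \R e_j)$ and $L_{q'}(w) \defeq q' \cap (w + \R e_j)$; these are parallel line segments of lengths $h$ and $h'$ respectively, and the pair $(L_q(w), L_{q'}(w))$ lies on a common line in direction $e_j$.

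Next, define $A \defeq \{w \in F : L_q(w) \cap S_1 \ne \emptyset\}$. By the definition of $A$, for each $w \in A$ there is some $x \in S_1$ with $x \in L_q(w)$. Since $L_{q'}(w) \subseteq x + \R e_j$ and $S_1, S_2$ are axis-disjoint, no point of $L_{q'}(w)$ can lie in $S_2$; hence $L_{q'}(w) \subseteq S_1 \cup S_3$ entirely. From this I would immediately derive the two Fubini-type bounds
\[
\vol{q \cap S_1} = \int_A \len(L_q(w) \cap S_1)\,dw \le h \cdot \vol[n-1]{A},
\]
\[
\vol{q' \cap (S_1 \cup S_3)} \ge \int_A \len(L_{q'}(w) \cap (S_1 \cup S_3))\,dw = h' \cdot \vol[n-1]{A}.
\]
Dividing by $\vol{q} = h \cdot \vol[n-1]{F}$ and $\vol{q'} = h' \cdot \vol[n-1]{F}$ respectively, the common factor $\vol[n-1]{A}/\vol[n-1]{F}$ dominates $\pi_q(S_1)$ from above and bounds $\pi_{q'}(S_1 \cup S_3)$ from below, yielding the desired inequality.

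There is essentially no obstacle here: the only point to be slightly careful about is the measurability of $A$ and of the slice functions $w \mapsto \len(L_q(w) \cap S_1)$, which follows routinely from Fubini's theorem applied to the indicator functions of the measurable sets $S_1, S_2, S_3$. Everything else is a direct consequence of the geometric observation that axis-disjointness along $e_j$ forces an entire line segment in $q'$ to avoid $S_2$ as soon as the line's intersection with $q$ meets $S_1$.
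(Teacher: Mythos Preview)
Your proof is correct and is essentially the same as the paper's: the paper defines $T \defeq (S_1 \cap q) + e_1\R$, notes that $S_2 \cap T = \emptyset$ by axis-disjointness so that $q' \cap T \subseteq q' \cap (S_1 \cup S_3)$, and concludes via $\vol{q' \cap T}/\vol{q'} = \vol{q \cap T}/\vol{q} \ge \pi_q(S_1)$. Your set $\bigcup_{w \in A}(L_q(w) \cup L_{q'}(w))$ is exactly $T \cap (q \cup q')$, and your Fubini computation unpacks the same volume equality.
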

  \begin{proof}
    Assume without loss of generality that the standard unit vector $e_1$ is
    normal to the common facet $f$. Consider the set
    $T \defeq (S_1 \cap q) + e_1\R$, i.e., the cylindrical set obtained by
    translating the set $S_1 \cap q$ along the direction perpendicular to the
    common facet $f$.  Since $S_1$ and $S_2$ are axis disjoint, $S_2$ cannot
    intersect $T$.  We thus have
    \[ q' \cap T \subseteq q' \cap (S_1 \cup S_3) \text { and }
      q \cap T \supseteq q \cap S_1. \]
    From this, we obtain
    \begin{align*}
      \pi_{q'}(S_1 \cup S_3)
      = \frac{\vol{q' \cap (S_1 \cup S_3)}}{\vol{q'}}
      & \ge \frac{\vol{q' \cap T}}{\vol{q'}}\\
      &= \frac{\surfarea{f \cap T}}{\surfarea{f}}\\
      &= \frac{\vol{q \cap T}}{\vol{q}}
        \geq \frac{\vol{q \cap S_1}}{\vol{q}} = \pi_q(S_1).
    \end{align*}
    Here, the second and third equalities follow since, by definition, $T$ is a
    cylindrical set with axis perpendicular to the common facet $f$ of the
    cuboids $q$ and $q'$.
  \end{proof}

  We are now ready to prove \cref{theo: axis-disjoint isoperimetry}.
  \begin{proof}[Proof of \cref{theo: axis-disjoint isoperimetry}]
    Assume without loss of generality that $\vol{S_1} \le \vol{S_2}$. Let $\FF = \FF^{(\infty)}$
    be a multiscale partition of $K$ into axis-aligned cubes as described in
    \cref{sec:whitney-cubes}, and split it as
    \begin{align*}
      \FF_1 &= \left\{ Q \in \FF \st \pi_Q(S_1) < 2/3 \right\} \text{ and}  \\
      \FF_2 &= \left\{ Q \in \FF \st \pi_Q(S_1) \ge 2/3 \right\}.
    \end{align*}
    As before, given a collection of cubes $\FF' \subseteq \FF$, we
    interchangeably use it to denote $\bigcup_{Q \in \FF'} Q$. Note
    that $\vol{S_1 \cap \FF_1} + \vol{S_1 \cap \FF_2} = \vol{S_1}$.  Now, if
    $\vol{S_1 \cap \FF_1} \ge (1/2) \vol{S_1}$, then
    \begin{align*}
      \vol{S_3} &\ge \sum_{Q \in \FF_1} \vol{S_3 \cap Q} \\
        &\ge \sum_{Q \in \FF_1} \frac{1}{8n} \vol{S_1 \cap Q} \quad \text{(\Cref{lemma: cube isoperimetry})} \\
        &= \frac{1}{8n} \vol{S_1 \cap \FF_1} \ge \frac{1}{16n} \vol{S_1},
    \end{align*}
    and the claimed lower bound on the volume of $S_3$ follows.  Therefore, we
    can assume that
    \begin{equation}
     \vol{\FF_2} \geq   \vol{S_1 \cap \FF_2} \ge (1/2) \vol{S_1}.\label{eq:119}
    \end{equation}
    For any cube $q \in \FF_2$ and facet $f$ of $q$, denote by $\eta_f$ the
    fraction of the facet that is incident on $\FF_1$, and set
    \begin{equation}
      \eta_q \defeq \frac{1}{2n}
      \sum_{\text{facet $f$ of $q$}} \eta_f.\label{eq:4}
    \end{equation}
    We shall show that for some universal constant $c$,
    \begin{equation}
      \label{eq:chr-reduce-to-cube}
      \vol{S_3} \ge \frac{c}{n^{3/2}} \sum_{q \in \FF_2} \eta_q \vol{q}.
    \end{equation}
    Let $q_2 \in \FF_2$ and $f$ a facet of $q_2$ such that $\eta_f \ne 0$.
    \begin{enumerate}
      \item Case 1. $f$ is between two cubes of the same sidelength.
      \begin{figure}[H]
        \centering
        \begin{tikzpicture}
          \coordinate (A1) at (0, 0, 0);
          \coordinate (A2) at (\Depth, 0, 0);
          \coordinate (A3) at (\Depth, 0, \Height);
          \coordinate (A4) at (0, 0, \Height);
          \coordinate (B1) at (0, \Width, 0);
          \coordinate (B2) at (\Depth, \Width, 0);
          \coordinate (B3) at (\Depth, \Width, \Height);
          \coordinate (B4) at (0, \Width, \Height);

          \coordinate (A1') at (\Depth, 0, 0);
          \coordinate (A2') at (2*\Depth, 0, 0);
          \coordinate (A3') at (2*\Depth, 0, \Height);
          \coordinate (A4') at (\Depth, 0, \Height);
          \coordinate (B1') at (\Depth, \Width, 0);
          \coordinate (B2') at (2*\Depth, \Width, 0);
          \coordinate (B3') at (2*\Depth, \Width, \Height);
          \coordinate (B4') at (\Depth, \Width, \Height);

          \draw (A1) -- (A2) -- (A3) -- (A4) -- cycle; \draw (A1) -- (B1) -- (B4) -- (A4) -- cycle; \draw (A1) -- (A2) -- (B2) -- (B1) -- cycle; \draw[fill=red!40,opacity=0.8] (B1) -- (B2) -- (B3) -- (B4) -- cycle; \draw[fill=blue!100,opacity=0.8] (A2) -- (A3) -- (B3) -- (B2) -- cycle; \draw[fill=red!40,opacity=0.8] (A3) -- (B3) -- (B4) -- (A4) -- cycle; 

          \draw (A1') -- (A2') -- (A3') -- (A4') -- cycle; \draw (A1') -- (B1') -- (B4') -- (A4') -- cycle; \draw (A1') -- (A2') -- (B2') -- (B1') -- cycle; \draw[fill=green!10,opacity=0.6] (A3') -- (B3') -- (B4') -- (A4') -- cycle; \draw[fill=green!10,opacity=0.6] (B1') -- (B2') -- (B3') -- (B4') -- cycle; \draw[fill=green!10,opacity=0.6] (A2') -- (A3') -- (B3') -- (B2') -- cycle; \end{tikzpicture}

        {Case 1. The red cube is $q_2 \in \FF_2$ and the green is $q \in \FF_1$. The blue facet is $f$.}
      \end{figure}
      Let $q$ be the cube other than $q_2$ that is bordering $f$. Observe that $\eta_f = 1$. \\
      Using \Cref{lem: axis-disjoint reachability} on $q_2$ and $q$, we get that $\pi_q(S_1 \cup S_3) \ge \pi_{q_2}(S_1) \ge (2/3)$.
By \Cref{cor: minilemma}, $\pi_q(S_3) \ge 1/(24n)$ so
      \begin{equation}
        \label{eqn: chr-isoperimetry-1}
        \vol{S_3 \cap q} \ge \frac{1}{24n} \eta_f \vol{q_2}.
      \end{equation}

      \item Case 2. The cubes other than $q_2$ incident on $f$ are smaller than $q_2$.

      \begin{figure}[H]
        \centering
        \begin{tikzpicture}
\coordinate (A1) at (0, 0, 0);
          \coordinate (A2) at (\Depth, 0, 0);
          \coordinate (A3) at (\Depth, 0, \Height);
          \coordinate (A4) at (0, 0, \Height);
          \coordinate (B1) at (0, \Width, 0);
          \coordinate (B2) at (\Depth, \Width, 0);
          \coordinate (B3) at (\Depth, \Width, \Height);
          \coordinate (B4) at (0, \Width, \Height);
          
          \draw (A1) -- (A2) -- (A3) -- (A4) -- cycle; \draw (A1) -- (B1) -- (B4) -- (A4) -- cycle; \draw (A1) -- (A2) -- (B2) -- (B1) -- cycle; \draw[fill=red!40,opacity=0.8] (B1) -- (B2) -- (B3) -- (B4) -- cycle; \draw[fill=blue!100,opacity=0.8] (A2) -- (A3) -- (B3) -- (B2) -- cycle; \draw[fill=red!40,opacity=0.8] (A3) -- (B3) -- (B4) -- (A4) -- cycle; 

          \coordinate (A11) at (\Depth, 0, 0);
          \coordinate (A12) at (\Depth+0.5*\Depth, 0, 0);
          \coordinate (A13) at (\Depth+0.5*\Depth, 0, 0.5*\Height);
          \coordinate (A14) at (\Depth, 0, 0.5*\Height);
          \coordinate (B11) at (\Depth, 0.5*\Width, 0);
          \coordinate (B12) at (\Depth+0.5*\Depth, 0.5*\Width, 0);
          \coordinate (B13) at (\Depth+0.5*\Depth, 0.5*\Width, 0.5*\Height);
          \coordinate (B14) at (\Depth, 0.5*\Width, 0.5*\Height);

          \draw (A11) -- (A12) -- (A13) -- (A14) -- cycle; \draw (A11) -- (B11) -- (B14) -- (A14) -- cycle; \draw (A11) -- (A12) -- (B12) -- (B11) -- cycle; \draw[opacity=0.6,fill=red!30] (A13) -- (B13) -- (B14) -- (A14) -- cycle; \draw[opacity=0.6,fill=red!30] (B11) -- (B12) -- (B13) -- (B14) -- cycle; \draw[opacity=0.6,fill=red!30] (A12) -- (A13) -- (B13) -- (B12) -- cycle; 

          \coordinate (A21) at (\Depth+0,           0.5*\Width+0,          0);
          \coordinate (A22) at (\Depth+0.5*\Depth,  0.5*\Width+0,          0);
          \coordinate (A23) at (\Depth+0.5*\Depth,  0.5*\Width+0,          0.5*\Height);
          \coordinate (A24) at (\Depth+0,           0.5*\Width+0,          0.5*\Height);
          \coordinate (B21) at (\Depth+0,           0.5*\Width+0.5*\Width, 0);
          \coordinate (B22) at (\Depth+0.5*\Depth,  0.5*\Width+0.5*\Width, 0);
          \coordinate (B23) at (\Depth+0.5*\Depth,  0.5*\Width+0.5*\Width, 0.5*\Height);
          \coordinate (B24) at (\Depth+0,           0.5*\Width+0.5*\Width, 0.5*\Height);

          \draw (A21) -- (A22) -- (A23) -- (A24) -- cycle; \draw (A21) -- (B21) -- (B24) -- (A24) -- cycle; \draw (A21) -- (A22) -- (B22) -- (B21) -- cycle; \draw[fill=green!10,opacity=0.6] (A23) -- (B23) -- (B24) -- (A24) -- cycle; \draw[fill=green!10,opacity=0.6] (B21) -- (B22) -- (B23) -- (B24) -- cycle; \draw[fill=green!10,opacity=0.6] (A22) -- (A23) -- (B23) -- (B22) -- cycle; 

          \coordinate (A31) at (\Depth+0,           0,          0.5*\Height+0);
          \coordinate (A32) at (\Depth+0.5*\Depth,  0,          0.5*\Height+0);
          \coordinate (A33) at (\Depth+0.5*\Depth,  0,          0.5*\Height+0.5*\Height);
          \coordinate (A34) at (\Depth+0,           0,          0.5*\Height+0.5*\Height);
          \coordinate (B31) at (\Depth+0,           0.5*\Width, 0.5*\Height+0);
          \coordinate (B32) at (\Depth+0.5*\Depth,  0.5*\Width, 0.5*\Height+0);
          \coordinate (B33) at (\Depth+0.5*\Depth,  0.5*\Width, 0.5*\Height+0.5*\Height);
          \coordinate (B34) at (\Depth+0,           0.5*\Width, 0.5*\Height+0.5*\Height);

          \draw (A31) -- (A32) -- (A33) -- (A34) -- cycle; \draw (A31) -- (B31) -- (B34) -- (A34) -- cycle; \draw (A31) -- (A32) -- (B32) -- (B31) -- cycle; \draw[fill=green!10,opacity=0.6] (A33) -- (B33) -- (B34) -- (A34) -- cycle; \draw[fill=green!10,opacity=0.6] (B31) -- (B32) -- (B33) -- (B34) -- cycle; \draw[fill=green!10,opacity=0.6] (A32) -- (A33) -- (B33) -- (B32) -- cycle; 

          \coordinate (A41) at (\Depth+0,           0.5*\Width+0,          0.5*\Height+0);
          \coordinate (A42) at (\Depth+0.5*\Depth,  0.5*\Width+0,          0.5*\Height+0);
          \coordinate (A43) at (\Depth+0.5*\Depth,  0.5*\Width+0,          0.5*\Height+0.5*\Height);
          \coordinate (A44) at (\Depth+0,           0.5*\Width+0,          0.5*\Height+0.5*\Height);
          \coordinate (B41) at (\Depth+0,           0.5*\Width+0.5*\Width, 0.5*\Height+0);
          \coordinate (B42) at (\Depth+0.5*\Depth,  0.5*\Width+0.5*\Width, 0.5*\Height+0);
          \coordinate (B43) at (\Depth+0.5*\Depth,  0.5*\Width+0.5*\Width, 0.5*\Height+0.5*\Height);
          \coordinate (B44) at (\Depth+0,           0.5*\Width+0.5*\Width, 0.5*\Height+0.5*\Height);

          \draw (A41) -- (A42) -- (A43) -- (A44) -- cycle; \draw (A41) -- (B41) -- (B44) -- (A44) -- cycle; \draw (A41) -- (A42) -- (B42) -- (B41) -- cycle; \draw[fill=green!10,opacity=0.6] (A43) -- (B43) -- (B44) -- (A44) -- cycle; \draw[fill=green!10,opacity=0.6] (B41) -- (B42) -- (B43) -- (B44) -- cycle; \draw[fill=green!10,opacity=0.6] (A42) -- (A43) -- (B43) -- (B42) -- cycle; \end{tikzpicture}

        {Case 2. The red cubes are in $\FF_2$ and the green in $\FF_1$. The blue facet is $f$.}
      \end{figure}

      Let $T$ be the set of all cubes of smaller size incident on $f$, and set
      $T' \defeq T \cap \FF_1$.  Recall from \cref{item:side-length-ratio} of
      \cref{theo 1.1} that all cubes in $T$ must then have sidelength exactly
      half the sidelength of $q_2$.  We use two consequences of this fact.
      First, that, $\vol{T'} = \eta_f \vol{T} = (1/2) \eta_f \vol{q_2}$.
      Second, that the graph with vertex set equal to the set of cubes in $T$ in
      which two vertices are adjacent if and only if the corresponding cubes are
      adjacent and have a facet in common is exactly the $(n-1)$-dimensional
      Boolean hypercube.
      \begin{enumerate}
      \item Case 2(a). $\eta_f \ge (1/2)$. In this case,
        $\vol{T \setminus T'} \le (1/2) \vol{T}$. Using \Cref{lem: axis-disjoint reachability} on $q_2$ and $T$,
        $\pi_T(S_1 \cup S_3) \ge \pi_{q_2}(S_1) \ge (2/3)$, so $\vol{T \cap (S_1 \cup S_3)} \geq (2/3)\vol{T}$.
        Combining, we get
        \begin{equation*}
          \vol{T' \cap (S_1 \cup S_3)}
          \geq \vol{T \cap (S_1 \cup S_3)}
          - \vol{T\setminus T'} \geq \frac{1}{6}\vol{T} \geq \frac{1}{6}\vol{T'}.
        \end{equation*}
        As a result, $\pi_{T'}(S_1 \cup S_3) \ge (1/6)$. By \Cref{cor:
          minilemma}, $\pi_{T'}(S_3) \ge \frac{1}{96n}$ (since each cube in $T'$
        lies in $\FF_{1}$), so that
        \begin{equation}
          \label{eqn: chr-isoperimetry-2a}
          \vol{S_3 \cap T} \ge \vol{S_3 \cap T'} \ge \frac{1}{96n} \vol{T'} \ge \frac{1}{200n} \eta_f \vol{q_2}.
        \end{equation}

      \item Case 2(b). $\eta_f < (1/2)$.  By \Cref{lem: axis-disjoint reachability},
        for any cube $q$ in $T'$ adjacent to a cube in
        $(T \setminus T') \subseteq \FF_2$, we have
        $\pi_q(S_1 \cup S_3) \ge (2/3)$. Since $q \in T' \subseteq \FF_1$,
        \cref{cor: minilemma} then implies $\pi_q(S_3) \ge 1/(24n)$, so that for
        such cubes
        \begin{equation}
          \label{eq:87}
          \vol{q \cap
            S_{3}} \geq \frac{1}{24n} \cdot \vol{q}.
        \end{equation}
        We shall show that there are many such cubes.\\
        To do this, consider the $(n-1)$-dimensional hypercube graph with vertex
        set equal to the set of cubes $T$ and with two vertices being adjacent if
        and only if the corresponding cubes are adjacent in the sense of sharing a
        facet.
Due to Harper's Theorem \cite{hypercube-vtx-expansion}, the vertex expansion of $T'$ in the hypercube graph is
        $\Omega(n^{-1/2})$ (see \Cref{cor:hypercube-vtx-expansion} for more details). In particular, an $\Omega(n^{-1/2})$ fraction of the cubes in $T'$,
        which therefore constitute at least an $\eta_f \Omega(n^{-1/2})$
        fraction of the cubes in $T$, are adjacent to a cube in
        $(T \setminus T')$.  The total volume of these cubes is thus
        $\eta_f \Omega(n^{-1/2})\cdot{\vol{q_{2}}/2}$. Consequently,
        \begin{equation}
          \label{eqn: chr-isoperimetry-2b}
          \vol{S_3 \cap T} \ge \vol{S_3 \cap T'}
          \stackrel{\text{\cref{eq:87}}}{\ge}
          \frac{1}{48n} \cdot \Omega(n^{-1/2})
          \eta_f \vol{q_2} = \Omega(n^{-3/2}) \eta_f \vol{q_2}.
        \end{equation}
      \end{enumerate}

      \item Case 3. The cube $q_1$ other than $q_2$ incident on $f$ is larger
        than $q_2$.

      \begin{figure}[H]
        \centering
        \begin{tikzpicture}

          \coordinate (A11) at (0, 0, 0);
          \coordinate (A12) at (-0.5*\Depth, 0, 0);
          \coordinate (A13) at (-0.5*\Depth, 0, 0.5*\Height);
          \coordinate (A14) at (0, 0, 0.5*\Height);
          \coordinate (B11) at (0, 0.5*\Width, 0);
          \coordinate (B12) at (-0.5*\Depth, 0.5*\Width, 0);
          \coordinate (B13) at (-0.5*\Depth, 0.5*\Width, 0.5*\Height);
          \coordinate (B14) at (0, 0.5*\Width, 0.5*\Height);

          \draw (A11) -- (A12) -- (A13) -- (A14) -- cycle; \draw (A12) -- (A13) -- (B13) -- (B12) -- cycle; \draw (A11) -- (A12) -- (B12) -- (B11) -- cycle; \draw[fill=green!10,opacity=0.7] (A13) -- (B13) -- (B14) -- (A14) -- cycle; \draw[fill=green!10,opacity=0.7] (B11) -- (B12) -- (B13) -- (B14) -- cycle; \draw[fill=green!10,opacity=0.7] (A11) -- (B11) -- (B14) -- (A14) -- cycle; 

          \coordinate (A21) at (0,           0.5*\Width+0,          0);
          \coordinate (A22) at (-0.5*\Depth,  0.5*\Width+0,          0);
          \coordinate (A23) at (-0.5*\Depth,  0.5*\Width+0,          0.5*\Height);
          \coordinate (A24) at (0,           0.5*\Width+0,          0.5*\Height);
          \coordinate (B21) at (0,           0.5*\Width+0.5*\Width, 0);
          \coordinate (B22) at (-0.5*\Depth,  0.5*\Width+0.5*\Width, 0);
          \coordinate (B23) at (-0.5*\Depth,  0.5*\Width+0.5*\Width, 0.5*\Height);
          \coordinate (B24) at (0,           0.5*\Width+0.5*\Width, 0.5*\Height);

          \draw (A21) -- (A22) -- (A23) -- (A24) -- cycle; \draw (A22) -- (A23) -- (B23) -- (B22) -- cycle; \draw (A21) -- (A22) -- (B22) -- (B21) -- cycle; \draw[fill=red!30,opacity=0.7] (A23) -- (B23) -- (B24) -- (A24) -- cycle; \draw[fill=red!30,opacity=0.7] (B21) -- (B22) -- (B23) -- (B24) -- cycle; \draw[fill=red!30,opacity=0.7] (A21) -- (B21) -- (B24) -- (A24) -- cycle; 

          \coordinate (A31) at (0,           0,          0.5*\Height+0);
          \coordinate (A32) at (-0.5*\Depth,  0,          0.5*\Height+0);
          \coordinate (A33) at (-0.5*\Depth,  0,          0.5*\Height+0.5*\Height);
          \coordinate (A34) at (0,           0,          0.5*\Height+0.5*\Height);
          \coordinate (B31) at (0,           0.5*\Width, 0.5*\Height+0);
          \coordinate (B32) at (-0.5*\Depth,  0.5*\Width, 0.5*\Height+0);
          \coordinate (B33) at (-0.5*\Depth,  0.5*\Width, 0.5*\Height+0.5*\Height);
          \coordinate (B34) at (0,           0.5*\Width, 0.5*\Height+0.5*\Height);

          \draw (A31) -- (A32) -- (A33) -- (A34) -- cycle; \draw (A31) -- (A32) -- (B32) -- (B31) -- cycle; \draw (A32) -- (A33) -- (B33) -- (B32) -- cycle; \draw[fill=red!40,opacity=0.7] (A33) -- (B33) -- (B34) -- (A34) -- cycle; \draw[fill=red!40,opacity=0.7] (B31) -- (B32) -- (B33) -- (B34) -- cycle; \draw[fill=blue!100,opacity=0.7] (A31) -- (B31) -- (B34) -- (A34) -- cycle; 

          \coordinate (A41) at (0,           0.5*\Width+0,          0.5*\Height+0);
          \coordinate (A42) at (-0.5*\Depth,  0.5*\Width+0,          0.5*\Height+0);
          \coordinate (A43) at (-0.5*\Depth,  0.5*\Width+0,          0.5*\Height+0.5*\Height);
          \coordinate (A44) at (0,           0.5*\Width+0,          0.5*\Height+0.5*\Height);
          \coordinate (B41) at (0,           0.5*\Width+0.5*\Width, 0.5*\Height+0);
          \coordinate (B42) at (-0.5*\Depth,  0.5*\Width+0.5*\Width, 0.5*\Height+0);
          \coordinate (B43) at (-0.5*\Depth,  0.5*\Width+0.5*\Width, 0.5*\Height+0.5*\Height);
          \coordinate (B44) at (0,           0.5*\Width+0.5*\Width, 0.5*\Height+0.5*\Height);

          \draw (A41) -- (A42) -- (A43) -- (A44) -- cycle; \draw (A42) -- (A43) -- (B43) -- (B42) -- cycle; \draw (A41) -- (A42) -- (B42) -- (B41) -- cycle; \draw[fill=green!10,opacity=0.7] (A43) -- (B43) -- (B44) -- (A44) -- cycle; \draw[fill=green!10,opacity=0.7] (B41) -- (B42) -- (B43) -- (B44) -- cycle; \draw[fill=green!10,opacity=0.7] (A41) -- (B41) -- (B44) -- (A44) -- cycle; 

\coordinate (A1) at (0, 0, 0);
          \coordinate (A2) at (\Depth, 0, 0);
          \coordinate (A3) at (\Depth, 0, \Height);
          \coordinate (A4) at (0, 0, \Height);
          \coordinate (B1) at (0, \Width, 0);
          \coordinate (B2) at (\Depth, \Width, 0);
          \coordinate (B3) at (\Depth, \Width, \Height);
          \coordinate (B4) at (0, \Width, \Height);
          
          \draw (A1) -- (A2) -- (A3) -- (A4) -- cycle; \draw (A1) -- (B1) -- (B4) -- (A4) -- cycle; \draw (A1) -- (A2) -- (B2) -- (B1) -- cycle; \draw[fill=green!10,opacity=0.6] (B1) -- (B2) -- (B3) -- (B4) -- cycle; \draw[fill=green!10,opacity=0.6] (A2) -- (A3) -- (B3) -- (B2) -- cycle; \draw[fill=green!10,opacity=0.6] (A3) -- (B3) -- (B4) -- (A4) -- cycle; \end{tikzpicture}

        {Case 3. The red cubes are in $\FF_2$ and the green in $\FF_1$.\\
        The blue facet is $f$. $f_1$ is the larger facet it is part of.}
      \end{figure}

      In this case, $\eta_f = 1$ again. Further, from \cref{item:side-length-ratio} of \cref{theo 1.1}, the sidelength of $q_{1}$ is exactly twice that of $q_2$. Let $f_1$ be the facet of $q_1$ that contains $f$, and let $\eta_{f_1}$ be the fraction of $f_1$ that is incident on $\FF_2$. Let $T$ be the set of all cubes of smaller size incident on $f_1$, and $T' = T \cap \FF_2$ (note that $T'$ is now defined in terms of $\FF_2$, unlike in Case 2). Clearly, $\vol{T'} = \eta_{f_1} \vol{T} = (\eta_{f_1}/2) \vol{q_1}$. For each cube $q$ in $T'$, let $f_q$ be the facet of $q$ contained in $f_1$, so $\eta_{f_q} = 1$ for all such $q$. \\
      Noting that $\pi_T(S_1) \ge \pi_{T'}(S_1) \pi_{T}(T') \ge (2/3)\eta_{f_1}$ and using \Cref{lem: axis-disjoint reachability} on $T$ and $q_1$, we get that
      $\pi_{q_1}(S_1 \cup S_3) \ge (2/3)\eta_{f_1}$, so that by
      \Cref{cor: minilemma},
      \[ \pi_{q_1}(S_3 \cap q_1) \ge \frac{\eta_{f_1}}{24n}. \]
      So,
      \begin{equation}
        \label{eqn: chr-isoperimetry-3}
        \vol{S_3 \cap q_1} \ge \frac{\eta_{f_1}}{24n} \vol{q_1} = \frac{1}{12n} \vol{T'} = \frac{1}{12n} \sum_{q \in T'} \eta_{f_q} \vol{q}.
      \end{equation}
      This allows us to associate a volume of $S_3 \cap q_1$ of measure
      $\frac{1}{12n}\eta_{f_q} \vol{q}$ to each cube $q \in T'$, such that the
      volumes associated with distinct cubes in $T'$ are disjoint.
    \end{enumerate}
From \cref{eqn: chr-isoperimetry-1,eqn: chr-isoperimetry-2a,eqn:
      chr-isoperimetry-2b,eqn: chr-isoperimetry-3}, we see that each facet $f$
    of a cube $q$ in $\FF_2$ which abuts a cube in $\FF_1$ can be associated
    with a volume of $S_3$ which lies in cubes in $\FF_1$ that abut $f$ and
    which is of measure at least $c\cdot{n^{-3/2}}\cdot\eta_f \vol{q}$, for some
    universal constant $c$.

    Observe also that for distinct facets $f,f'$ of cubes in $\FF_2$ whose
    normal vectors pointing out of their respective cubes point in the same
    direction, these $S_3$ volumes are disjoint.  Indeed, if $f$ falls in Cases
    1 or 2 above, then the cubes in $\FF_1$ which contain the $S_3$ volume
    associated with $f$ and $f'$ are disjoint.  If $f$ falls in Case 3, $f'$ can
    share the cube in $\FF_1$ which contains the $S_3$-volume associated with
    $f$ only if $f'$ abuts the same cube $q_1$ in $\FF_1$ as $f$ does.  But in
    this case, the remark after \cref{eqn: chr-isoperimetry-3} shows that the
    $S_3$ volumes associated with $f,f'$ are still distinct.

    Since there are only $2n$ distinct directions for the outward normal of a
    facet (two each in each of the $n$ dimensions) we therefore get
    \begin{align}
      2n \cdot \vol{S_3}
      &\ge \frac{c}{n^{3/2}}
        \sum_{q \in \FF_2}
        \sum_{\text{facet $f$ of $q$}}
        \eta_f \vol{q}  \\
      &= 2n \cdot \frac{c}{2n^{3/2}}
        \sum_{q \in \FF_2}
        \eta_q \vol{q}, \text{using the definition of $\eta_q$ in \cref{eq:4},}
        \label{eqn: chr-isoperimetry-penultimate}
    \end{align}
    proving \cref{eq:chr-reduce-to-cube}.
    Now, note that $\pi(\FF_2) \le (3/2) \pi(S_1) \le (3/4)$.  In the multiscale
    walk discussed in earlier sections, the ergodic flow out of
    $\FF_2 \subseteq \FF$ was (noting that $\FF_1 = \FF \setminus \FF_2$)
    \begin{align*}
      \Psi_{{\MM_\infty}}(\FF_2 , \FF \setminus \FF_2)
      &= \sum_{q \in \FF_2} \pi(q) \cdot \frac{1}{2} \sum_{\substack{q' \in \FF_1\\\text{$q'$ abuts $q$}}} \frac{\vol[n-1]{\partial q' \cap \partial q}}{\vol[n-1]{\partial q}} \min\left\{ 1 , \frac{\sidelen{q'}}{\sidelen{q}} \right\} \\
      &\le \sum_{q \in \FF_2} \eta_q \pi(q).
    \end{align*}
    In particular, because $\pi(\FF_2) \le (3/2) \pi(S_1) \le (3/4)$, we have that
    \[ \sum_{q \in \FF_2} \eta_q \pi(q) \ge \Psi_{\MM_\infty}(\FF_2, \FF
      \setminus \FF_2) \ge \frac{1}{4} \Phi_{\MM_\infty}\cdot \pi(\FF_2) \ge
      \frac{1}{8} \Phi_{\MM_\infty} \cdot \pi(S_1). \] Here, the second inequality
    uses
    $\min\inb{\pi(\FF_2), \pi(\FF \setminus \FF_2)} \geq \pi(\FF_2)\pi(\FF
    \setminus \FF_2) \geq \frac{1}{4} \pi(\FF_2)$
    (which follows since $\pi(\FF_2) \leq 3/4$), while the third inequality uses
    \cref{eq:119} (where it was argued that it is enough to consider the case
    where $\pi(\FF_2 \cap S_1) \geq \pi(S_1)/2$). Plugging this back into
    \cref{eqn: chr-isoperimetry-penultimate}, we get
    \[ \vol{S_3} \ge \Omega\left(\frac{\Phi_{\MM_\infty}}{n^{3/2}}\right) \vol{S_1}. \qedhere \]
  \end{proof}

  \subsection{Rapid mixing of CHR from a cold start: Proof of
    \texorpdfstring{\cref{thm:intro-chr}}{Theorem 1.1}}

  Armed with the new isoperimetric inequality for axis-disjoint sets given by
  \cref{theo: axis-disjoint isoperimetry}, we can now replicate the argument of
  \cite{laddha_convergence_2021} to get a conductance lower-bound bound even for
  small sets, in place of the $s$-conductance bound obtained in that paper, which
  approached zero as the size of the set approached zero.

  \begin{theorem}\label{thm:CHR-cond}
    Let $K$ be a convex body in $\R^n$, and let $\Phi_{\MM_\infty}$ denote the
    conductance of the Markov chain $\MM_\infty$ on the Whitney decomposition
    $\FF^{(\infty)}$ of $K$.  Then, the conductance of the coordinate
    hit-and-run chain on $K$ is $\Omega(\Phi_{\MM_\infty}\,n^{-5/2})$.
  \end{theorem}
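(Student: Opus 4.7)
The plan is to follow the standard template of reducing the CHR conductance to an isoperimetric inequality for axis-disjoint sets, in the style of \cite{laddha_convergence_2021}, but using \cref{theo: axis-disjoint isoperimetry} as the isoperimetric input so that no warm-start or lower bound on the set size is needed. Fix a measurable $S \subseteq K$ with $\pi_K(S) \leq 1/2$; the goal is to show that the ergodic flow satisfies $\Psi_{\CHR}(S) \geq \Omega(\Phi_{\MM_\infty}/n^{5/2}) \cdot \pi_K(S)$. I would choose a small absolute constant $\epsilon$ (any $\epsilon < 1/4$, say $\epsilon = 1/8$) and define the ``interiors''
\[
S_1 \defeq \inb{x \in S : P_{\CHR}(x, K \setminus S) < \epsilon/n}, \qquad S_2 \defeq \inb{x \in K \setminus S : P_{\CHR}(x, S) < \epsilon/n},
\]
and set $S_3 \defeq K \setminus (S_1 \cup S_2)$.

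First I would dispose of the easy case in which $\vol{S_1} \leq \vol{S}/2$ (the symmetric case $\vol{S_2} \leq \vol{K \setminus S}/2$ is identical). In this regime at least half the volume of $S$ consists of points from which CHR has probability at least $\epsilon/n$ of leaving $S$, which directly gives $\Psi_{\CHR}(S) \geq (\epsilon/2n) \, \pi_K(S) = \Omega(1/n)\,\pi_K(S)$, well above the desired bound since $\Phi_{\MM_\infty} \leq 1$.

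In the remaining case, $\vol{S_1} \geq \vol{S}/2$ and $\vol{S_2} \geq \vol{K \setminus S}/2 \geq \vol{S}/2$. The heart of the argument is to show that in this case $S_1$ and $S_2$ are axis-disjoint. For any $x \in K$ and any coordinate direction $i$, the contribution of direction $i$ to one (lazy) CHR step is $\tfrac{1}{2n} f_i(x)$, where $f_i(x) \in [0,1]$ denotes the fraction of the axis-parallel chord $(x+e_i\R) \cap K$ lying on the side of the partition $\{S, K \setminus S\}$ opposite to that of $x$. Membership in $S_1$ or $S_2$ therefore forces $f_i(\cdot) \leq 2\epsilon$ for every coordinate $i$. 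But if $x \in S_1$ and $y \in S_2$ were to share a coordinate chord, then with respect to that chord the two opposite-side fractions for $x$ and for $y$ must sum to exactly $1$, giving $1 \leq 4\epsilon$, contradicting the choice of $\epsilon$. This is the main technical step of the proof; everything else is essentially bookkeeping.

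With axis-disjointness established, \cref{theo: axis-disjoint isoperimetry} yields $\vol{S_3} \geq \Omega(\Phi_{\MM_\infty}/n^{3/2}) \cdot \min\inb{\vol{S_1}, \vol{S_2}} \geq \Omega(\Phi_{\MM_\infty}/n^{3/2}) \cdot \vol{S}$. By the definition of $S_3$, every point $x \in S_3$ has probability at least $\epsilon/n$ of crossing to the opposite side of $\{S, K \setminus S\}$ in one CHR step, so that $\Psi_{\CHR}(S) + \Psi_{\CHR}(K \setminus S) \geq (\epsilon/n)\,\pi_K(S_3)$. Invoking the reversibility identity $\Psi_{\CHR}(S) = \Psi_{\CHR}(K \setminus S)$ and combining with the lower bound on $\pi_K(S_3)$ then delivers $\Phi_{\CHR}(S) \geq \Omega(\Phi_{\MM_\infty}/n^{5/2})$, as required. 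The crucial gain over \cite{laddha_convergence_2021} is entirely in the use of \cref{theo: axis-disjoint isoperimetry}, which lets the argument go through without any lower bound on $\vol{S}$.
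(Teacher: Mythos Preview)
Your argument is correct and follows essentially the same route as the paper: define the ``bad'' interiors by the threshold $P_{\CHR}(x,\text{other side}) < \epsilon/n$, show they are axis-disjoint via the chord-fraction contradiction, and then feed them into \cref{theo: axis-disjoint isoperimetry}. The only cosmetic differences are your choice of $\epsilon=1/8$ versus the paper's $1/4$, and your use of $\Psi(S)=\Psi(K\setminus S)$ at the end instead of the paper's direct splitting $\Psi(S_1\setminus S_1',S_2)+\Psi(S_1,S_2\setminus S_2')$.
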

  \begin{proof} As stated above, the strategy of the proof is essentially
    identical to that of \cite{laddha_convergence_2021}, with the new ingredient
    being the isoperimetry for axis-disjoint sets given by
    \cref{theo: axis-disjoint isoperimetry}.  Let $K = S_1 \cup S_2$ be a
    partition of $K$ into two parts with $\pi(S_1) \le \pi(S_2)$. For $i = 1,2$,
    let
    \[ S_i' = \left\{ x \in S_i \st P_{\CHR}(x,S_{3-i}) < \frac{1}{4n} \right\}. \]
    We claim that $S_1'$ and $S_2'$ are axis-disjoint. Suppose instead that they
    are not, and there is an axis parallel line $\ell$ intersecting both of
    them, with $x_i \in \ell \cap S_i'$ for $i=1,2$, say. Then,
    \[ \frac{1}{4n} > P_{\CHR}(x_i, S_{3-i}) \ge \frac{1}{2n} \cdot
      \frac{\vol[1]{\ell \cap S_{3-i}}}{\vol[1]{\ell \cap K}}, \] so
    $\vol[1]{\ell \cap K} > 2 \vol[1]{\ell \cap S_i}$ for $i = 1,2$. However,
    this is clearly impossible as
    $\vol[1]{\ell \cap S_1} + \vol[1]{\ell \cap S_2} = \vol[1]{\ell \cap
      K}$.

    Now, if $\vol{S_1'} \le (1/2) \vol{S_1}$ (or similarly
    $\vol{S_2'} \le (1/2) \vol{S_2}$), then (here $\Psi_{\CHR}(\cdot,\cdot)$ denotes the ergodic
    flow for the coordinate hit-and-run chain),
    \[ \Psi_{\CHR}(S_1,S_2) \ge \Psi_{\CHR}((S_1 \setminus S_1') , S_2) \ge \frac{1}{4n}
      \pi(S_1 \setminus S_1') \ge \frac{1}{8n} \pi(S_1) \] and we are done. So,
    assume that $\vol{S_i'} \ge (1/2) \vol{S_i}$ for $i = 1,2$. In this case
    \begin{align*}
      \Psi_{\CHR}(S_1,S_2)
      &\ge \frac{1}{2} \left(\Psi_{\CHR}(S_1 \setminus S_1' , S_2)
        + \Psi_{\CHR}(S_1 , S_2 \setminus S_2') \right)\; \text{by reversibility,} \\
      &\ge \frac{1}{8n} \left( \pi(S_1 \setminus S_1') + \pi(S_2 \setminus S_2') \right) \\
      &= \frac{1}{8n} \pi(K \setminus (S_1' \cup S_2')),\;\text{since $K = S_1
        \sqcup S_2$,} \\
      &\ge \frac{c\Phi_{\MM_\infty}}{n^{5/2}} \pi(S_1)\;
        \text{(from \Cref{theo: axis-disjoint isoperimetry}, since $\vol{S_i'} \ge (1/2) \vol{S_i}$)}
    \end{align*}
    for some universal constant $c$, completing the proof.
  \end{proof}

\begin{corollary}\label{thm:chr-l2-mixing}
  Let $K$ be a convex body such that
  $r_\infty \cdot B_{\infty} \subseteq K \subseteq R_{\infty} \cdot B_{\infty}$.
  Let $\rho_\infty \defeq r_\infty/R_{\infty}$.  Let $\pi$ denote the uniform
  measure on $K$. Let $\one_K$ denote the indicator of $K$. Let
  $0 < \eps < 1/2$.  The number of steps $T$ needed for CHR to achieve a density
  $\eta_T$ with respect to $\pi$ such that
  $\|\eta_T - \one_K\|_{L^2(\pi)} < \eps$, from a starting density $\eta_0$
  (with respect to $\pi$) that satisfies $\|\eta_0 - \one_K\|_{L^2(\pi)} < M$
  obeys \beqs T \leq O\left(\frac{n^9}{\rho_\infty^2}\log
  \frac{M}{\eps}\right).\eeqs
\end{corollary}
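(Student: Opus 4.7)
The plan is to chain together three results already in the paper: the conductance bound for $\MM_\infty$ from \cref{thm:conductance}, the reduction of CHR conductance to $\MM_\infty$ conductance from \cref{thm:CHR-cond}, and the $L^2$ mixing bound in terms of conductance from \cref{lem:ls-l2-mixing}. Since CHR is reversible with respect to the uniform distribution $\pi$ on $K$ (a standard fact, which follows because each one-step transition kernel conditional on a chosen coordinate direction is symmetric along a chord), we are in a setting where \cref{lem:ls-l2-mixing} directly applies.

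First, I would instantiate \cref{thm:conductance} with $p = \infty$. Since $r_\infty B_\infty \subseteq K$ in particular implies $r_\infty B_\infty \subseteq K$ with $p = \infty$, the hypothesis of \cref{thm:conductance} is satisfied with $\rho_p = \rho_\infty$. Noting that $2 + 1/p = 2$ when $p = \infty$, this gives
\begin{equation*}
  \Phi_{\MM_\infty} \geq \frac{\rho_\infty}{O(n^2)}.
\end{equation*}
Next, I would plug this into \cref{thm:CHR-cond}, which asserts that the conductance $\Phi_{\CHR}$ of the lazy coordinate hit-and-run chain satisfies $\Phi_{\CHR} \geq \Omega(\Phi_{\MM_\infty} n^{-5/2})$. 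Combining the two estimates yields
\begin{equation*}
  \Phi_{\CHR} \geq \frac{\rho_\infty}{O(n^{9/2})}.
\end{equation*}

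Finally, I would apply \cref{lem:ls-l2-mixing} to the lazy CHR chain with the above conductance bound. That lemma gives
\begin{equation*}
  \|\eta_T - \one_K\|_{L^2(\pi)}^2 \leq \left(1 - \tfrac{\Phi_{\CHR}^2}{2}\right)^{2T} \|\eta_0 - \one_K\|_{L^2(\pi)}^2 \leq \exp(-\Phi_{\CHR}^2\, T)\, M^2,
\end{equation*}
where the second inequality uses $1 - x \leq e^{-x}$ for $x \geq 0$ and the assumption on $\eta_0$. Requiring the right-hand side to be at most $\eps^2$ and solving for $T$ gives $T \geq 2 \Phi_{\CHR}^{-2} \log(M/\eps)$, and substituting $\Phi_{\CHR}^{-2} = O(n^9/\rho_\infty^2)$ yields the desired bound.

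There is essentially no technical obstacle here; the entire content is an assembly of ingredients already established in the paper. The only minor points to check are that the hypotheses of \cref{thm:conductance} and \cref{thm:CHR-cond} hold (both require only that $K$ be a convex body, with the former additionally requiring an $\ell_p$-ball inclusion which is immediate for $p = \infty$) and that CHR is reversible with respect to $\pi$ so that \cref{lem:ls-l2-mixing} can be invoked. The only non-cosmetic care required is to note that the $M$-warm hypothesis in \cref{thm:intro-chr}, where $\|\eta_0\|_\infty \leq M$, implies $\|\eta_0 - \one_K\|_{L^2(\pi)} \leq M + 1$, so that the $L^2$-warmth used in this corollary subsumes the $L^\infty$-warmth notion used to state \cref{thm:intro-chr} (up to absorbing an additive $1$ into the $\log(M/\eps)$ factor).
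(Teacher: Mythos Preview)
Your proposal is correct and follows essentially the same route as the paper: apply \cref{lem:ls-l2-mixing} to the lazy CHR chain, then feed in the conductance lower bound obtained by combining \cref{thm:CHR-cond} with the $p=\infty$ case of \cref{thm:conductance}. The paper's proof is simply a terser version of what you wrote; your added remarks on reversibility and on the relation between $L^\infty$- and $L^2$-warmth are accurate but not needed for the corollary itself.
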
 
\begin{proof}
  \cref{lem:ls-l2-mixing} applied to CHR gives the following.  Let the CHR walk
  be started from a density $\eta_0 \in L^2(\pi)$ with respect to the uniform
  measure on $K$.  Then, after $T$ steps, the density $\eta_T$, of the measure
  at time $T$ with respect to $\pi$, satisfies
  $$\langle \eta_T - \one_K, \eta_T - \one_K\rangle_{L^2(\pi)} \leq \left(1 -
    \frac{\Phi_{\CHR}^2}{2}\right)^{2T} \langle \eta_0 - \one_K, \eta_0 -
  \one_K\rangle_{L^2(\pi)}.$$ The corollary now follows by applying
  \cref{thm:CHR-cond} and the fact that
  $$\Phi_{\MM_\infty} \geq \frac{\rho_\infty}{O(n^{2})},$$ as shown in
  \cref{eq:86} in \cref{thm:conductance}.
\end{proof}

\begin{proof}[Proof of \cref{thm:intro-chr}] The theorem follows immediately
  from \cref{thm:chr-l2-mixing} after a few substitutions.  Let $\eta_0$ be the
  density of the initial $M$-warm start $\nu_{0}$, with respect to $\pi$.  This
  means $\|\eta_0\|_{\infty} \leq M$, which implies that
  $\norm[L^2(\pi)]{\eta_0 - 1} \leq M + 1$.  Next, note that $\rho_\infty = R/r$
  by definition, where $R$ and $r$ are as in the statement of
  \cref{thm:intro-chr}.  The theorem now follows from \cref{thm:chr-l2-mixing}
  by noting that that the density $\eta_T$ (with respect to $\pi$) of the
  measure $\nu_T$ obtained after $T$ steps of the CHR walk satisfies
  $$\|\eta_T - \one_K\|_{L^1(\pi)} \leq \|\eta_T - \one_K\|_{L^2(\pi)},$$ so
  that \cref{thm:chr-l2-mixing} implies
  $d_{TV}(\nu_T, \pi) = \frac{1}{2}\|\eta_T - \one_K\|_{L^1(\pi)} \leq \eps/2$
  for the same $T$.
\end{proof}

\subsection{Mixing of CHR from a point}
\label{sec:chr-point}

As discussed in the introduction, a cold start is often trivial to achieve.  For
example, in the context of \cref{thm:intro-chr} for the coordinate hit-and-run
(CHR) walk, where the body $K$ satisfies
$r\cdot B_\infty \subseteq K \subseteq R \cdot B_\infty$, an
$\exp(\poly{n})$-warm start can be generated simply by sampling the initial
point uniformly at random from $r\cdot B_{\infty}$, provided that the mild condition
that $R/r \leq \exp(\poly{n})$ is satisfied.  However, for aesthetic reasons,
one may want to prove that the chain mixes rapidly even when started from a
given point (i.e., when the initial distribution is concentrated on a point).
To avoid pathological issues that may arise at a ``corner'' of the body, it is
usual to assume that this initial point is somewhat far from the boundary of the
body: say at a distance of at least $R\exp(-\poly{n})$.

In this section, we prove the following corollary, which shows that the CHR walk
mixes in a polynomial number of steps even when started from a distribution
concentrated on a single point of $K$, provided that that point is not too close
to the boundary of $K$.
\begin{corollary}[Mixing time of CHR from a point] There is a universal constant
  $C$ such that the following is true.  Let $K$ be a convex body such that
  $r\cdot B_{\infty} \subseteq K \subseteq R\cdot B_{\infty}$.  Consider the
  coordinate hit-and-run chain on $K$ started from a point $X_0$ satisfying
  $\dist_{\ell_\infty}(X_0, \bdry{K}) \geq \delta$, and let $X_T$ denote the random
  state of the chain after $T$ steps. Let $\epsilon \in (0, 1/2)$ be given, and
  set $\tau = \ceil{2n \log (6n/\epsilon)}$. Then, provided that
  \begin{equation}
    T \geq C\cdot \frac{n^{10} R^2}{r^2}
    \cdot\inp{
      \log \frac{R}{2\delta}
      + \tau \cdot \log \frac{6\tau}{\epsilon}
    },
  \end{equation}
  the total variation distance $d_{TV}(X_T, \pi_K)$ is less than $\eps$.
  \label{thm:chr-mixing-point}
\end{corollary}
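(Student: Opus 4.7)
The plan is to reduce to \cref{thm:intro-chr} by showing that after a short ``burn-in'' phase of $\tau = \lceil 2n\log(6n/\eps)\rceil$ steps, the law of $X_\tau$ is within total variation $\eps/3$ of some $M$-warm distribution, with $\log M = O(n\log(R/(2\delta)) + n\tau\log(6\tau/\eps))$. Invoking \cref{thm:intro-chr} from this approximate warm start with accuracy $2\eps/3$ then uses
\[
O(n^9 (R/r)^2 \log(M/\eps)) = O\!\left(n^{10}(R/r)^2\bigl(\log(R/(2\delta)) + \tau\log(6\tau/\eps)\bigr)\right)
\]
additional steps, matching the stated bound on $T$.

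The burn-in analysis combines two high-probability events. First, a standard coupon-collector union bound shows that with probability at least $1 - \eps/6$, every one of the $n$ coordinate directions is chosen at least once during the $\tau$ steps; call this event $E$. Indeed, $n(1 - 1/n)^\tau \leq n\exp(-\tau/n) \leq \eps/(6\tau) \leq \eps/6$ for our choice of $\tau$. Second, I will introduce a ``chord-tail'' event $G$: at each of the $\tau$ steps, when a chord $C_t$ is sampled, the uniform new point lies at distance at least $\eps|C_t|/(12\tau)$ from either endpoint of $C_t$. The per-step failure probability is $\eps/(6\tau)$, so a union bound over the $\tau$ steps gives $\Pr[G^c] \leq \eps/6$.

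The main technical step is to bound the density of the conditional distribution $\mu'$ of $X_\tau$ given $E \cap G$ with respect to $\pi_K$. For each coordinate $i$, let $L_i$ denote the last step at which $i$ is chosen (well-defined on $E$); after time $L_i$ the $i$-th coordinate of the chain remains fixed, so the $i$-th coordinate of $X_\tau$ equals that of $X_{L_i}$, sampled uniformly on the chord through $X_{L_i-1}$ in direction $e_i$. On the event $G$, propagating the chord-tail margin across the $\tau$ steps yields a lower bound of $\delta_* \geq \delta \cdot (\eps/(12\tau))^{O(\tau)}$ on the length of the relevant chord, so that the one-dimensional density in direction $e_i$ is at most $1/\delta_*$. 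Multiplying these one-dimensional estimates over the $n$ coordinates and normalizing by $\pi_K$ produces $M = \exp\bigl(O(n\log(R/(2\delta)) + n\tau\log(6\tau/\eps))\bigr)$. The main obstacle is carrying out this chord-length propagation carefully, since even on $G$ the $\ell_\infty$-distance of the chain to $\bdry K$ can shrink at each step, requiring a delicate joint union bound and possibly a strengthening of the event $G$ to also control the coordinates \emph{not} chosen.

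Combining, the three sources of TV error ($\eps/6$ from $E^c$, $\eps/6$ from $G^c$, and $2\eps/3$ from the application of \cref{thm:intro-chr}) sum to at most $\eps$, and the total number of steps $\tau + O(n^{10}(R/r)^2(\log(R/(2\delta)) + \tau\log(6\tau/\eps)))$ matches the stated bound, completing the argument.
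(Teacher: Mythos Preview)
Your plan matches the paper's approach: condition on all $n$ directions being hit and on each step landing a $\kappa$-fraction away from the chord endpoints, propagate $\dist_{\ell_\infty}(X_i,\bdry K) \geq \kappa^i\delta$ via concavity of the distance function (\cref{lem: l1 concave}), bound the resulting density to get an $M$-warm component, and invoke \cref{thm:intro-chr}.

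For what you flag as the main obstacle, the paper's execution is cleaner than tracking last-visit times and ``multiplying one-dimensional densities'' (which, as stated, is not a valid step without further argument, since the chord lengths are random and correlated across coordinates). Instead, the paper conditions on the full realized direction sequence $d$ and proves by induction on $i$ that the density of $X_i$ with respect to $|S(i)|$-dimensional Lebesgue measure on $X_0 + \operatorname{span}\{e_j: j\in S(i)\}$ is at most $(1-\kappa)^{-i}(2\kappa^\tau\delta)^{-|S(i)|}$, where $S(i)$ is the set of directions seen by time $i$. The inductive step splits into two cases: if $d_i$ is a \emph{new} direction the density gains exactly one factor of $1/(\text{chord length}) \leq (2\kappa^\tau\delta)^{-1}$, while if $d_i$ is a \emph{repeat} the new density is an average of the old one along the chord and gains only a $(1-\kappa)^{-1}$ factor from the conditioning on $G$. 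This induction directly yields the claimed $\log M$ and makes any ``strengthening of $G$ to control the coordinates not chosen'' unnecessary.

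One small correction: the CHR walk here is lazy, so each coordinate direction is chosen with probability $1/(2n)$ per step, and the coupon-collector bound should read $n(1-1/(2n))^\tau \leq n e^{-\tau/(2n)} \leq \eps/6$; this is precisely why $\tau = \lceil 2n\log(6n/\eps)\rceil$ carries the leading factor of $2$.
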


Lovász and Vempala~\cite[Corollary 1.2]{lovasz_hit-and-run_2006} pointed out
that their mixing time result for the usual hit-and-run walk from a cold
start~\cite[Theorem 1.1]{lovasz_hit-and-run_2006} immediately implies a
corresponding mixing time result for hit-and-run starting from a point at
distance $\delta > 0$ from the boundary as well, by considering the ``warmth''
of the distribution obtained by taking a single step of the hit-and-run walk
from such a point. The intuition for the above result for the CHR walk is
similar, with the only apparent difficulty being the fact that after any
constant number of steps from a single point, the distribution generated by the
CHR walk is a finite mixture of distributions on lower dimensional subsets of
$\R^n$, and hence cannot be $M$-warm with respect to $\pi_K$ for any finite $M$
(since all such lower dimensional sets have zero probability mass under
$\pi_K$).  The obvious fix is to consider the distribution after about
$2n\log (n/\epsilon)$ steps: by this time, with probability at least
$1-\epsilon$, the chain would have taken a step in each of the coordinate
directions.  All we need to check is that in this time, the chain does not come
too close to the boundary either.  The proof of \cref{thm:chr-mixing-point}
formalizes this intuition.

In the proof of \cref{thm:chr-mixing-point}, we will need the following simple
observation.
\begin{observation}\label{obv:basic-chord}
  Let $K$ be a convex body, and let $x$ be a point in $K$ such that
  $\dist_{\ell_\infty}(x, \bdry{K}) \geq \delta$.  Consider a chord $PQ$ of $K$
  that passes through $x$ (so that $P, Q \in \bdry{K}$).  For
  $\alpha \in [0, 1]$, define $P_\alpha \defeq \alpha x + (1-\alpha)P$ and
  $Q_\alpha \defeq \alpha x + (1-\alpha)Q$. Then the segment
  $P_\alpha Q_\alpha$ covers a $(1-\alpha)$-fraction of the length of $PQ$,
  and for any $y$ on the segment $P_\alpha Q_\alpha$,
  \begin{equation}
    \dist_{\ell_\infty}(y, \bdry{K}) \geq \alpha \delta.
  \end{equation}
\end{observation}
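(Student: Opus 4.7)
The plan is to verify both claims by a short direct calculation, using only linearity and convexity of $K$. First, I would observe that any point $y$ on the segment $P_\alpha Q_\alpha$ can be written as $y = tP_\alpha + (1-t)Q_\alpha$ for some $t\in[0,1]$, and expanding the definitions of $P_\alpha, Q_\alpha$ gives the identity
\begin{equation*}
  y \;=\; \alpha x + (1-\alpha)\bigl(tP + (1-t)Q\bigr) \;=\; \alpha x + (1-\alpha)z,
\end{equation*}
where $z \defeq tP + (1-t)Q$ is a point on the chord $PQ$, hence in $K$. The length claim is then immediate from $Q_\alpha - P_\alpha = (1-\alpha)(Q - P)$.

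For the distance bound, I would fix any $v \in \R^n$ with $\|v\|_\infty \le \alpha\delta$ and show $y + v \in K$. The case $\alpha = 0$ is trivial (both sides of the desired inequality are zero), so assume $\alpha > 0$. Then $\|v/\alpha\|_\infty \le \delta$, so by the hypothesis $\dist_{\ell_\infty}(x,\bdry{K}) \ge \delta$ we have $x + v/\alpha \in K$. Rewriting using the parametrization above,
\begin{equation*}
  y + v \;=\; \alpha\bigl(x + v/\alpha\bigr) + (1-\alpha)\,z,
\end{equation*}
which is a convex combination of two points of $K$ and hence lies in $K$ by convexity. Since $v$ was arbitrary with $\|v\|_\infty \le \alpha\delta$, this gives $\dist_{\ell_\infty}(y,\bdry K) \ge \alpha\delta$, as required.

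There is no real obstacle here: the observation is a one-step convexity argument, and the only care needed is to handle the degenerate case $\alpha = 0$ separately so as not to divide by $\alpha$ (the conclusion being vacuous there). The key conceptual point that makes the proof work is that the $\ell_\infty$-ball of radius $\alpha\delta$ around $y$ can be realized as the convex combination, with weights $\alpha$ and $1-\alpha$, of the $\ell_\infty$-ball of radius $\delta$ around $x$ (which sits inside $K$) and the singleton $\{z\} \subseteq K$.
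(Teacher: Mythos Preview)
Your proof is correct and follows essentially the same one-step convexity idea as the paper. The only presentational difference is that the paper writes $y = \beta x + (1-\beta)P$ with $\beta \ge \alpha$ (after a WLOG case split on which half of $P_\alpha Q_\alpha$ contains $y$) and then invokes \cref{lem: l1 concave} (concavity of $\dist_{\ell_\infty}(\cdot,\bdry K)$), whereas your parametrization $y = \alpha x + (1-\alpha)z$ with $z \in PQ \subseteq K$ avoids the case split and inlines the convexity argument directly, making the proof self-contained.
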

\begin{proof}
  The claim about the ratios of the length of $P_\alpha Q_\alpha$ and $P Q$
  follows immediately from the definition of $P_\alpha$ and $Q_\alpha$.  For the
  second claim, assume without loss of generality that the point $y$ lies
  between $x$ and $P_\alpha$ (the argument when it lies between $x$ and
  $Q_\alpha$ is identical).  Then, there exists $\beta$ satisfying
  $1 \geq \beta \geq \alpha$ such that $y = \beta x + (1 - \beta)P$.  We now
  use the concavity of $\dist_{\ell_\infty}(x, \bdry{K})$ (\cref{lem: l1
    concave}) to get
  \begin{equation*}
    \dist_{\ell_\infty}(y, \bdry{K})
    = \dist_{\ell_\infty}(\beta x + (1-\beta) P, \bdry{K})
    \geq \beta \dist_{\ell_\infty}(x, \bdry{K})
    + (1-\beta) \dist_{\ell_\infty}(P, \bdry{K})
    \geq \alpha \delta. \qedhere
  \end{equation*}
\end{proof}

We are now ready to prove \cref{thm:chr-mixing-point}.
\begin{proof}[Proof of \cref{thm:chr-mixing-point}]
  For convenience of notation, we assume, after scaling the body if necessary,
  that $R = 1$ (to translate the calculations below back to the general setting,
  we will need to replace $\delta$ by $\delta/R$).  Let
  $\tau = \ceil{2n \log (6n/\epsilon)}$ be as in the statement of the corollary.
  We will show that after $\tau$ steps, the probability distribution $\nu$ of the
  state $X_\tau$ of the chain can be written as a convex combination of two
  probability measures $\nu_{\textup{cold}}$ and $\nu_{\textup{rest}}$:
  \begin{equation}
    \nu = p \nu_{\textup{cold}} + (1-p) \nu_{\textup{rest}},\label{eq:chr:123}
  \end{equation}
  where $1 \geq p \geq  1- \epsilon/3$, $\nu_{\textup{rest}}$ is an arbitrary
  probability distribution on $K$, and $\nu_{\textup{cold}}$ is a probability
  distribution on $K$ which is $M$-warm with respect to $\pi_K$, for an $M$
  satisfying
  \begin{equation}
    \label{eq:chr:122}
    \log M \leq 4n \cdot\inp{
      \log \frac{1}{2\delta}
      + \tau \cdot \log \frac{6\tau}{\epsilon}
    },
.
  \end{equation}
  Given this, the claimed mixing time result will follow from a direct
  application of the mixing time bound for CHR from a cold start
  (\cref{thm:intro-chr}).  We now proceed to prove the decomposition claimed in
  \cref{eq:chr:123}.  The intuition is that such a decomposition with a
  reasonable upper bound on $M$ should follow whenever during the first $\tau$
  steps, the chain, with high probability, (i) takes a step at least once in
  each of the $n$ coordinate directions, and (ii) does not come too close to the
  boundary of $K$.  For the above choice of $\tau$, both of these conditions are
  seen to be true with high probability.  We now proceed to formalize
  this intuition.

  To do this, we consider the following alternative description of the run
  of the chain till time $\tau$.  Let $D_i$, for $1 \leq i \leq \tau$, be
  i.i.d.~random variables taking values in $\inb{0, 1, 2, \dots, n}$ such that
  for each $i$, $\Pr{D_i = 0} = 1/2$, and $\Pr{D_i = j} = 1/(2n)$ for
  $1 \leq j \leq n$ (the random variables $D_i$ capture the direction the CHR
  walk chooses when generating the $i$th point, with $D_i = 0$ corresponding to
  the lazy choice of not making a step).  Fix $\kappa \defeq \epsilon/(6\tau)$, and
  let $G_i$, for $1 \leq i \leq \tau$, be i.i.d.~Bernoulli random
  variables with a parameter of $(1-\kappa)$.

  Now, the distribution of the random variables $X_1, X_2, \dots, X_{\tau}$
  representing the first $\tau$ states of the CHR walk started at $X_0$ can be
  described as follows.  Given $X_{i-1}$, where $1 \leq i \leq \tau$, $X_i$ is
  generated as follows: if $D_i = 0$, set $X_i = X_{i-1}$.  Otherwise, let
  $P^{(X_{i-1})}Q^{(X_{i-1})}$ be the chord of $K$ (with
  $P^{(X_{i-1})},Q^{(X_{i-1})} \in \bdry{K}$) through $X_{i-1}$, parallel to
  the standard basis vector $e_{D_{i}}$ of $\R^n$.  Define $P_\kappa^{(X_{i-1})}$,
  $Q_\kappa^{(X_{i-1})}$ as in \cref{obv:basic-chord}.  Let $A_i$ be a point chosen
  uniformly at random from the segment $P_\kappa^{(X_{i-1})}Q_\kappa^{(X_{i-1})}$, and let
  $B_i$ be a point chosen uniformly at random from the remaining length
  $P^{(X_{i-1})}Q^{(X_{i-1})} \setminus P_\kappa^{(X_{i-1})}Q_\kappa^{(X_{i-1})}$ of
  $P^{(X_{i-1})}Q^{(X_{i-1})}$.  Set
  \begin{equation}
    \label{eq:chr:124}
    X_i =
    \begin{cases}
      A_i & \text{if $G_i = 1$, and}\\
      B_i & \text{if $G_i = 0$.}\\
    \end{cases}
  \end{equation}
  The independence of the $D_i$ and the $G_i$, taken together with the fact that
  $P_\kappa^{(X_{i-1})}Q_\kappa^{(X_{i-1})}$ constitutes a $(1-\kappa)$ fraction
  of the chord $P^{(X_{i-1})}Q^{(X_{i-1})}$, then implies that the
  distribution of $(X_0, X_1, X_2, \dots, X_\tau)$ generated in the above way is
  identical to the distribution of the first $\tau$ steps of the CHR walk
  started from $X_0$.  Note also that for any realizations
  $g \in \inb{0, 1}^\tau$ of the $G_i$ and
  $d \in \inb{0, 1, 2, \dots, n}^\tau$ of the $D_i$, the above process induces a
  probability distribution $\nu_{d, g}$ on $K$, such that the probability
  distribution $\nu$ of $X_\tau$ can be written as the convex combination
  \begin{equation}
    \label{eq:chr:126}
    \nu = \sum_{\substack{d \in \inb{0, 1, 2, \dots, n}^\tau\\g \in \inb{0,
          1}^\tau}}
    \Pr{\vec{D} = d \text{ and } \vec{G} = g}\nu_{d,g}.
  \end{equation}

  If $G_i = 1$, then \cref{obv:basic-chord} implies that
  $\dist_{\ell_\infty}(X_i, \bdry{K}) \geq
  \kappa\cdot\dist_{\ell_\infty}(X_{i-1}, \bdry{K}).$ Thus, if $G_i = 1$ for all
  $i$, $1 \leq i \leq \tau$, then
  \begin{equation}
    \dist_{\ell_\infty}(X_i, \bdry{K}) \geq \kappa^i\delta \geq \kappa^\tau \delta
    \; \text{for all } 1 \leq i \leq \tau.
    \label{eq:chr:125}
  \end{equation}
  Call a vector $d \in \inb{0, 1, 2,\dots, n}^{\tau}$ \emph{complete} if it
  includes each of the numbers $\inb{1, 2, \dots, n}$ at least once.  Let $p$ be
  the probability that $G_i = 1$ for all $1 \leq i \leq \tau$, and that
  $(D_1, D_2, \dots, D_\tau)$ is complete.  Then, breaking apart the sum in
  \cref{eq:chr:126} into those $d$ and $g$ that satisfy this requirement and
  those that do not, we get
  \begin{equation}
    \label{eq:chr:127}
    \nu = p \nu_{\text{good}} + (1-p)\nu_{\text{rest}},
  \end{equation}
  where $\nu_{\text{good}}$ is a convex combination of those $\nu_{d, g}$ for
  which $d$ is complete and $g = \vec{1}$, while $\nu_{\text{rest}}$ is a convex
  combination of the remaining $\nu_{d, g}$.  From a simple union bound
  argument, we also get
  \begin{equation}
    1- p \leq \tau\kappa + n \inp{1 - \frac{1}{2n}}^\tau \leq \epsilon/3.\label{eq:chr:130}
  \end{equation}
  Thus, to establish the decomposition in \cref{eq:chr:123}, we only need to
  show that when $d$ is complete and $g = \vec{1}$, $\nu_{d, g}$ is $M$-warm
  with respect to $\pi_K$ for an $M$ satisfying \cref{eq:chr:122} (since a
  convex combination of $M$-warm distributions is also $M$-warm).

  We now prove that for any realization $d$ of the $D_i$ and $g$ of the $G_i$,
  such that $d$ is complete and $g = \vec{1}$, $\nu_{d, g}$ satisfies the
  required warmth condition. For $0 \le i \le \tau$, let $\nu_{d, g, i}$ be the
  distribution of $X_i$ under this realization.  Let $S(i)$ denote the set of
  coordinate directions corresponding to the distinct non-zero values among
  $d_1, d_2, \dots, d_i$.  For convenience of notation, we relabel the
  coordinate directions in the order of their first appearance in $d$, so that
  $S(i) = \inb{e_j \st 1 \leq j \leq |S(i)|}$. Since $d$ is complete, $\abs{S(\tau)} = n$. Define
  \begin{equation}
    M_i \defeq (1-\kappa)^{-i}(2\kappa^\tau\delta)^{-\abs{S(i)}}.\label{eq:chr:122x}
  \end{equation}
  Note that the probability distribution $\nu_{d, g,i}$ is supported on the
  $\abs{S(i)}$-dimensional set
  $(X_0 + \R^{\abs{S(i)}}\times\inb{0}^{n - \abs{S(i)}}) \cap K$.  We will show
  by induction that $\nu_{d, g, i}$ is $M_i$-warm with respect to the standard
  $\abs{S(i)}$-dimensional Lebesgue measure on
  $(X_0 + \R^{\abs{S(i)}}\times\inb{0}^{n - \abs{S(i)}})$.  Let $\rho_{i}$
  denote the density of $v_{d, g, i}$ with respect to the $\abs{S(i)}$
  dimensional Lebesgue measure on
  $(X_0 + \R^{\abs{S(i)}}\times\inb{0}^{n - \abs{S(i)}})$ (implicit in the
  inductive proof of warmth below will be an inductive proof of the existence of
  these densities).  Since $g = \vec{1}$, we have from \cref{eq:chr:125} that
  for all $0 \leq j \leq \tau$,
  \begin{equation}
    \label{eq:chr:131}
    \rho_j(y) > 0\; \text{only if}\; \dist_{\ell_\infty}(y, \bdry{K}) \geq
    \kappa^j\delta \geq \kappa^\tau\delta.
  \end{equation}
  In particular, if $\rho_j(y) > 0$, then the length of any chord of $K$ through
  $y$ in any coordinate direction is at least
  $2\kappa^j\delta \geq 2\kappa^\tau\delta$.

  In the base case $i = 0$, we have $S(i)= 0$ and $M_i = 1$, and
  $\nu_{d, g, 0}$ trivially has density $\rho_0 \equiv 1$ (supported on the single
  point $X_0$). In the inductive case $i \geq 1$, the density does not change
  when from step $i - 1$ to $i$ if $d_i = 0$, so that we get using the
  induction hypothesis and the form of the $M_i$ that at each point $z$,
  $\rho_{i}(z) = \rho_{i-1}(z) \leq M_{i-1} \leq M_{i}$. We are thus left with the case
  $d_i \ge 1$.  We break this case into further cases.
  \begin{description}
  \item[Case 1: $S(i) = S(i-1)$.]  Note that $X_i = z$ is possible only if
    $X_{i-1} = y$ lies on the chord $\tilde{P}\tilde{Q}$ of $K$ (where
    $\tilde{P}, \tilde{Q} \in \bdry{K}$) through $z$ in the direction $e_{d_i}$.
    For such a $y$, we also have $\tilde{P}\tilde{Q} = P^{(y)}Q^{(y)}$ (where,
    as defined in the description before \cref{eq:chr:124}, $P^{(y)}Q^{(y)}$ is
    the chord through $y$ in the direction $e_{d_i}$). Further, since $g_i = 1$,
    we see from \cref{eq:chr:124} that the density at $z$ is given by:
    \begin{equation}
      \rho_{i}(z) = \int\limits_{y \in \tilde{P}\tilde{Q}}\rho_{i-1}(y)
      \cdot \frac{
      I[z \in P_\kappa^{(y)}Q_\kappa^{(y)}]
      }{
      \vol[1]{P_\kappa^{(y)}Q_\kappa^{(y)}}
      }dy.\label{eq:chr:132}
    \end{equation}
    Using $\tilde{P}\tilde{Q} = P^{(y)}Q^{(y)}$, we get from
    \cref{obv:basic-chord} that
    \begin{equation}
      \vol[1]{P_\kappa^{(y)}Q_\kappa^{(y)}} =
      (1-\kappa)\vol[1]{P^{(y)}Q^{(y)}} =
      (1-\kappa)\vol[1]{\tilde{P}\tilde{Q}}. \label{eq:117}\end{equation}
  Substituting this along with the induction hypothesis in \cref{eq:chr:132},
  and using $S(i) = S(i-1)$, we get that for every $z$,
    \begin{equation}
      \label{eq:chr:117}
      \rho_{i}(z)
      \leq M_{i-1}\cdot
      \frac{
        \vol[1]{\tilde{P}\tilde{Q}}
      }{
        (1-\kappa)
        \vol[1]{\tilde{P}\tilde{Q}}
      } = \frac{M_{i-1}}{1-\kappa} =
      (1-\kappa)^{-i}(2\kappa^\tau\delta)^{-\abs{S(i-1)}} = M_i,
    \end{equation}
    which completes the induction in this case.

  \item[Case 2: $\abs{S(i)} = \abs{S(i-1)} + 1$.] In this case, the $i$th step
    is the first occurrence of the direction $e_{d_i}$ in the run of the
    chain. Again, $X_i = z$ is possible only if $X_{i-1} = y$ lies on the chord
    $\tilde{P}\tilde{Q}$ through $z$ in the direction $e_{d_i}$.  But in this
    case, for each
    $z \in (X_0 + \R^{\abs{S(i)}}\times\inb{0}^{n - \abs{S(i)}}) \cap K$, there
    is at most one $y = y(z)$, obtained by setting the $d_i$th coordinate of $z$
    equal to the $d_i$th coordinate of $x_0$, for which $\rho_{i-1}(y) > 0$ and
    which lies on a chord through $z$ in the direction $e_{d_i}$.  Since we also
    have $g_i = 1$, we therefore get from \cref{eq:chr:124} that the density
    $\rho_{i}$ (which is now with respect to an
    $(\abs{S_{i-1}} + 1)$-dimensional Lebesgue measure), is given by
    \begin{equation}
      \label{eq:chr:120}
      \rho_{i}(z) = \rho_{i-1}(y)
      \cdot \frac{
        I[z \in P_\kappa^{(y)}Q_\kappa^{(y)}]
      }{
        \vol[1]{P_\kappa^{(y)}Q_\kappa^{(y)}}
      }.
    \end{equation}
    Again, since we have $g = \vec{1}$, we see from \cref{eq:chr:131} (applied
    with $j = i - 1$) that if $\rho_{i-1}(y) > 0$ then
    $\vol[1]{P_\kappa^{(y)}Q_\kappa^{(y)}} =
    (1-\kappa)\vol[1]{P^{(y)}Q^{(y)}} \geq 2(1-\kappa)(\kappa^\tau\delta)$.
    Substituting this along with the induction hypothesis in \cref{eq:chr:120},
    we thus get (using $\abs{S(i)} = \abs{S(i-1)} + 1$)
    \begin{equation}
      \label{eq:chr:121}
      \rho_{i}(z) \leq \frac{M_{i-1}}{2(1-\kappa)\kappa^\tau\delta}
      \leq (1-\kappa)^{-i}(2\kappa^\tau\delta)^{-(\abs{S(i-1)} + 1)}= M_i,
    \end{equation}
    which completes the induction in this case as well.
  \end{description}

  We thus get that when $d$ is complete and $g = \vec{1}$, then
  $\nu_{d,g} = \nu_{d, g, \tau}$ is $M_\tau$-warm with respect to the
  $n$-dimensional Lebesgue measure on $\R^n$ (and is, of course, supported only
  on $K$).  Since $\pi_K$ has density $I[ \cdot \in K]/\vol[n]{K}$ with respect
  to the Lebesgue measure, it follows that $\nu_{d, g}$ (for such $d$ and $g$)
  is $M_\tau\vol{K}$-warm with respect to $\pi_K$.  Now, since we assumed
  $K \subseteq R B_{\infty}$ with $R \leq 1$, we have $\vol{K} \leq 2^n$.  Thus,
  we see that when $d$ is complete and $g = \vec{1}$, $\nu_{d, g}$ is
  $\inp{(1 - \kappa)^{-\tau}\cdot 2^n \cdot (2\kappa^\tau\delta)^{-n}}$-warm
  with respect to $\pi_K$.  Given the discussion around
  \cref{eq:chr:125,eq:chr:127}, this completes the proof of the decomposition in
  \cref{eq:chr:123} (after recalling that
  $0 < \epsilon < 1/2, \tau = \ceil{2n\log(6n/\epsilon)}$, and
  $\kappa = \epsilon/(6\tau)$).

Given the decomposition in \cref{eq:chr:123}, the claimed bound on the mixing
time now follows using an application of \cref{thm:intro-chr} for $M$-warm
starts.  From that theorem, we see that that there is an absolute constant $C'$
such that when $t \geq T' \defeq C' \frac{n^9R^2}{r^2}\log(6M/\epsilon)$ (where
$M$ is as in \cref{eq:chr:122}) then
  \begin{equation}
    d_{TV}(\nu_{\textup{cold}}\MM_{CHR}^t, \pi_{K}) \leq \frac{\epsilon}{6}.\label{eq:chr:128}
  \end{equation}
  Using the decomposition of $\nu$ in
  \cref{eq:chr:123} and the triangle inequality, this implies that
  \begin{align}
    \label{eq:chr:129}
    d_{TV}(\nu\MM_{CHR}^t, \pi_{K})
    &\leq p d_{TV}(\nu_{\textup{cold}}\MM_{CHR}^t, \pi_{K})
      + (1 - p) d_{TV}(\nu_{\textup{rest}}\MM_{CHR}^t, \pi_{K})\\
    &\stackrel{\textup{\cref{eq:chr:128}}}{\leq}
      p\frac{\epsilon}{6} + (1-p)
      \stackrel{\textup{\cref{eq:chr:130}}}{\leq} \frac{\epsilon}{2}.
  \end{align}
  Since $\nu = \Delta_{X_0}\MM_{CHR}^\tau$, where $\Delta_{X_0}$ is the
  probability distribution on $K$ which puts probability mass $1$ on $X_{0}$, we
  therefore get that the mixing time from $X_0$ up to a total variation distance
  of $\epsilon$ to $\pi_K$ is at most $T' + \tau$, and the claimed bound in the
  statement of the theorem follows from the expressions given above for $M, T'$
  and $\tau$.
\end{proof}

\addcontentsline{toc}{section}{Acknowledgements}
\subsection*{Acknowledgements} \grantacknowledgement{}

\newpage
\appendix
\section*{Appendix}
\addcontentsline{toc}{section}{Appendix}
\section{Proofs omitted from main text}
\subsection{Geometry}
\label{sec:geometry}

Here, we provide the omitted proofs of
\cref{lem:surface-area-infty,lem-small-metric-distances}, both of which we
restate here for convenience.
\surface*
\begin{proof}
  From \cref{item:side-length-ratio} of \cref{theo 1.1}, we know that each
  axis-aligned dyadic cube in $S$ that abuts the axis-aligned dyadic cube $Q$
  must have sidelength whose ratio with the sidelength of $Q$ lies in
  $\inb{2^{-1}, 1, 2}$.  Thus, the area $\bdry{Q} \cap \bdry{S}$ can be
  partitioned into a finite set $T$ of disjoint $(n-1)$-dimensional axis-aligned
  cuboids lying on the surface of $Q$, each having a non-zero surface area. Let
  $a > 0$ be the minimum sidelength over all the sidelengths of cuboids in $T$,
  and let $A \defeq \vol[n-1]{\bdry{Q} \cap \bdry{S}}$ denote the total surface
  area of cuboids in $T$. Let $\epsilon$ be small enough: e.g.
  $\epsilon \leq a/100$ suffices.  By noting that the part of
  $(Q + \epsilon B_{\infty})\setminus Q$ that does \emph{not} project to
  $\bdry{Q}$ along any coordinate direction has volume at most $C \epsilon^2$
  (where $C > 0$ does not depend upon $\epsilon$), and that the part of
  $(Q + \epsilon B_{\infty})\setminus S$ that \emph{does} project to $\bdry{Q}$
  along some coordinate direction has volume $A\epsilon$, we can
  sandwich the $n$-dimensional volume of the set
  $(Q + \epsilon B_{\infty})\setminus S$ as follows:
  \begin{equation}
    A\epsilon \leq \vol[n]{(Q + \epsilon B_{\infty})\setminus S}
    \leq A\epsilon + C\epsilon^2.\label{eq:109}
  \end{equation}
  The lemma now follows by dividing by $\epsilon$ throughout in \cref{eq:109}
  and then taking the limit as $\epsilon \downarrow 0$.
\end{proof}
\metricdist*
\begin{proof}
  Pick $\delta \defeq 1/(2n^{1/p})$.  Consider the cube
  $L \defeq x + 0.9\epsilon \dist_{\ell_p}(x, \bdry{K})B_\infty$.  We then have
  $x \in L$ and $y \not\in L$.  Further, for any point $z \in L$, we have
  \begin{equation}
    \label{eq:14}
    \dist_{\ell_p}(z, \bdry{K}) \leq \dist_{\ell_p}(x, \bdry{K}) + 0.9\epsilon
    \dist_{\ell_p}(x, \bdry{K})\cdot n^{1/p}
    \leq 1.5 \dist_{\ell_p}(x, \bdry{K}),
  \end{equation}
  since $\epsilon \leq \delta = 1/(2n^{1/p})$.  Now, let $\gamma: [0, 1] \to K^\circ$
  be any piecewise continuously differentiable curve with $\gamma(0) = x$ and
  $\gamma(1) = y$.  To prove the lemma, we only need to show that the length of
  any such curve (in the $g_p$ metric) is at least $\epsilon/2$.  We now proceed
  to do so, by considering the part of the curve that lies within $L$. Let
  $t_0 \defeq \inf\inb{t \st \gamma(t) \notin L}$. Note that since
  $\gamma(0) = x \in L$ and $\gamma(1) = y \not \in L$, $t_0 \in (0, 1)$ and
  $\gamma(t_0) \in \bdry{L}$.  We then have
  \begin{align*}
    \len_{g_p}(\gamma)
    & \geq
      \int\limits_{0}^{t_0}\frac{
      \norm[\infty]{\gamma'(t)}
      }{
      \dist_{\ell_p}(\gamma(t), \bdry{K})
      }dt\\
    &\geq \frac{2}{3\dist_{\ell_p}(x,
      \bdry{K})}\int\limits_{0}^{t_0}\norm[\infty]{\gamma'(t)} dt,
      \text{ applying \cref{eq:14} to $\gamma(t) \in L$,}\\
    &\geq \frac{2}{3\dist_{\ell_p}(x, \bdry{K})}
      \cdot \norm[\infty]{x - \gamma(t_0)}
      \geq \frac{\epsilon}{2},
  \end{align*}
  where the last inequality uses that $\gamma(t_0) \in \bdry{L}$ so that
  $\norm[\infty]{x - \gamma(t_0)} = 0.9\epsilon\dist_{\ell_p}(x, \bdry{K})$.
  Since the curve $\gamma$ is an arbitrary piecewise continuously differentiable curve
  connecting $x$ to $y$, this completes the proof.
\end{proof}

\subsection{Properties of  Whitney cubes}
\label{sec:prop-whitn-cubes}
Here, we provide a proof of \cref{theo 1.1}, which we restate here for convenience.

\whitneythm*

\begin{proof}[Proof of \cref{theo 1.1}] Throughout the proof, we use
  $\lambda = 1/2$.  At several places in the proof, we will also use the
  following simple fact without comment: If $Q$ is an axis-aligned cube and
  $x \in Q$, then for any $1 \leq p \leq \infty$,
  $\dist_{\ell_p}(\cntr(Q), x) \leq (1/2)\diam_{{\ell_p}}(Q)$.

  \ben

\item Let $x \in K^\circ$. Choose a positive integer $k$ such that
  $n^{1/p}/2^k < \lambda\dist_{\ell_p}(x,\partial K)/4$. Let $Q_k' \in \cube_k$
  such that $x \in Q_k'$, and let $(Q_i')_{i=0}^{k-1}$ be cubes such that
  $Q_i' \in \cube_{i}$ is the cube whose child is $Q_{i+1}' \in \cube_{i+1}$. Note that
  each $Q_{i}'$ also contains $x$. Suppose, if possible, that
  $x \not\in \bigcup_{Q \in \FF} Q$. Now consider $Q_0' \in \cube_0$ as above,
  which, by its definition, contains $x$. Observe that
  \[ \dist_{\ell_p}(\cntr(Q_0'),K) \le \dist_{\ell_p}(\cntr(Q_0'),x) \le (1/2)
    \diam_{\ell_p}(Q_0') = n^{1/p}/2, \] so $Q_0' \in \FF_0$. On the other hand,
  since $k$ was chosen so that
  $(1/2)\cdot\diam_{\ell_p}(Q_{k-1}') = \diam_{\ell_p}(Q_{k}') = n^{1/p}/2^k < \lambda\dist_{\ell_p}(x,\partial
  K)/4$, we have
\begin{align*}
  \lambda \dist_{\ell_p}(\cntr(Q_{k-1}'),\partial K) &\ge \lambda \dist_{\ell_p}(x,\partial K) - \lambda\dist_{\ell_p}(x,\cntr(Q_{k-1}')) \\
    &\ge \lambda \dist_{\ell_p}(x,\partial K) - (\lambda/2)\diam_{\ell_p}(Q_{k-1}') \\
    &> (2 - \lambda/2)\diam_{\ell_p}(Q_{k-1}') > \diam_{\ell_p}(Q_{k-1}'),
\end{align*}
so that even if $Q_{k-1}'$ were present in $\FF_{k-1}$, it would not be
subdivided into its children: this implies that $Q_k' \not\in \FF_k$. Since
$Q_{0}' \in \FF_0$, it follows that there exists a $j$ satisfying $0 \leq j < k$
such that $Q_j' \in \FF_j$ but $Q_{j+1}' \not\in \FF_{j+1}$. We shall show that
$Q_j' \in \FF$.

To do this, it suffices to show that $\cntr(Q_j') \in K^\circ$. By the definition of
$j$, $Q_{j}'$ is not sub-divided, so that we have
$\lambda \dist_{\ell_p}(\cntr(Q_j'),\partial K) \ge
\diam_{\ell_p}(Q_j')$. Suppose, if possible, that $\cntr(Q_j') \in \Kb$.  Then,
$\dist_{\ell_p}(\cntr(Q_j'),\partial K) = \dist_{\ell_p}(\cntr(Q_j'),K)$, so
that
\[ (1/2)\diam_{\ell_p}(Q_j') \ge \dist_{\ell_p}(\cntr(Q_j'),x) \ge
  \dist_{\ell_p}(\cntr(Q_j'),K) \ge (1/\lambda) \diam_{\ell_p}(Q_j'), \] which
is a contradiction since $\lambda = 1/2$. We have thus shown that
$K^\circ \subseteq \bigcup_{Q \in \FF} Q$.

The reverse containment follows because for any cube $Q \in \FF$,
\begin{align*}
  \dist_{\ell_p}(\cntr(Q),\Kb)
  &= \dist_{\ell_p}(\cntr(Q),\partial K) & \text{(because $\cntr(Q) \in K^\circ$)} \\
  &\ge (1/\lambda) \diam_{\ell_p}(Q) & \text{($Q$ is not further subdivided)} \\
  &> \diam_{\ell_p}(Q), &\text{(since $\la = 1/2$)}
\end{align*}
and as a result, $Q \cap (\Kb) = \emptyset$.  This proves the first part of
\cref{item:Whitney-cubes}.  The second part then follows since
$K \subseteq R_{\infty}\cdot B_{\infty}$ with $R_\infty < 1$ implies
that $K$ cannot contain any cube in $\cube_0$.

\item If possible, let $Q \in \FF_{i} \subseteq \cube_i$ and
  $Q' \in \FF_{j} \subseteq \cube_{j}$ be distinct Whitney cubes with $i \le j$,
  such that $Q^\circ \cap (Q')^\circ \ne \emptyset$.  Note that the interiors of
  any two distinct cubes in $\cube_i$ do not intersect, so it must be the case
  that $i < j$. Then, since cubes in $\cube_{j}$ are obtained by a sequence of
  subdivisions of cubes in $\cube_{i}$, it must be the case that $Q'$ is a
  descendant of $Q$, i.e., obtained by a sequence of subdivisions of $Q$.
  However, since $Q \in \FF_i$, the construction of $\FF_{i}$ implies that no
  child of $Q$ can be present in $\FF_{{i+1}}$. This implies that no cube
  obtained by subdivisions of $Q$ is present in any $\FF_j$ for $j > i$, and
  thus leads to a contradiction since $Q' \in \FF_j$ was required to be a
  descendant of $Q$.

\item The first inequality is direct since if it did not hold, we would have
  further subdivided $Q$, so that $Q$ would not be in $\FF$. From
  \cref{item:Whitney-cubes}, we know that $Q \not\in \cube_0 \supseteq \FF_0$.  So, let $k \geq 0$
  be such that $Q \in \FF_{k+1}$, and let $Q' \in \FF_{k}$ be its parent whose
  subdivision led to the inclusion of $Q \in \FF_{k+1}$. We then have
  \begin{align*}
    2\diam_{\ell_p}(Q)
    &= \diam_{\ell_p}(Q') \\
    &> \lambda\dist_{\ell_p}(\cntr(Q'),\Kb) \qquad \text{(since $Q'$ was subdivided)} \\
    &\ge \lambda\dist_{\ell_p}(\cntr(Q),\Kb) - \lambda\dist_{\ell_p}(\cntr(Q'),\cntr(Q)) \\
    &\ge \lambda\dist_{\ell_p}(\cntr(Q),\Kb) - (\lambda/2) \diam_{\ell_p}(Q),
  \end{align*}
  where the last inequality uses the fact that $\cntr(Q')$ is a vertex of $Q$.
  This implies
  \begin{equation}
    \dist_{\ell_p}(\cntr(Q),\Kb) <  \left( \frac{2}{\lambda} + \frac{1}{2} \right) \diam_{\ell_p}(Q)
    = \frac{9}{2} \diam_{\ell_P}(Q).
  \end{equation}

\item Set $x = \cntr(Q)$. Also let $z_x,z_y \in \Kb$ such that $\dist_{\ell_p}(x,z_x) = \dist_{\ell_p}(x,\Kb)$ and $\dist_{\ell_p}(y,z_y) = \dist_{\ell_p}(y,\Kb)$. Then,
  \begin{equation*}
    \dist_{\ell_p}(y,z_y)
    \ge \dist_{\ell_p}(x,z_y) - \dist_{\ell_p}(x,y)
    \ge \dist_{\ell_p}(x,z_x) - \dist_{\ell_p}(x,y),
  \end{equation*}
  and similarly,
  \begin{equation*}
    \dist_{\ell_p}(y,z_y)
    \le \dist_{\ell_p}(y,z_x)
    \le \dist_{\ell_p}(x,z_x) + \dist_{\ell_p}(x,y).
  \end{equation*}
  Using the upper and lower bounds on $\dist_{\ell_p}(x,z_x)$ derived in
  \cref{item: diameter of cube center} along with the above inequalities and the
  bound $\dist_{\ell_p}(x,y) \leq (1/2)\diam_{\ell_p}(Q)$ yields the claimed
  bounds.

\item Let $Q_1,Q_2 \in \FF$ be abutting cubes with $\diam(Q_1) > \diam(Q_2)$,
  and let $y \in Q_1 \cap Q_2$. \Cref{item:diamater-of-cube} applied to both
  $Q_1$ and $Q_2$ then gives
  \begin{equation*}
    \frac{3}{2} \diam_{\ell_p}(Q_1)
    \le \dist_{\ell_p}(y,\Kb) 
    \le
    5 \diam_{\ell_p}(Q_2).
  \end{equation*}
  This implies
  \[ 1 > \frac{\diam_{\ell_p}(Q_2)}{\diam_{\ell_p}(Q_1)}
    =\frac{\sidelen{Q_2}}{\sidelen{Q_1}}
    \ge \frac{3}{10}. \] Since the ratio of
  sidelengths of any two Whitney cubes is an integral power of two, this forces
  the ratio of the sidelengths of $Q_2$ and $Q_1$ to be $1/2$.  We conclude that
  if two cubes $Q_{1}, Q_{2} \in \FF$ are abutting, then the ratio of their
  sidelengths is an element of the set $\inb{1/2, 1, 2}$. \qedhere \een
\end{proof}

\section{Some results used in proofs}
\subsection{The isoperimetric inequality of Kannan, Lovász and Montenegro}
\label{sec:isop-ineq-kann}
In the proof of \cref{theo isoperimetry}, an isoperimetric inequality due to
Kannan, Lovász and Montenegro~\cite{KLM06} was used.  In their paper, Kannan,
Lovász and Montenegro state their inequality only when the distance between the
sets and the diameter of the convex body are both measured using the
$\ell_2$-norm.  However, since their proof uses the localization lemma
framework of Lovász and Simonovits~\cite{LS93} to reduce the problem to the
setting of a line segment, it applies without any changes even when the
corresponding quantities are measured in any other $\ell_p$ norm, where
$1 \leq p \leq \infty$ (see, e.g., the statement of Corollary 2.7 of
\cite{LS93}). For completeness, we reproduce the statement of the isoperimetric
inequality of Kannan, Lovász, and Montenegro in this more general form, and also
provide a short sketch of how their proof applies also in this setting.

\begin{theorem}[\textbf{Kannan, Lovász, and Montenegro~\cite[Theorem
    4.3]{KLM06}}] Fix $p$ satisfying $1 \leq p \leq \infty$.  Let $K$ be a
  convex body, and let $S_1, S_2$ be disjoint measurable subsets of $K$.  Define
  $S_3\defeq K \setminus (S_1 \cup S_2)$.  Let $\epsilon, D > 0$ be such that
  for any two points $x, y \in K$, $\dist_{\ell_p}(x, y) \leq D$, and such that
  $\dist_{\ell_p}(S_1, S_2) \geq \epsilon$.  Then
  \begin{equation}
    \label{eq:85}
    \vol{S_3} \geq \frac{\epsilon}{D}\cdot \frac{\vol{S_1} \vol{S_2}}{\vol{K}}
    \cdot \log\inp{
      1 + \frac{
        \vol{K}^2
      }{
        \vol{S_1}\vol{S_2}
      }
    }.
  \end{equation}
\end{theorem}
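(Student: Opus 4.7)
The plan is to follow the original proof of Kannan, Lov\'asz and Montenegro~\cite{KLM06} essentially verbatim, exploiting the key observation that the localization framework of Lov\'asz and Simonovits~\cite{LS93} on which it is built is insensitive to the choice of ambient norm: the localization lemma reduces a multidimensional integral inequality on $K$ to a one-dimensional inequality on a line segment equipped with a log-concave weight, and on a one-dimensional segment all $\ell_p$-distances coincide with ordinary Euclidean length along the segment.

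First, I would recast the target inequality in the form required by the localization lemma (\cite[Theorem 2.5]{LS93}, or more directly in the form of \cite[Corollary 2.7]{LS93}). Suppose, towards a contradiction, that \cref{eq:85} fails for $K$ together with $S_1, S_2, S_3$. The localization lemma then produces a ``needle'' $N = [u, v] \subseteq K$ carrying a log-concave weight $w(t) = (\alpha t + \beta)_+^{n-1}$ such that the corresponding one-dimensional inequality also fails on $(N, w\, dt)$ in place of $(K, dx)$, with $\vol{\cdot}$ replaced by the $w$-weighted measure on $N$.

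Second, I would verify the resulting one-dimensional inequality on $(N, w\, dt)$. Because $N$ is a line segment contained in $K$, its $\ell_p$-length is at most $D$, and the restrictions $S_1 \cap N$ and $S_2 \cap N$ remain at distance at least $\epsilon$ along $N$; on a line both quantities reduce to ordinary length. Hence the one-dimensional statement to be proved is identical to the one that appears in the $\ell_2$-case of \cite{KLM06}, and its proof, which is an elementary calculation exploiting log-concavity of $w$ and a convexity argument on the cumulative distribution induced by $w$, applies without modification. The ambient norm enters only through $D$ and $\epsilon$, which are the same scalars in both settings.

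The main subtlety to watch for, exactly as in the original $\ell_2$ proof, is that the logarithmic factor $\log(1 + \vol{K}^2/(\vol{S_1}\vol{S_2}))$ does not fit into the polynomial-in-integrals form handled by a single application of the localization lemma. This is circumvented in \cite{KLM06} by first establishing a weaker additive form of the inequality via localization and then bootstrapping to recover the logarithmic factor (equivalently, by representing the logarithm as an integral and applying the lemma to each slice). Since that bootstrapping step uses only one-dimensional facts about the weight $w$ and the separation $\epsilon$, it carries over to arbitrary $\ell_p$ without change, completing the proof.
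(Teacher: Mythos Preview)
Your proposal is correct and follows the same approach as the paper's proof sketch: both observe that the original argument of \cite{KLM06} for $p=2$ goes through verbatim because the localization lemma reduces everything to a single line segment, on which only \emph{ratios} of $\ell_p$-lengths enter, and such ratios are norm-independent. The paper phrases this last point as ``they only need to consider ratios of lengths which lie along the same line segment,'' which is exactly your observation that on a fixed line the choice of $\ell_p$ is immaterial.
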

\begin{proof}[Proof sketch] The theorem above is stated and proved by Kannan,
  Lovász and Montenegro for the case $p = 2$ in Section 6.4 of \cite{KLM06}.  To
  prove the result for other $p$, simply repeat the same proof (with $\epsilon$
  and $D$ defined with respect to $\ell_p$ instead of $\ell_{2}$).  The only
  point one has to note is that when \cite{KLM06} apply their Lemma 6.1 in the
  last paragraph of their proof, they only need to consider ratios of lengths
  which lie along the same line segment, and such a ratio does not depend upon
  which $\ell_p$-norm is used to measure the corresponding lengths.
\end{proof}

\subsection{Vertex expansion of the hypercube graph}

In this section, we elaborate on the bound on the vertex expansion used in Case
2(b) of the proof of \Cref{theo: axis-disjoint isoperimetry}, and restate the
relevant results in \cite{harper2}.  We begin by restating some of the notation
from \cite{harper2}.

Let $\mathscr{Y}$ be a finite set, $W$ a probability distribution on
$\mathscr{Y}$, and define the product probability distribution $W^n$ on
$\mathscr{Y}^n$ as
  \[ W^n(y) = \prod_{i=1}^n W(y_i) \]
  for $y \in \mathscr{Y}^n$. For $y,y' \in \mathscr{Y}^n$, introduce the Hamming distance
  \[ d(y,y') \defeq |\{ 1 \le i \le n \st y_i \ne y_i' \}|. \]
  For a set $\mathscr{B} \subseteq \mathscr{Y}^n$, define the Hamming neighbourhood $\Gamma\mathscr{B}$ of $\mathscr{B}$ as
  \[ \Gamma\mathscr{B} \defeq \{ y \in \mathscr{Y}^n \st d(y,y') \le 1 \text{ for some } y' \in \mathscr{B} \} \]
  and the inner boundary $\partial\mathscr{B}$ of $\mathscr{B}$ as
  \[ \partial\mathscr{B} \defeq \Gamma\mathscr{B}^{c} \cap \mathscr{B}. \]
  Also set
  \begin{align*}
    \varphi(t) &= (2\pi)^{-1/2} e^{-t^2/2}, \\
    \Phi(t) &= \int_{-\infty}^t\varphi(x)dx, \text{ and} \\
    f(s) &= \varphi(\Phi^{-1}(s)).
  \end{align*}
  Setting $\mathscr{X} = \{0\}$ in Theorem 5 of \cite{harper2}, we obtain the following.

  \begin{theorem}[\textbf{\cite{harper2}}]
    There is a constant $a$ depending only on $W$ such that for any
    $\mathscr{B} \subseteq \mathscr{Y}^n$,
    \[ W^n(\partial\mathscr{B}) \geq an^{-1/2} f(W^n(\mathscr{B})). \]
  \end{theorem}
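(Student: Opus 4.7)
The plan is to follow Harper's classical compression-plus-CLT strategy for isoperimetric inequalities on product spaces. At a high level, the theorem is saying that after appropriate rescaling, the vertex-isoperimetric profile of $(\mathscr{Y}^n, W^n, d)$ is governed by the Gaussian isoperimetric profile $f = \varphi \circ \Phi^{-1}$, and the $n^{-1/2}$ factor reflects the standard CLT scaling for a sum of $n$ independent bounded coordinates.

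The first step I would take is a \emph{compression} step: fix a total ordering on $\mathscr{Y}$ (for instance by decreasing $W$-mass, with ties broken arbitrarily), and for each coordinate $i \in [n]$ define a shifting operation $T_i$ that, for each $\mathscr{Y}^{n-1}$-fiber obtained by freezing the coordinates $j \neq i$, replaces the restriction of $\mathscr{B}$ to that fiber by the $W$-initial segment of the same measure. One verifies, by a local computation on pairs of fibers, that $T_i$ preserves $W^n(\mathscr{B})$ and cannot increase $W^n(\partial \mathscr{B})$. Iterating all of $T_1,\dots,T_n$ until stability, one reduces to \emph{canonical} sets which are down-sets in the coordinate-wise order and can be described, up to a lower-dimensional correction, as sublevel sets $\{y : h(y) \leq t\}$ where $h(y) = \sum_{i} h_i(y_i)$ and each $h_i$ is an increasing function determined by the ordering.

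The second step reduces the isoperimetric estimate to a local central limit theorem. For canonical sets as above, the inner boundary $\partial \mathscr{B}$ consists of those $y \in \mathscr{B}$ such that for some coordinate $i$ one can increase $h_i(y_i)$ by one step in the ordering and exit $\mathscr{B}$; to leading order this means $h(y)$ lies in a narrow band of width $O(1)$ below $t$. Writing $W^n(\mathscr{B}) = \Pr[h(Y) \leq t]$ with $Y \sim W^n$, the local CLT for the sum $h(Y) = \sum h_i(Y_i)$ (whose per-coordinate variance is bounded below by a positive constant $\sigma^2 = \sigma^2(W) > 0$) then yields
\[
W^n(\partial \mathscr{B}) \;\geq\; \frac{c}{\sqrt{n}}\,\varphi\!\left(\frac{t - n\mu}{\sigma\sqrt{n}}\right),
\]
and $W^n(\mathscr{B}) \approx \Phi((t - n\mu)/(\sigma\sqrt{n}))$ to leading order. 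Combining these gives $W^n(\partial\mathscr{B}) \geq a\,n^{-1/2}\,\varphi(\Phi^{-1}(W^n(\mathscr{B}))) = a\,n^{-1/2} f(W^n(\mathscr{B}))$ for a constant $a$ depending only on $W$.

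The main obstacle is the compression step, in two respects. First, when $|\mathscr{Y}| > 2$ the $T_i$ are not the usual Boolean shifts, and one must carefully verify the pair-of-fibers inequality showing $W^n(\partial T_i\mathscr{B}) \leq W^n(\partial \mathscr{B})$; this requires a case analysis that is cleanest when $W$ is ordered by mass. Second, the reduction to exact sublevel sets of a separable $h$ is only approximate, and one has to control the lower-dimensional remainder after iterating $T_1,\dots,T_n$, e.g.\ by an induction on $n$ with the CLT estimate as the base estimate at the top level. Once compression is in place, the CLT application is standard (Berry--Esseen suffices for the rate), and the constant $a$ tracks the Berry--Esseen prefactor times $\sigma(W)^{-1}$.
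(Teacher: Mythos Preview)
The paper does not prove this theorem at all: it is quoted as a black box from Harper's work, obtained by specializing Theorem~5 of \cite{harper2} with $\mathscr{X}=\{0\}$. There is therefore nothing in the paper to compare your argument against; the paper only uses the statement, via \cref{cor:hypercube-vtx-expansion}, to get the $\Omega(n^{-1/2})$ vertex expansion of the Boolean hypercube that feeds into Case~2(b) of the proof of \cref{theo: axis-disjoint isoperimetry}.

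As for your sketch itself: the overall strategy (coordinate compressions to reduce to monotone/threshold sets, followed by a CLT-type estimate identifying the boundary with a narrow band of a sum of i.i.d.\ coordinate functionals) is the right shape for results of this kind, but several steps are stated at a level of optimism that would not survive a careful write-up. The compression step for a general finite alphabet with a non-uniform $W$ is genuinely delicate: one must check that replacing each fiber by a $W$-initial segment of the same \emph{mass} (not cardinality) cannot increase the $W^n$-measure of the inner boundary, and the usual pair-of-fibers argument does not go through verbatim once the letters have unequal weights. Your claim that iterated compressions terminate in exact sublevel sets of a separable function $h$ is also too strong in general; what one typically gets is a down-set, and passing from a down-set to a half-space incurs an error that must be controlled inductively rather than waved away as ``lower-dimensional.'' Finally, Berry--Esseen gives control of $\Pr[h(Y)\le t]$ but not directly of the probability of landing in a width-$O(1)$ band; for that you need either a local limit theorem (which requires an aperiodicity/non-lattice hypothesis you have not imposed) or a separate smoothing argument. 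None of these are fatal to the approach, but each is a real piece of work, and the constant $a=a(W)$ you end up with will depend on how you handle them.
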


  \begin{corollary}
    \label{cor:hypercube-vtx-expansion}
    Let $Q_n$ be the $n$-dimensional hypercube graph $(V,E)$, where
    $V = \{0,1\}^n$ and vertices $u,v$ are adjacent if and only if their Hamming
    distance is $1$. Set $\mu$ to be the uniform distribution on
    $\{0,1\}^n$. For any $S \subseteq V$ with $\mu(S) \le (1/2)$, there exists a
    universal constant $c$ such that
    \[ \frac{\mu(\Gamma(S^{c}) \cap S)}{\mu(S)} \ge cn^{-1/2}, \]
    where $\Gamma(S)$ denotes the neighbourhood of vertices in $S$. 
  \end{corollary}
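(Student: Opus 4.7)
The plan is to derive the corollary as a direct consequence of the cited theorem of Harper, combined with a short analytic estimate on the function $f(s) = \varphi(\Phi^{-1}(s))$. First I would specialize Harper's setting to $\mathscr{Y} = \{0,1\}$ with $W$ the uniform distribution on $\{0,1\}$; then the product measure $W^n$ agrees with $\mu$, the Hamming distance $d$ agrees with graph distance in $Q_n$, and the quantity $\Gamma \mathscr{B}$ in Harper's notation (the closed Hamming neighbourhood) coincides with the closed graph neighbourhood of $\mathscr{B}$ in $Q_n$. Consequently the set $\partial S = \Gamma(S^{c}) \cap S$ defined in Harper's paper is exactly $\Gamma(S^{c}) \cap S$ as used in the statement of the corollary. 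Invoking Harper's theorem then yields
\[
\mu\bigl(\Gamma(S^{c}) \cap S\bigr) \;\geq\; a\,n^{-1/2}\,f\bigl(\mu(S)\bigr),
\]
for an absolute constant $a>0$ (since the only ``distribution parameter'' $W$ is fixed once and for all to be uniform on $\{0,1\}$).

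To conclude, it suffices to show that $f(s)/s$ is bounded below by a positive absolute constant on $(0,1/2]$. Since $f$ is continuous on $(0,1/2]$ and $f(s)>0$ there, the only thing to verify is that $f(s)/s$ does not degenerate near $s=0$. For this I would use the classical Mills-ratio asymptotic $\Phi(-t) \sim \varphi(t)/t$ as $t \to \infty$: setting $t = -\Phi^{-1}(s)$, so that $t \to \infty$ as $s \to 0^{+}$, one has $s = \Phi(-t) \sim \varphi(t)/t$ and $f(s) = \varphi(t)$, giving $f(s)/s \sim t \to \infty$. Thus there exists an absolute constant $c_0 > 0$ with $f(s) \geq c_0\, s$ for all $s \in (0,1/2]$. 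Combining this with the displayed inequality above gives
\[
\frac{\mu(\Gamma(S^{c}) \cap S)}{\mu(S)} \;\geq\; a c_0\, n^{-1/2},
\]
which is the desired bound with $c \defeq a c_0$.

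This is essentially a bookkeeping argument — the nontrivial content lies entirely in Harper's theorem — so I do not expect any real obstacle. The only thing worth taking care about is the notational translation between Harper's setup and the graph-theoretic statement in the corollary (in particular, confirming that ``$\Gamma \mathscr{B}^{c}$'' in the paper means $\Gamma(\mathscr{B}^{c})$ rather than $(\Gamma\mathscr{B})^{c}$, which is forced since otherwise $\partial\mathscr{B}$ would be empty), and the elementary estimate $f(s) \gtrsim s$ on $(0,1/2]$, which is standard.
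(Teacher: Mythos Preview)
Your proposal is correct and follows exactly the same route as the paper: specialize Harper's theorem to $\mathscr{Y}=\{0,1\}$ with $W$ uniform to obtain $\mu(\Gamma(S^c)\cap S)\ge a n^{-1/2}f(\mu(S))$, and then observe that $f(s)\ge c' s$ on $(0,1/2]$. You actually supply more detail than the paper does (the Mills-ratio justification for the bound $f(s)\gtrsim s$ near $s=0$), whereas the paper simply asserts that estimate.
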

  \begin{proof}
    Setting $\mathscr{Y} = \{0,1\}$, $\mathscr{B} = S$, and $W$ as the uniform distribution on $\{0,1\}$ in the previous theorem, we get that
    \[ \mu(\Gamma(S^c) \cap S) \ge c n^{-1/2} f(\mu(S)). \]
    To conclude, we note that for $\mu(S) \le (1/2)$, $f(\mu(S)) \ge c' \mu(S)$ for a universal constant $c$. 
  \end{proof}

\addcontentsline{toc}{section}{References}
\bibliographystyle{alphaabbrv}
\bibliography{convex}

\end{document}